\newtheorem{definition}{\textbf{Definition}}
\newtheorem{remark}{\textbf{Remark}}
\newtheorem{theorem}{\textbf{Theorem}}
\newtheorem{lemma}{\textbf{Lemma}}
\newtheorem{assumption}{\textbf{Assumption}}
\newtheorem{corollary}{\textbf{Corollary}}
\newtheorem{proposition}{\textbf{Proposition}}
\begin{document}

	\title{Distributed Kalman Filters with State Equality Constraints: Time-based and Event-triggered Communications}
	
	\author{Xingkang~He,
		~Chen~Hu,
		~Yiguang~Hong,
		\IEEEmembership{Fellow,~IEEE},~ Ling Shi, Haitao Fang
		
		\thanks{The work is supported by National Key Research and Development Program of China (2016YFB0901902), National Natural Science Foundation of China (61573344);
			The work by Ling Shi is supported by an RGC General Research Fund 16222716, and by an HKUST-KTH Partnership FP804.
	}
			\thanks{Xingkang He is with the  Division of Decision and Control Systems,  KTH Royal Institute of
				Technology, SE-100 44 Stockholm, Sweden. (xingkang@kth.se)}
		\thanks{Yiguang Hong and Haitao Fang  are with the Key Laboratory of Systems and Control,  Academy of Mathematics and Systems Science,
			Chinese Academy of Sciences, Beijing, China. (yghong,htfang@iss.ac.cn)}
		\thanks{Chen Hu is with Rocket Force University of Engineering, Shaanxi, China. (chenh628@hotmail.com)}
		\thanks{Ling Shi is with Department of Electronic and Computer Engineering, Hong Kong University of Science and Technology. (eesling@ust.hk)}
	}
	
	
	\maketitle

	\begin{abstract}
In this paper, we investigate a distributed estimation problem for
multi-agent systems with state equality constraints (SEC). First, under a time-based  consensus communication protocol, 
applying a modified projection operator and the covariance intersection fusion  method, 
we propose a distributed Kalman filter with guaranteed consistency and satisfied SEC. 
Furthermore, we  establish the relationship between  consensus step, SEC and estimation error covariance in dynamic  and steady processes, respectively.
Employing a space decomposition method,  we show the  error covariance in the constraint set can be  arbitrarily small by setting a sufficiently large  consensus step. 
 Besides, we propose an extended collective observability (ECO) condition based on SEC, which is milder than existing observability conditions. Under the  ECO condition, through utilizing a technique of matrix approximation, we   prove the boundedness of error covariance and the exponentially  asymptotic unbiasedness of state estimate, respectively.  
Moreover,  under  the ECO condition for linear time-invariant systems with SEC, we  provide a novel event-triggered communication protocol by employing the consistency, and give an offline design principle of triggering thresholds with guaranteed boundedness of error covariance.
More importantly, we quantify and analyze the communication rate for the proposed event-triggered distributed Kalman filter,  and provide optimization based methods to obtain the minimal (maximal) successive non-triggering (triggering) times. 
Two simulations are provided to demonstrate the developed theoretical results and  the effectiveness of the filters.
	\end{abstract}

	\begin{IEEEkeywords}
		multi-agent systems, state equality constraint, distributed Kalman filter,  consistency, event-triggered,  collective observability, communication rate
	\end{IEEEkeywords}

	\section{Introduction}
	State estimation problems are very important, related to parameter identification, signal reconstruction, target monitoring and control design, which have been studied for several decades.
	In recent years, distributed state estimation of multi-agent systems has received more and more attention because of its broad range of applications in engineering systems such as communication networks, sensor networks and smart grids.
	Among the distributed estimation methods \cite{cattivelli2010diffusion, Zhang2016Distributed,Hu2012Diffusion,Zhu2016Distributed,olfati2009kalman,olfati2007distributed, yang2014stochastic, Battistelli2014Kullback,Battistelli2016stability,Battistelli2015Consensus,Dm2015Distributed,yang2017stochastic,carli2008distributed,ugrinovskii2011distributed}, Kalman-filter-based estimation plays a key role due to its ability of real-time estimation and non-stationary process tracking.
	
	Existing distributed Kalman filters can be roughly classified into two categories: the Kalman-consensus filter (KCF) and the Kalman-diffusion  filter (KDF).
	KCF is a design to include consensus terms to the Kalman filter structure (see \cite{olfati2009kalman,olfati2007distributed, yang2014stochastic, stankovic2009consensus, Battistelli2014Kullback,Battistelli2016stability,Battistelli2015Consensus,Dm2015Distributed,yang2017stochastic}).
	{For example, \cite{olfati2007distributed} constructed a KCF, with estimate of each agent obtained by consensus on measurement information to approximate centralized estimate. }
	Furthermore, \cite{stankovic2009consensus,yang2014stochastic} dealt with communication uncertainties like intermittent observation and communication faults.
	Nevertheless, \cite{olfati2009kalman,yang2014stochastic} assumed the local observability (i.e., the dynamical system is observable for each agent), which may be unsuitable in large-scale networks. Even though \cite{olfati2007distributed} took a collective observability condition, the agents have to communicate infinity times between any two sampling instants.   Some works extended the local observability assumption to a collective observability assumption for the filter design.   For example,
	\cite{Dm2015Distributed} constructed a \textit{consensus+innovations} distributed estimator based on some knowledge of a global observation model, under a collective observability condition.
	\cite{Battistelli2014Kullback} considered the KCF with consensus on the inverse covariance matrix and the information vector, while \cite{Battistelli2016stability} extended the results to a class of nonlinear systems.
	Under collective observability assumptions, \cite{Battistelli2014Kullback,Battistelli2016stability} proved the upper boundedness of the error covariance matrix of the KCFs, with requirement of  the non-singularity of the system matrix.
	Additionally, \cite{zhou2012distributed,zhou2013distributed} studied the design for switching communication topologies.
	{In \cite{kar2011gossip}, the authors studied gossip interactive Kalman filter for networked systems based on a random communication scheme, and provided some theoretical results on the asymptotic properties of error process.  }
	On the other hand, the KDF is to fuse the neighbor's estimates updated by standard Kalman filter (referring to \cite{cattivelli2010diffusion, Hu2012Diffusion,Wenling2015Diffusion,xiao2004fast}).
	Compared with many results on KCF \cite{olfati2009kalman, yang2014stochastic, yang2017stochastic,wang2017convergence}, the designs of KDF usually do not require the computation of correlation matrices between agents and  do not need that all agents should obtain the same estimate in steady-state.
	For example, \cite{cattivelli2010diffusion} discussed the design of the fusion weights and the convergence of the KDF algorithm, while \cite{Hu2012Diffusion} proposed a KDF based on a covariance intersection (CI) scheme.
	However, the observability assumptions of \cite{cattivelli2010diffusion,Hu2012Diffusion,Wenling2015Diffusion} are not global conditions.  
	{An effective distributed filter based on optimized weights for time-varying topology was proposed in \cite{wang2017convergence}.}	
	Although many effective distributed Kalman filters have been developed,  the stability research of distributed filters relying mild collective observability conditions still need more attention.
	
	In practice, we may obtain some information or knowledge on the state constraints in advance from physical laws, geometric relationship, and environment constraints of the considered system, which can improve the estimation performance in the filter design. {Many practical examples can be formulated with state equality constraints (SEC), such as quternion-based attitude estimation \cite{crassidis2003unscented}, magnetohydrodynamic data assimilation \cite{Chandrasekar2004State}, tracking and navigation \cite{wen1992model,shen2006reliable} and aeronautics \cite{simon2002kalman,rotea2008state}.
		A typical example of linear equality constraints arises in target tracking problem.
		A car may be traveling off-road, or on an unknown road, where the problem is unconstrained. However, most of the time it may be traveling along a given road, where the estimation problem is constrained. Another example is tracking a train, where the railway can be treated as constraints.} There are various methods incorporating the information of state constraints into the Kalman filter structure, such as the pseudo-observation \cite{julier2007kalman}, projection \cite{simon2002kalman,Te2009state, ko2007state}, and moving horizon \cite{rao2001constrained}. A survey on the conventional design of the Kalman filter with state constraints  provided in \cite{simon2010kalman} showed the usefulness of the constraints as  additional information to improve the estimation performance or accuracy.  It is known that the moving horizon method performs the best, but at the cost of high computational complexity  \cite{rao2001constrained,simon2010kalman}. {In some cases, one may formulate state constraints as equalities and project the unconstrained estimate onto the constrained surface under SEC \cite{simon2010kalman,Te2009state}.
		Linear state inequality constraints can be transformed to SEC in some conditions by using active set methods \cite{simon2010kalman}.	
		}
	Although the Kalman filter with state constraints has drawn much research attention, the corresponding distributed version,  i.e., distributed Kalman filter with state constraints, has not yet been adequately investigated, to the best of our knowledge.
	
	Communication may spend much cost or energy in distributed design, and therefore, a well-known scheme, called event-triggered scheme, is developed to reduce the communication costs, where the communication is carried out only when some predefined event conditions are satisfied.
	{A variance-based triggering  scheme of state estimation was studied in \cite{trimpe2014event}, which analyzed the convergence of the switching Riccati equation of estimation variance   for a scalar system.}
	With the Gaussian properties of the \textit{a priori} conditional distribution, a deterministic event-triggered schedule was proposed in \cite{wu2013event}, and moreover, to overcome the limitation of the Gaussian assumption, a stochastic event-triggered scheme was developed and the corresponding exact minimum mean square error estimator was obtained in \cite{HanTacStochasEV}.  Additionally, a set-valued measurement Kalman filter  for remote state estimation problem was studied in   \cite{shi2015set}, which built some relationship between the estimation performance and triggering thresholds.
	Also, Send-on-Delta (SoD), a typical event-based regulation proposed in \cite{miskowicz2006send}, has been widely utilized in the triggering condition design of estimation algorithms \cite{liu2015event,andren2016event,nguyen2007improving}.
	A recursive distributed filter with SoD triggering scheme on innovation was studied in \cite{liu2015event}, where the relationship between triggering thresholds and an upper bound of error covariance was analyzed. Additionally, a stochastic SoD triggering rule was proposed in \cite{andren2016event}, where an upper bound and a lower bound of the error covariance were derived, but the performance was not studied.
	Although quite a few effective event-triggered filters were developed, the  relationship between the estimation performance and the triggering thresholds still need further investigations in a distributed framework under collective observable condition.
To the best of our knowledge, the the quantification and analysis of communication rate for distributed event-triggered filtering have not been adequately investigated, and very few studies on analyzing the communication rate, though many distributed event-triggered filtering have been developed.

	The objective of this paper is to provide the theoretical analysis on  distributed estimation  with SEC for filters based on time-based and event-triggered communications, respectively.
	The main contributions of this paper are fourfold:
	\begin{enumerate}
		\item 
		We propose a time-based distributed Kalman filter based on a consensus communication protocol for a class of stochastic systems with SEC. 
		Moreover, we prove  both  estimation consistency and local SEC  hold for each communication by employing the CI method and a modified projection operator. 
		Besides, 
		the filter does not require the spatial independence between  measurement noises of agents, which is required by many existing results  \cite{Dm2015Distributed,Cat2010Diffusion,wang2017convergence,olfati2009kalman}.

		\item 		We establish the relationship between  consensus step, SEC and  error covariance in dynamic process and steady process, respectively.
			Besides, 	employing a space decomposition method,  we show the error covariance in the constraint set can be  arbitrarily small by setting a sufficiently large  consensus step.  
			To the best knowledge of the authors, these are the first results to analyze the influence of the multi-step communications to the estimation performance of distributed filters \cite{Battistelli2016stability,Battistelli2014Kullback,Dm2015Distributed,Cat2010Diffusion}.

			\item  We show that the SEC can relax the mildest collective observability assumptions given in \cite{Battistelli2014Kullback,Battistelli2015Consensus,Battistelli2016stability,Dm2015Distributed,wang2017convergence} or the local observability conditions given in \cite{olfati2009kalman,stankovic2009consensus,yang2014stochastic}, and lead to the extended collective observability (ECO) condition. Under the  ECO condition, through utilizing a technique of matrix approximation,		
			we  prove  the boundedness of error covariance and the exponentially  asymptotic unbiasedness of state estimate, respectively. 
			Thus, the essential filtering properties have been extended for a larger class of stochastic time-varying systems.

		\item  Under  the ECO condition for linear time-invariant systems with SEC, we  provide a novel event-triggered communication protocol by employing the consistency, and give an offline design principle of triggering thresholds with guaranteed boundedness of error covariance.
			More importantly,  we quantify and analyze the communication rate for the proposed event-triggered distributed Kalman filter,  and provide optimization based methods to obtain the minimal (maximal) successive non-triggering (triggering) times, which seem to be not theoretically  investigated in the existing results \cite{liu2015event,liu2015eventinformatics,Battistelli2016Distributed} of distributed filtering to our knowledge.

	\end{enumerate}

	The remainder of the paper is organized as follows. The problem formulation is given in section II.
	Then a fully distributed Kalman filter for a time-varying system with SEC is proposed and analyzed in section III, while
	a distributed Kalman filter for a time-invariant system with a novel event-triggered communication  scheme is given and investigated in section IV.   Following that,
	numerical simulations on a constrained moving vehicle are shown in section V. Finally, the conclusions of this paper are provided in section VI.
	
	Notations: The superscript ``T" represents the transpose.  If $A$ and $B$ are both symmetric matrices, $A\geq B$ (or $A>B$) means that $A-B$ is a positive semidefinite (or positive definite) matrix. $I_{n}$ stands for the identity matrix with $n$ rows and $n$ columns, and $\otimes$ stands for the  Kronecker product. $E\{x\}$ denotes the mathematical expectation of the stochastic variable $x$. $blockdiag\{\cdot\}$ and $diag\{\cdot\}$ represent the diagonalizations of block elements and scalar elements, respectively. $tr(P)$ is the trace of the matrix $P$ and $var(x)$ is the variance of $x$.
	The integer set from $a$ to $b-1$ is denoted as $[a:b)$, and the integer set from $a$ to $b$ is $[a:b]=[a:b)\cup\{b\}$. $G^-$ is the Moore-Penrose inverse of $G$, and if $G$ is a nonsingular matrix, $G^-=G^{-1}$.
	{$dis(x,y)$ stands for the Euclidean distance between $x$ and $y$. $P_{D}[x]$ means the projected value of $x$ onto the set $D$.
$\mathbb{R}$ and $\mathbb{Z}^+$ stand for the set of real scalars and positive integers, respectively. $\mathbb{N}(\mu,\sigma^2)$ represents the probability distribution with the mean $\mu$ and variance $\sigma^2$. 1.0e4 stands for $1.0\times 10^4$. 	}
	For an operator $h(x)$, suppose $h^2(x)=h(h(x))$. $\lambda_{max}(A)$ is the maximal eigenvalue of matrix $A$. $\lceil \cdot\rceil$ is the ceiling function, and $\lfloor\cdot\rfloor$ is the floor function.
	
	\section{Problem Formulation and  Filter Design}
	
	In this section, we provide some preliminary knowledge and formulate a distributed Kalman filter problem with SEC.   Then we provide a filtering structure for the design.
	
	\subsection{Problem Statement}
	Consider the following time-varying stochastic dynamics
	\begin{align}\label{system_all}
	x_{k+1}=A_{k}x_{k}+\omega_{k},
	\end{align}
	where $x_{k}\in\mathbb{R}^{n}$ is the state vector,
	$A_{k}\in\mathbb{R}^{n\times n}$ is the known system matrix, subject to $\beta_{2}I_n\leq A_{k}A_{k}^T\leq\beta_{1}I_n$, $\beta_1,\beta_2>0$. $\omega_{k}\in\mathbb{R}^{n}$ is the zero-mean process noise, and $E\{\omega_{k}\omega_{k}^T\}\leq Q_{k}\in\mathbb{R}^{n\times n}$, where $0<\underline Q\leq Q_{k}\leq \overline Q<+\infty$. The state $x_{k}$ is observed by a multi-agent network of $N$ agents, whose measurement model is given as
	\begin{align}\label{system_all1}
	y_{k,i}=H_{k,i}x_{k}+v_{k,i},i=1,2,\ldots,N,
	\end{align}
	where $y_{k,i}\in\mathbb{R}^{m_i}$ is the measurement vector obtained by agent $i$,  $H_{k,i}\in\mathbb{R}^{m_i\times n}$ is the observation matrix of agent $i$ and $v_{k,i}\in\mathbb{R}^{m_i}$ is the zero-mean observation noise.
	{Suppose that $x_{0}$, $\{\omega_{k}\}_{k=0}^{\infty}$ and $\{v_{k,i}\}_{k=0}^{\infty}$ are Gaussian and temporally independent. 
	Also, $E\{(x_{0}-E\{x_{0}\})(x_{0}-E\{x_{0}\})^T\}\leq P_{0}\in\mathbb{R}^{n\times n}$ and $E\{v_{k,i}v_{k,i}^T\}\leq R_{k,i}\in\mathbb{R}^{m_i\times m_i}$,  where  $R_{k,i}>0$.
	Different from \cite{wang2017convergence,cattivelli2010diffusion,olfati2009kalman,Dm2015Distributed} that the measurement noises of agents are spatially independent, in this work, we study a more general case that  $v_{k,i}$ and $v_{k,j}$ may be correlated  with unknown correlation matrix at each time  $k$.
}

	The communication between agents in the multi-agent network is modeled as a directed graph $\mathcal{G=(V,E,A)}$, which consists of  the set of agents or nodes $\mathcal{V}=\{1,2,\ldots,N\}$, the set of edges $\mathcal{E}\subseteq \mathcal{V}\times \mathcal{V}$, and the weighted adjacent matrix $\mathcal{A}=[a_{i,j}]$.  In the weighted adjacent matrix $\mathcal{A}$, all the elements are nonnegative, row stochastic and the diagonal elements are all positive, i.e., $a_{i,i}> 0,a_{i,j}\geq 0,\sum_{j\in \mathcal{V}}a_{i,j}=1$. If $a_{i,j}>0,j\neq i$, there is an edge $(i,j)\in \mathcal{E}$, which means node $i$ can directly receive the information of node $j$, and node $j$ is called the in-neighbor of node $i$. Meanwhile, node $i$ is called the out-neighbor of node $j$. 
	$\mathcal{N}_{i,out}^0$ is set of out-neighbors of agent $i$ without including itself. In this paper, we call in-neighbor as neighbor without specific statement.	
	All the neighbors of node $i$ can be represented by the set $\{j\in\mathcal{V}|(i,j)\in \mathcal{E}\}\triangleq\mathcal{N}_{i}^0$, whose size  is denoted as $|\mathcal{N}_{i}^0|$. 
	Also, $\mathcal{N}_{i}^0\bigcup\{i\}\triangleq \mathcal{N}_{i}$. $\mathcal{G}$ is called strongly connected if for any pair nodes $(i_{1},i_{l})$, there exists a directed path from $i_{1}$ to $i_{l}$ consisting of edges $(i_{1},i_{2}),(i_{2},i_{3}),\ldots,(i_{l-1},i_{l})$. An undirected graph $\mathcal{G}$ is simply called connected if it is strongly connected.

		{In practical applications, many examples can be formulated with SEC \cite{crassidis2003unscented,Chandrasekar2004State,wen1992model,shen2006reliable,simon2002kalman,rotea2008state}.
	Given the dynamics (\ref{system_all}), there may exist equality constraints on the overall state for each agent. The SEC can be expressed as}	
	\begin{equation}\label{system_constriants}
	D_{k,i}x_{k}=d_{k,i},i=1,2,\ldots,N,
	\end{equation}
	where $D_{k,i}\in\mathbb{R}^{s_{k,i}\times n}$, $d_{k,i}\in\mathbb{R}^{s_{k,i}}$ are known constraint matrix and constraint vector of agent $i$, respectively, and $s_{k,i}$ is the number of constraints of agent $i$ at time $k$.
	Clearly, the number of constraints is no larger than  state dimension, i.e., $s_{k,i}\leq n.$ Also, $\mathcal{D}_{k}=\bigcap_{i\in\mathcal{V}} \mathcal{D}_{k,i}\neq \emptyset$, where $\mathcal{D}_{k,i}=\{x_k|D_{k,i}x_k=d_{k,i}\}$. The constraint equality $\bar D_kx_k=d_k$ contains all the local constraints $D_{k,i}x_k= d_{k,i},i\in\mathcal{V}$.
	{
	Without loss of generality, $D_{k,i}$ is supposed to be either  zero  or of full row  rank, and $\bar D_k\in\mathbb{R}^{\bar s_k\times n}$ is assumed to be of full row rank, subject to $0<\varpi I_{ \bar s_k}\leq \bar D_{k}\bar D_{k}^T\leq I_{\bar s_k}$. }

	For convenience, we denote
	\begin{align*}
	&\Phi_{k,k}=I_{n},\Phi_{k+1,k}=A_{k},\Phi_{j,k}=\Phi_{j,j-1}\cdots \Phi_{k+1,k}(j>k),\\
	&H_{k}=[H_{k,1}^T,H_{k,2}^T,\ldots,H_{k,N}^T]^T,\\
	&D_{k}=[D_{k,1}^T,D_{k,2}^T,\ldots,D_{k,N}^T]^T,\\
	&R_{k}=blockdiag\{R_{k,1},R_{k,2},\ldots,R_{k,N}\}.
	\end{align*}
	
		In the distributed framework, each agent aims to timely estimate the dynamics (\ref{system_all}) based on the knowledge of local measurements (\ref{system_all1}) and constraints (\ref{system_constriants}) as well as the information communication with neighbors. To solve the problem, the following assumptions are adopted throughout the paper.
	
	\begin{assumption}\label{ass_topology}
		The directed graph $\mathcal{G=(V,E,A)}$ of multi-agent network is strongly connected.
	\end{assumption}
	
	Assumption \ref{ass_topology} is a basic condition for the directed topology graph of multi-agent network (referring to \cite{yang2017stochastic}), which is important for the distributed design. If the topology graph $\mathcal{G=(V,E,A)}$ is undirected, the assumption degenerates to the connectedness of the network discussed in \cite{Battistelli2016Distributed,Dm2015Distributed}.
	
	\begin{assumption}\label{ass_observable}(Extended collective observability, ECO)
		There exist a positive integer $\bar N$ and a positive constant $\alpha$ such that for any $k\geq 0$,
		\begin{equation}\label{Observability_matrix}
		\sum_{j=k}^{k+\bar N}\Phi^T_{j,k}(H_{j}^TR_{j}^{-1}H_{j}+D_{j}^TD_{j})\Phi_{j,k}\geq \alpha I_n>0.
		\end{equation}	
	\end{assumption}	
	Even if the observability conditions in {\cite{Battistelli2014Kullback,Battistelli2016stability,Cat2010Diffusion,yang2017stochastic,Hu2012Diffusion,wang2017convergence} are not satisfied}, the ECO condition in Assumption $\ref{ass_observable}$  can still be fulfilled.
	In other words, the ECO condition is more general than the existing observability conditions \cite{Battistelli2014Kullback,Battistelli2015Consensus,Battistelli2016stability,Dm2015Distributed,wang2017convergence,olfati2009kalman,stankovic2009consensus,yang2014stochastic}
	 because the mildest observability conditions given in the existing papers such as \cite{Battistelli2014Kullback,Battistelli2016stability} still depend on $(A_{k},H_{k})$, which is a special case of Assumption $\ref{ass_observable}$ satisfied if $D_{k,i}=0,i\in\mathcal{V},k=1,2,\ldots$.

	\subsection{Filtering Structure}
	
	Given the dynamics (\ref{system_all}) with observations of $N$ agents (\ref{system_all1}) and constraints (\ref{system_constriants}), we propose the following general distributed filtering structure in Table \ref{filter_stru}.

	\begin{table}[htbp]
		\caption{Distributed filtering structure}
		\label{filter_stru}	
		\centering  
		\begin{tabular}{l}  
			\hline\hline
			\textbf{Prediction:}\\
			$\bar x_{k,i}=A_{k-1}\hat x_{k-1,i},$\\
			\textbf{Measurement update:}\\	
			$\tilde x_{k,i}=\bar x_{k,i}+K_{k,i}(y_{k,i}-H_{k,i}\bar x_{k,i})$,  \\
			{\textbf{Perform $L$ steps of fusion-projection (i.e, consensus):}}\\
			{$\quad$\textbf{Local fusion:} }\\
			{$\quad$$\check  x_{k,i}^l=\sum_{j\in \mathcal{N}_{i}}W_{k,i,j}^l\tilde x_{k,j}^{l,t},$ $l=0,1,\cdots,L-1$}\\  
			$\quad$\textbf{Projection:}\\
			{$\quad$$\tilde x_{k,i}^{l+1}=\arg\min_{x}(x-\check  x_{k,i}^l)^T(M_{k,i}^l)^{-1}(x-\check  x_{k,i}^l), \text{s.t.} D_{k,i}x=d_{k,i}.$}\\        
	\textbf{Output:}\\	
	$\hat x_{k,i}=\tilde x_{k,i}^L$.	\\		
			\hline
		\end{tabular}
	\end{table}
	{In Table \ref{filter_stru}, $\bar x_{k,i}$, $\tilde x_{k,i}$, $\check x_{k,i}^l$,  $\tilde x_{k,i}^l$   and $\hat x_{k,i}$ are the state estimates in the state prediction,  state update,  state fusion, state projection and state output  of agent $i$ at time $k$, respectively.
		The corresponding estimation errors are $\bar e_{k,i}$, $\tilde e_{k,i}$, $\check e_{k,i}^l$,  $\tilde e_{k,i}^l$   and $e_{k,i}$.  Additionally, $\tilde x_{k,j}^{l,t},j\in\mathcal{N}_i$, are the estimates to be fused by agent $i$ under time-based or event-triggered schemes. The matrices $K_{k,i}$, $W_{k,i,j}^l$ and $M_{k,i}^l$ are to be designed.}

	\begin{remark}
		The projection scheme in Table \ref{filter_stru}  has the advantage that the state estimates of one agent can be
		strictly satisfied with the SEC of its own (\ref{system_constriants}). 
	\end{remark}

		It is not trivial to study how to combine the projection method and the distributed filters without SEC, since to obtain an estimate with smaller covariance, the typical projection operation requires the error covariance matrix. 	However, in the general distributed Kalman filters \cite{olfati2009kalman,olfati2007distributed, yang2014stochastic, Dm2015Distributed, cattivelli2010diffusion, Hu2012Diffusion}, due to the unknown correlation between estimates of state,  the error covariance can not be obtained in a distributed manner.  
	Thus, we introduce the definition of consistency  as follows.
	\begin{definition}\cite{Julier1997A}\label{def_consis}
		(Consistency) Suppose $x_{k}$ is a random vector and $\hat x_{k}$ is the estimate of $x_{k}$. Then the pair ($\hat x_{k},P_{k}$) is said to be consistent  if $E\{(\hat x_{k}-x_{k})(\hat x_{k}- x_{k})^T\}\leq P_{k}.$
	\end{definition}
	 
	Consider the system (\ref{system_all})-(\ref{system_constriants}) with distributed filtering structure in Table \ref{filter_stru}.
	 In the subsequent, we try to answer the following important problems:
	\begin{enumerate}
		\item How to design $K_{k,i}$, $W_{k,i,j}^l$ and $M_{k,i}^l$ of Table \ref{filter_stru} such that the proposed distributed Kalman filter is fully distributed  and subject to consistency at each moment and at each fusion-projection step?
		\item Under what mild conditions the uniform boundedness of error covariance can be guaranteed?		
		What benefits can SEC  provide? More importantly, how does the fusion-projection number $L$ influence the estimation performance?
		\item How to provide a design principle for the event-triggered thresholds so as to guarantee the desired estimation performance? How to quantify and analyze the communication rate of a distributed filter with even-triggered communications?
	\end{enumerate}
	
	\section{Distributed Filter with Time-based Communications}
	In this section, we will focus on solving the problems in 1) and 2) of last section. Specifically, we will provide a distributed design of filter, analyze the main estimation properties, and study the benefits of SEC and fusion-projection number $L$ to the estimation performance.
	\subsection{Time-based Distributed Filter}
	Regarding the distributed filtering structure in Table \ref{filter_stru}, 
	suppose the time-based  scheme allows each agent communicates with its neighbors for $L\geq 1$ times  between two measurement updates. In Table \ref{ODKF2}, 
	applying a modified projection operator and the covariance intersection fusion  method, 	
	we propose a distributed algorithm called ``time-based projected distributed Kalman filter" (TPDKF). 
	\begin{table}[htbp]
		\caption{Time-based Projected Distributed Kalman Filter (TPDKF)}
		\label{ODKF2}	
		\centering  
		\begin{tabular}{l}  
			\hline\hline
			\textbf{Initialization:}\\
			{
			$(\hat x_{0,i},P_{0,i})$ is satisfied with $P_{0,i}\geq (1+\theta_i)P_0+\frac{\theta_i+1}{\theta_i}P_{0,i}^*$,$\exists \theta_i>0, $}\\
			{where $P_{0,i}^*\geq (\hat x_{0,i}-E\{x_0\})(\hat x_{0,i}-E\{x_0\})^T$}\\
			\textbf{Input:}\\	
			$(\hat x_{k-1,i},P_{k-1,i},\varepsilon_i)$,  \\
			\textbf{Prediction:}\\
			$\bar x_{k,i}=A_{k-1}\hat x_{k-1,i},$\\  
			$\bar P_{k,i}=A_{k-1}P_{k-1,i}A_{k-1}^T+Q_{k-1},$\\         
			\textbf{Measurement update:}\\
			$\tilde x_{k,i}=\bar x_{k,i}+K_{k,i}(y_{k,i}-H_{k,i}\bar x_{k,i}),$\\        
			$K_{k,i}=\bar P_{k,i}H_{k,i}^T(H_{k,i}\bar P_{k,i}H_{k,i}^T+R_{k,i})^{-1},$\\
			$\tilde P_{k,i}=(I-K_{k,i}H_{k,i})\bar P_{k,i}$,\\
			{\textbf{Perform $L$ steps of fusion-projection with $\tilde x_{k,i}^0=\tilde x_{k,i},\tilde P_{k,i}^0=\tilde P_{k,i}$:}}\\
			{$\qquad$\textbf{Local fusion:} Receive ($\tilde x_{k,j}^l$, $\tilde P_{k,j}^l$ ) from neighbors $j\in \mathcal{N}_{i}$}\\
			{$\qquad$$\check x_{k,i}^l=\check P_{k,i}^l\sum_{j\in \mathcal{N}_{i}}a_{i,j}(\tilde P_{k,j}^l)^{-1}\tilde x_{k,j}^l$, $l=0,1,\cdots,L-1,$}\\	
			{$\qquad$$\check P_{k,i}^l=(\sum_{j\in \mathcal{N}_{i}}a_{i,j}(\tilde P_{k,j}^l)^{-1})^{-1}$, $l=0,1,\cdots,L-1,$}	\\	
			{$\qquad$\textbf{Projection:}}\\
			{$\qquad$$\tilde x_{k,i}^{l+1}=\check x_{k,i}^l-\check P_{k,i}^lD_{k,i}^T(D_{k,i}\check P_{k,i}^lD_{k,i}^T)^{-}(D_{k,i}\check x_{k,i}^l-d_{k,i}),$}\\
			{$\qquad$$\tilde P_{k,i}^{l+1}=\check P_{k,i}^l-\check P_{k,i}^lD_{k,i}^T(D_{k,i}\check P_{k,i}^lD_{k,i}^T+\varepsilon_i I_{s_{k,i}})^{-1}D_{k,i}\check P_{k,i}^l$,}\\
			\textbf{Output:}\\	
			{$\hat x_{k,i}=\tilde x_{k,i}^L,P_{k,i}=\tilde P_{k,i}^L$.}	\\
			\hline
		\end{tabular}
	\end{table}
On 	TPDKF, we have the following lemma.
	\begin{lemma}\label{lem_singular}
		For the proposed TPDKF,   $\tilde P_{k,i}^{l+1}$ is a positive definite matrix, subject to $\hat P_{k,i}^{l+1}\leq\tilde P_{k,i}^{l+1}\leq\check P_{k,i}^{l}$,
		where $\hat P_{k,i}^{l+1}=\check P_{k,i}^l-\check P_{k,i}^lD_{k,i}^T(D_{k,i}\check P_{k,i}^lD_{k,i}^T)^{-}D_{k,i}\check P_{k,i}^l.$
Furthermore, if $D_{k,i}$ is of rank $d_{k,i}\leq n$, $\hat P_{k,i}^{l+1}$ has $d_{k,i}$ eigenvalues of zero.
	\end{lemma}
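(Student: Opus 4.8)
\emph{Proof proposal.} The plan is to first secure the positive definiteness of the auxiliary matrices entering the fusion–projection step, then to rewrite the regularized projection update via the matrix inversion lemma so that both the strict positivity and the ordering $\hat P_{k,i}^{l+1}\le\tilde P_{k,i}^{l+1}\le\check P_{k,i}^{l}$ become transparent.

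\emph{Step 1 (positive definiteness of the building blocks).} I would argue by a nested induction on $k$ and on $l$. For fixed $k$, the base quantity $\tilde P_{k,i}^{0}=\tilde P_{k,i}=(I-K_{k,i}H_{k,i})\bar P_{k,i}$ equals $(\bar P_{k,i}^{-1}+H_{k,i}^TR_{k,i}^{-1}H_{k,i})^{-1}$ by the matrix inversion lemma, and $\bar P_{k,i}=A_{k-1}P_{k-1,i}A_{k-1}^T+Q_{k-1}\ge\underline Q>0$ since $P_{k-1,i}>0$ propagates from the initialization $P_{0,i}\ge(1+\theta_i)P_0+\frac{\theta_i+1}{\theta_i}P_{0,i}^{*}>0$; hence $\tilde P_{k,i}^{0}>0$. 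Assuming $\tilde P_{k,j}^{l}>0$ for all $j\in\mathcal N_i$, the matrix $\check P_{k,i}^{l}=\big(\sum_{j\in\mathcal N_i}a_{i,j}(\tilde P_{k,j}^{l})^{-1}\big)^{-1}$ is the inverse of a strictly positive combination (note $a_{i,i}>0$) of positive definite matrices, hence positive definite. The inductive step $\tilde P_{k,i}^{l+1}>0$ is exactly the first assertion of the lemma, proved next.

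\emph{Step 2 (Woodbury rewriting, positivity and upper bound).} Applying the matrix inversion lemma to the update of $\tilde P_{k,i}^{l+1}$ with the regularization $\varepsilon_i I_{s_{k,i}}>0$ gives
$$\tilde P_{k,i}^{l+1}=\Big((\check P_{k,i}^{l})^{-1}+\tfrac{1}{\varepsilon_i}D_{k,i}^TD_{k,i}\Big)^{-1}.$$
Since $(\check P_{k,i}^{l})^{-1}>0$ and $\tfrac{1}{\varepsilon_i}D_{k,i}^TD_{k,i}\ge 0$, the inner matrix is positive definite, so $\tilde P_{k,i}^{l+1}>0$, which closes the induction of Step 1. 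Moreover $(\check P_{k,i}^{l})^{-1}+\tfrac{1}{\varepsilon_i}D_{k,i}^TD_{k,i}\ge(\check P_{k,i}^{l})^{-1}$, and inversion reverses the order, yielding $\tilde P_{k,i}^{l+1}\le\check P_{k,i}^{l}$. For the lower bound, set $S=D_{k,i}\check P_{k,i}^{l}D_{k,i}^T$: if $D_{k,i}$ has full row rank then $S>0$ and $(S+\varepsilon_i I)^{-1}\le S^{-1}=S^{-}$, while if $D_{k,i}=0$ both matrices vanish; in either case $(S+\varepsilon_i I)^{-1}\le S^{-}$, and pre/post-multiplying by $\check P_{k,i}^{l}D_{k,i}^T$ and its transpose preserves this inequality, so the term subtracted to form $\tilde P_{k,i}^{l+1}$ is dominated by the one subtracted to form $\hat P_{k,i}^{l+1}$, i.e. $\tilde P_{k,i}^{l+1}\ge\hat P_{k,i}^{l+1}$.

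\emph{Step 3 (zero eigenvalues of $\hat P_{k,i}^{l+1}$).} In the full-row-rank case write $\check P_{k,i}^{l}=(\check P_{k,i}^{l})^{1/2}(\check P_{k,i}^{l})^{1/2}$ with the symmetric positive definite square root, and set $B=D_{k,i}(\check P_{k,i}^{l})^{1/2}$, which has rank $d_{k,i}$ since $(\check P_{k,i}^{l})^{1/2}$ is invertible. Then $\hat P_{k,i}^{l+1}=(\check P_{k,i}^{l})^{1/2}\big(I_n-B^T(BB^T)^{-1}B\big)(\check P_{k,i}^{l})^{1/2}$, where $B^T(BB^T)^{-1}B$ is the orthogonal projector onto the $d_{k,i}$-dimensional row space of $B$, so $I_n-B^T(BB^T)^{-1}B$ has rank $n-d_{k,i}$; congruence by the invertible $(\check P_{k,i}^{l})^{1/2}$ preserves rank, hence $\hat P_{k,i}^{l+1}$ has rank $n-d_{k,i}$, i.e. exactly $d_{k,i}$ zero eigenvalues. (Equivalently, $D_{k,i}\hat P_{k,i}^{l+1}=0$ shows the nullity is at least $d_{k,i}$ and the projector computation shows it is exactly $d_{k,i}$.) When $D_{k,i}=0$ we have $d_{k,i}=0$ and the claim is vacuous. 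The only delicate point in the whole argument is the uniform book-keeping with the Moore–Penrose inverse across the ``$D_{k,i}$ zero or full row rank'' dichotomy and anchoring the positive-definiteness induction of Step 1 at the initialization; once the Woodbury identity is invoked the remaining algebra is routine.
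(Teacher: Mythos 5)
Your proposal is correct, and its core (positive definiteness and the upper bound $\tilde P_{k,i}^{l+1}\le\check P_{k,i}^{l}$ via the matrix inversion lemma identity $(\tilde P_{k,i}^{l+1})^{-1}=(\check P_{k,i}^{l})^{-1}+\tfrac{1}{\varepsilon_i}D_{k,i}^TD_{k,i}$) coincides with the paper's argument. You diverge in two places, both to your advantage in rigor. First, for the ordering $\hat P_{k,i}^{l+1}\le\tilde P_{k,i}^{l+1}$ the paper simply asserts it is "easy to obtain," whereas you justify it by $(S+\varepsilon_i I)^{-1}\le S^{-}$ with $S=D_{k,i}\check P_{k,i}^{l}D_{k,i}^T$ and congruence by $\check P_{k,i}^{l}D_{k,i}^T$; this is the right missing detail. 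Second, for the count of zero eigenvalues the paper uses a matrix-approximation argument: it regularizes with $\eta I$, applies Woodbury to get $(\hat P_{k,i}^{l+1}(\eta))^{-1}=(\check P_{k,i}^{l})^{-1}+\tfrac{1}{\eta}D_{k,i}^TD_{k,i}$, and lets $\eta\to 0^+$ to conclude $\hat P_{k,i}^{l+1}\le S_*$ where $S_*$ has the claimed null eigenvalues — an argument that, read literally, only bounds the rank from above. Your route instead factors $\hat P_{k,i}^{l+1}=(\check P_{k,i}^{l})^{1/2}\bigl(I_n-B^T(BB^T)^{-1}B\bigr)(\check P_{k,i}^{l})^{1/2}$ with $B=D_{k,i}(\check P_{k,i}^{l})^{1/2}$ and invokes rank invariance under congruence, which pins the rank down exactly; this is cleaner and also makes the identity $D_{k,i}\hat P_{k,i}^{l+1}=0$ transparent. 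Your additional induction establishing $\check P_{k,i}^{l}>0$ from the initialization is implicit in the paper but worth stating. No gaps.
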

\begin{proof}
	The proof is given in Appendix \ref{pf_lem_singular}.
\end{proof}
	The positive parameter $\varepsilon_i$ in TPDKF is adjustable, and it can be set sufficiently small like the setting in \cite{Te2009state}. Besides, the employment of $\varepsilon_i$ in the modified projection operator ensures that $\tilde P_{k,i}^{l+1}$ is a positive definite matrix, which suits to the requirement of the CI based fusion operator.

		{	
	\begin{remark}
The matrix $P_{k,i}$  in TPDKF does not stand for the error covariance of agent $i$. The relationship between $P_{k,i}$ and the error covariance matrix will be shown in Lemma \ref{thm_pro}.
		\end{remark}	
		    \begin{remark}
                 If the fusion-projection number $L$ is sufficiently large, $\hat x_{k,i},i\in\mathcal{V},$ will go to consensus, which means global SEC will be satisfied by each agent. However, since it is not practical to make agents communicate for too many times between two measurement updates in dynamic estimation, it is essential to analyze the  influence of the fusion-projection to the estimation performance, which will be analyzed in Theorem \ref{prop_const} and Theorem \ref{thm_consistent}. 
		    \end{remark}		
	\subsection{Estimation Performance of TPDKF}		
	In this subsection, we will investigate the main estimation performance of TPDKF in several aspects.
	The following result shows the Gaussianity of the estimation error of TPDKF.
	
	\begin{proposition}\label{thm_distri}
		{
		For TPDKF of the system (\ref{system_all})--(\ref{system_all1}) with constraints (\ref{system_constriants}), the estimation error $e_{k,i}=\hat x_{k,i}-x_{k}$ is Gaussian, for any $i\in \mathcal{V}, k=1,2,\ldots$.}
	\end{proposition}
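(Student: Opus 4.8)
The plan is to trace the estimation error $e_{k,i}$ through each operation of TPDKF (prediction, measurement update, local fusion, projection) and argue that each step maps a Gaussian random vector to a Gaussian random vector, proceeding by induction on $k$ and, within each $k$, by a nested induction on the fusion–projection index $l$. The base case is that $\hat x_{0,i}-x_0$ is Gaussian: by the standing assumption $x_0$ is Gaussian, and the initial estimate $\hat x_{0,i}$ is a fixed (deterministic) function of known data, so $e_{0,i}$ is an affine image of $x_0$ and hence Gaussian. Jointly, I would actually want the stronger inductive hypothesis that the stacked vector $(e_{k-1,1}^T,\dots,e_{k-1,N}^T,\omega_{k-1}^T,v_{k,1}^T,\dots,v_{k,N}^T)^T$ is jointly Gaussian, since the arguments below produce $e_{k,i}$ as linear combinations of these pieces across all agents, and joint Gaussianity is what is preserved under linear maps.

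The key steps, in order: (i) \emph{Prediction.} Since $\bar x_{k,i}=A_{k-1}\hat x_{k-1,i}$ and $x_k=A_{k-1}x_{k-1}+\omega_{k-1}$, we get $\bar e_{k,i}=A_{k-1}e_{k-1,i}-\omega_{k-1}$, a fixed linear combination of the (jointly Gaussian, by the inductive hypothesis) vectors $e_{k-1,i}$ and $\omega_{k-1}$, hence Gaussian, and jointly Gaussian with the noises $v_{k,j}$. (ii) \emph{Measurement update.} From $\tilde x_{k,i}=\bar x_{k,i}+K_{k,i}(y_{k,i}-H_{k,i}\bar x_{k,i})$ and $y_{k,i}=H_{k,i}x_k+v_{k,i}$, one computes $\tilde e_{k,i}=(I-K_{k,i}H_{k,i})\bar e_{k,i}+K_{k,i}v_{k,i}$; the gain $K_{k,i}$ is a deterministic matrix (it depends only on $\bar P_{k,i},H_{k,i},R_{k,i}$, all nonrandom), so $\tilde e_{k,i}=\tilde e_{k,i}^0$ is again a linear image of jointly Gaussian vectors. (iii) \emph{Fusion step.} For the local fusion $\check x_{k,i}^l=\check P_{k,i}^l\sum_{j\in\mathcal N_i}a_{i,j}(\tilde P_{k,j}^l)^{-1}\tilde x_{k,j}^l$, the weights $a_{i,j}$ and the matrices $\check P_{k,i}^l,\tilde P_{k,j}^l$ are all deterministic (they are generated by deterministic recursions involving only $P_{0,i}$, the system matrices, and $\varepsilon_i$, with $\tilde P_{k,i}^{l+1}$ positive definite by Lemma~\ref{lem_singular} so the inverses exist), hence $\check e_{k,i}^l$ is a fixed convex-type linear combination of the $\tilde e_{k,j}^l$ and is Gaussian, jointly with everything else. (iv) \emph{Projection step.} Writing $\Pi_{k,i}^l=I-\check P_{k,i}^lD_{k,i}^T(D_{k,i}\check P_{k,i}^lD_{k,i}^T)^{-}D_{k,i}$ (a deterministic matrix), the projection formula gives $\tilde x_{k,i}^{l+1}=\Pi_{k,i}^l\check x_{k,i}^l+\check P_{k,i}^lD_{k,i}^T(D_{k,i}\check P_{k,i}^lD_{k,i}^T)^{-}d_{k,i}$; subtracting $x_k$ and using $D_{k,i}x_k=d_{k,i}$ yields $\tilde e_{k,i}^{l+1}=\Pi_{k,i}^l\check e_{k,i}^l$, again a deterministic linear image, hence Gaussian. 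Iterating (iii)–(iv) for $l=0,\dots,L-1$ gives that $e_{k,i}=\tilde e_{k,i}^L$ is Gaussian and that the stacked vector over all agents, together with $\omega_k$ and $v_{k+1,\cdot}$, is jointly Gaussian, closing the induction.

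The main obstacle — really the only subtle point — is making precise and rigorous the claim that all the ``coefficient'' matrices appearing above ($K_{k,i}$, $\check P_{k,i}^l$, $\tilde P_{k,i}^l$, the projection matrices, and hence $\Pi_{k,i}^l$) are genuinely deterministic rather than data-dependent random quantities; this is what decouples the error recursion from the measurements and keeps everything inside the Gaussian family. This follows because the covariance-type recursions in TPDKF ($\bar P_{k,i}=A_{k-1}P_{k-1,i}A_{k-1}^T+Q_{k-1}$, the Riccati-type update for $\tilde P_{k,i}$, the CI fusion for $\check P_{k,i}^l$, and the modified projection update for $\tilde P_{k,i}^{l+1}$) involve only the known system data and the fixed parameters $\varepsilon_i$, $P_{0,i}$, with no dependence on the realized $y_{k,i}$; one should state this explicitly and invoke Lemma~\ref{lem_singular} to guarantee the relevant matrices are invertible so that the recursions are well defined at every step. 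Once that bookkeeping is in place, the proof is just the repeated use of the elementary fact that an affine transformation of a (jointly) Gaussian vector is (jointly) Gaussian.
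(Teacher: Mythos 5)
Your proposal is correct and follows essentially the same route as the paper's proof: write the error recursion for each stage (prediction, measurement update, fusion, projection) as a deterministic linear/affine map of jointly Gaussian quantities and close by induction on $k$ and $l$. Your added care about joint Gaussianity of the stacked errors and the determinism of the gain/covariance matrices simply makes explicit what the paper's shorter argument leaves implicit (and your sign for $\omega_{k-1}$ in the prediction error is the correct one, though this is immaterial for Gaussianity).
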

	
	\begin{proof}
		See the proof in Appendix \ref{pf_pro1}.
	\end{proof}

	Due to the strong correlation between the estimates of agents, error covariance matrices are usually not accessible.   The following lemma shows that both   consistency  and SEC are satisfied for the estimates of TPDKF. 
	\begin{lemma}\label{thm_pro}
			 For TPDKF of the system (\ref{system_all})--(\ref{system_all1}) with constraints (\ref{system_constriants}), if $L\geq 1$, then for $l\in[0:L-1]$,
		\begin{itemize}
			\item the state estimate $\tilde x_{k,i}^{l+1}$  satisfies the SEC of agent $i$ at time $k$, i.e., $D_{k,i}\tilde x_{k,i}^{l+1}=d_{k,i}$;
			\item the consistency at the modified projection step holds, i.e., $E\{(\tilde x_{k,i}^{l+1}-x_{k})(\tilde x_{k,i}^{l+1}-x_{k})^T\}\leq \tilde P_{k,i}^{l+1}$.
		\end{itemize}		
	\end{lemma}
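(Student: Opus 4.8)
The plan is to prove both claims by induction on the fusion-projection index $l$, using as the base case the consistency of the measurement-update estimate $(\tilde x_{k,i},\tilde P_{k,i})$, which follows from the standard Kalman update applied to a consistent prior $(\bar x_{k,i},\bar P_{k,i})$; that consistency of the prior is in turn inherited from the output of the previous time step (or the initialization condition $P_{0,i}\geq(1+\theta_i)P_0+\frac{\theta_i+1}{\theta_i}P_{0,i}^*$, which dominates the true initial error covariance after accounting for the unknown mean offset). So I would first record that $E\{(\tilde x_{k,i}^0-x_k)(\tilde x_{k,i}^0-x_k)^T\}\leq\tilde P_{k,i}^0$.

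The SEC claim is the easy half: the projection formula
\[
\tilde x_{k,i}^{l+1}=\check x_{k,i}^l-\check P_{k,i}^lD_{k,i}^T(D_{k,i}\check P_{k,i}^lD_{k,i}^T)^{-}(D_{k,i}\check x_{k,i}^l-d_{k,i})
\]
is exactly the weighted least-squares projection of $\check x_{k,i}^l$ onto $\{x:D_{k,i}x=d_{k,i}\}$, so multiplying on the left by $D_{k,i}$ and using the Moore--Penrose identity $D_{k,i}\check P_{k,i}^lD_{k,i}^T(D_{k,i}\check P_{k,i}^lD_{k,i}^T)^{-}D_{k,i}=D_{k,i}$ (valid because $\check P_{k,i}^l>0$, so the range of $D_{k,i}\check P_{k,i}^l$ equals the range of $D_{k,i}$) yields $D_{k,i}\tilde x_{k,i}^{l+1}=d_{k,i}$. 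Here I would invoke Lemma \ref{lem_singular} to guarantee $\check P_{k,i}^l$ (and hence the relevant matrices) are well-defined and positive definite at every step.

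For the consistency half I would proceed in two sub-steps within the inductive step. First, the CI fusion: assuming $E\{(\tilde x_{k,j}^l-x_k)(\tilde x_{k,j}^l-x_k)^T\}\leq\tilde P_{k,j}^l$ for every neighbor $j\in\mathcal{N}_i$, the covariance-intersection combination $\check x_{k,i}^l=\check P_{k,i}^l\sum_{j}a_{i,j}(\tilde P_{k,j}^l)^{-1}\tilde x_{k,j}^l$ with $\check P_{k,i}^l=(\sum_j a_{i,j}(\tilde P_{k,j}^l)^{-1})^{-1}$ is the standard CI rule with weights $a_{i,j}$ summing to one, so the CI consistency guarantee gives $E\{(\check x_{k,i}^l-x_k)(\check x_{k,i}^l-x_k)^T\}\leq\check P_{k,i}^l$ regardless of the (unknown) cross-correlations among the $\tilde e_{k,j}^l$. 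Second, the projection: writing $\check e_{k,i}^l=\check x_{k,i}^l-x_k$ and using $D_{k,i}x_k=d_{k,i}$ (so that $D_{k,i}\check x_{k,i}^l-d_{k,i}=D_{k,i}\check e_{k,i}^l$), the error propagates linearly as $\tilde e_{k,i}^{l+1}=(I-\check P_{k,i}^lD_{k,i}^T(D_{k,i}\check P_{k,i}^lD_{k,i}^T)^{-}D_{k,i})\check e_{k,i}^l\triangleq T_{k,i}^l\check e_{k,i}^l$; taking expectations gives $E\{\tilde e_{k,i}^{l+1}(\tilde e_{k,i}^{l+1})^T\}\leq T_{k,i}^l\check P_{k,i}^l(T_{k,i}^l)^T=\hat P_{k,i}^{l+1}\leq\tilde P_{k,i}^{l+1}$, where the last equality is the computation of $\hat P_{k,i}^{l+1}$ from Lemma \ref{lem_singular} and the final inequality is precisely the bound $\hat P_{k,i}^{l+1}\leq\tilde P_{k,i}^{l+1}$ established there. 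Completing the induction over $l=0,\dots,L-1$ and then over $k$ finishes both bullets.

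The main obstacle is the CI sub-step: I need the covariance-intersection fusion inequality to hold \emph{with arbitrary unknown cross-correlations} and for an arbitrary (finite) number of sources combined simultaneously with convex weights $a_{i,j}$, rather than the two-source version usually quoted. I would either cite the standard multi-source CI consistency result (e.g.\ along the lines of \cite{Julier1997A}) or give a short direct verification: if $P_j\geq E\{e_je_j^T\}$ and $\omega_j\geq0$, $\sum_j\omega_j=1$, then $(\sum_j\omega_jP_j^{-1})^{-1}\geq E\{(\sum_j\omega_jP_j^{-1})^{-1}(\sum_j\omega_jP_j^{-1}e_j)(\cdot)^T\}$, which follows from the matrix Cauchy--Schwarz/Jensen inequality applied to the fused error. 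A secondary point to be careful about is that the projection uses the \emph{genuine} Moore--Penrose inverse $(\cdot)^{-}$ in the state update but a \emph{regularized} inverse with $\varepsilon_i I$ in the covariance update, so the propagated error covariance is bounded by $\hat P_{k,i}^{l+1}$ (the un-regularized one) which by Lemma \ref{lem_singular} is $\leq\tilde P_{k,i}^{l+1}$ — I must quote that lemma in exactly this direction and not conflate the two matrices.
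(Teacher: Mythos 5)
Your proposal is correct and follows essentially the same route as the paper's proof: induction over $k$ and $l$, the multi-source CI consistency result of \cite{Julier1997A} for the fusion step, the linear error propagation $\tilde e_{k,i}^{l+1}=P_{k,i}^{D,l}\check e_{k,i}^l$ combined with the Moore--Penrose identity to get $P_{k,i}^{D,l}\check P_{k,i}^l(P_{k,i}^{D,l})^T=\hat P_{k,i}^{l+1}\leq\tilde P_{k,i}^{l+1}$ from Lemma \ref{lem_singular}. You also correctly flag the one subtlety (regularized vs.\ unregularized inverse) that the paper's proof uses in exactly the direction you describe; the only cosmetic difference is that the paper dispatches the SEC bullet by noting $\tilde x_{k,i}^{l+1}$ solves the constrained optimization, whereas you verify it by direct algebra.
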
	
	\begin{proof}
		See the proof in Appendix \ref{pf_th1}.
	\end{proof}	
To investigate the influence of SEC and fusion-projection number $L$ to the estimation performance of TPDKF, we utilize a space decomposition method in the following.
For each time $k$,
there exists a bounded non-singular matrix $F_{k}\in\mathbb{R}^{n\times n}$, such that $\bar D_{k}F_{k}=[0^{\bar s_k\times (n-\bar s_k)}, \tilde D_{k}]$, where 
$\tilde D_{k}\in\mathbb{R}^{\bar s_k\times \bar s_k}$ is of full rank.  
Let 
\begin{align*}
&F_{k}^{-1}x_{k}=[\underline x_{k}^T(1),\underline x_{k}^T(2)]^T\\
&F_{k}^{-1}\hat x_{k,i}=[\underline {\hat x}_{k,i}^T(1),\underline {\hat x}_{k,i}^T(2)]^T,
\end{align*}
where $\underline x_{k}(2)\in\mathbb{R}^{\bar s_k}$, satisfying $\tilde D_{k}\underline x_{k}(2)= d_k$.
Besides, we define the  global state equality constraint set
\begin{align}\label{eq_space}
\mathbb{S}^{\bar s_k}\triangleq\{x\in\mathbb{R}^{\bar s_k}|\tilde D_{k}x= d_k\}.
\end{align}
Then define the constraint estimation error  
\begin{align}\label{def_error}
\underline e_{k,i}\triangleq \underline {\hat x}_{k,i}(2)-\underline x_{k}(2).
\end{align}
	Suppose $a_{ij,s}$ is the $(i,j)$th element of $\mathcal{A}^s$, where $\mathcal{A}$ is the adjacency matrix. Then, we obtain the following theorem, which shows the benefits of SEC and fusion-projection number $L$ to the dynamic estimation performance of TPDKF. 
	\begin{theorem}\label{prop_const} 
		\textbf{(Fusion-projection Compression)}
		Consider the system (\ref{system_all})-(\ref{system_constriants})  satisfying Assumption \ref{ass_topology}. For  TPDKF in Table \ref{ODKF2} with given fusion-projection number $L\geq 1$,  
				\begin{align}\label{eq_ineq}
				&E\{e_{k,i} e_{k,i}^T\}\\
				\leq &\left(\sum_{j\in\mathcal{V}}a_{ij,L}\tilde{P}_{k,j}^{-1}+\sum_{s=0}^{L-1}\left[\sum_{j\in \mathcal{V}}a_{ij,s}\frac{D_{k,j}^TD_{k,j}}{\varepsilon_j}\right]\right)^{-1}.\nonumber
				\end{align}
	Furthermore, $\forall\epsilon_0>0$, there exists an $L_0$, such that for $L\geq L_0,$
		\begin{align}\label{eq_small}
	\lambda_{max}\left(E\{\underline e_{k,i}\underline e_{k,i}^T\}\right)< \epsilon_0,
		\end{align}					
	where $e_{k,i}\triangleq\hat x_{k,i}-x_{k}$ is the estimation error of agent $i$, and $\underline e_{k,i}$ is the constraint estimation error defined in (\ref{def_error}).
	\end{theorem}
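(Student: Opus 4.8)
The plan is to collapse the recursive fusion--projection loop into a single closed-form expression for $(\tilde P_{k,i}^{L})^{-1}$; both inequalities then follow, (\ref{eq_ineq}) immediately and (\ref{eq_small}) after a space decomposition that isolates the constraint directions.

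For (\ref{eq_ineq}): by Lemma~\ref{thm_pro} applied at $l=L-1$ the output estimate is consistent, so $E\{e_{k,i}e_{k,i}^{T}\}=E\{(\tilde x_{k,i}^{L}-x_{k})(\tilde x_{k,i}^{L}-x_{k})^{T}\}\le\tilde P_{k,i}^{L}=P_{k,i}$, and it suffices to compute $(\tilde P_{k,i}^{L})^{-1}$. The local-fusion step gives $(\check P_{k,i}^{l})^{-1}=\sum_{j\in\mathcal N_{i}}a_{i,j}(\tilde P_{k,j}^{l})^{-1}$, while the Woodbury identity applied to $\tilde P_{k,i}^{l+1}=\check P_{k,i}^{l}-\check P_{k,i}^{l}D_{k,i}^{T}(D_{k,i}\check P_{k,i}^{l}D_{k,i}^{T}+\varepsilon_{i}I)^{-1}D_{k,i}\check P_{k,i}^{l}$ gives $(\tilde P_{k,i}^{l+1})^{-1}=(\check P_{k,i}^{l})^{-1}+\frac{D_{k,i}^{T}D_{k,i}}{\varepsilon_{i}}$, with Lemma~\ref{lem_singular} guaranteeing invertibility throughout. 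Combining, $(\tilde P_{k,i}^{l+1})^{-1}=\sum_{j\in\mathcal V}a_{i,j}(\tilde P_{k,j}^{l})^{-1}+\frac{D_{k,i}^{T}D_{k,i}}{\varepsilon_{i}}$, and an induction on $l$ using $\sum_{j}a_{i,j}a_{jm,s}=a_{im,s+1}$, $a_{im,0}=(I_{N})_{im}$ and $\tilde P_{k,j}^{0}=\tilde P_{k,j}$ yields $(\tilde P_{k,i}^{L})^{-1}=\sum_{j\in\mathcal V}a_{ij,L}\tilde P_{k,j}^{-1}+\sum_{s=0}^{L-1}\sum_{j\in\mathcal V}a_{ij,s}\frac{D_{k,j}^{T}D_{k,j}}{\varepsilon_{j}}$, i.e. precisely the matrix inverted on the right of (\ref{eq_ineq}).

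For (\ref{eq_small}): let $G_{k}^{(2)}\in\mathbb R^{\bar s_{k}\times n}$ be the bottom $\bar s_{k}$ rows of $F_{k}^{-1}$, so that $\underline e_{k,i}=\underline{\hat x}_{k,i}(2)-\underline x_{k}(2)=G_{k}^{(2)}e_{k,i}$ and, by (\ref{eq_ineq}), $E\{\underline e_{k,i}\underline e_{k,i}^{T}\}\le G_{k}^{(2)}P_{k,i}(G_{k}^{(2)})^{T}=[\,F_{k}^{-1}P_{k,i}F_{k}^{-T}\,]_{22}=\bigl[\bigl(F_{k}^{T}P_{k,i}^{-1}F_{k}\bigr)^{-1}\bigr]_{22}$, where $[\cdot]_{22}$ is the lower-right $\bar s_{k}\times\bar s_{k}$ block. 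Under Assumption~\ref{ass_topology}, $\mathcal A$ is a primitive row-stochastic matrix (irreducible since $\mathcal G$ is strongly connected, aperiodic since $a_{i,i}>0$), hence $(\mathcal A^{N-1})_{pq}>0$ for all $p,q$, and because each $\mathcal A^{t}$ is row-stochastic, $a_{ij,s}\ge\gamma:=\min_{p,q}(\mathcal A^{N-1})_{pq}>0$ for all $i,j$ and all $s\ge N-1$. Discarding the remaining nonnegative terms in the closed form above, for $L\ge N$ one gets $P_{k,i}^{-1}\ge a_{ii,L}\tilde P_{k,i}^{-1}+(L-N+1)\frac{\gamma}{\bar\varepsilon}D_{k}^{T}D_{k}$, with $\bar\varepsilon:=\max_{j}\varepsilon_{j}$ and $D_{k}=[D_{k,1}^{T},\dots,D_{k,N}^{T}]^{T}$. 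Conjugating by $F_{k}$ and using $\bar D_{k}F_{k}=[0,\tilde D_{k}]$ together with the facts that $D_{k}$ and $\bar D_{k}$ have the same row space (both cut out $\mathcal D_{k}$) and $\mathrm{rank}\,D_{k}=\bar s_{k}$, one finds $F_{k}^{T}D_{k}^{T}D_{k}F_{k}=blockdiag\{0,\Sigma_{k}\}$ with $\Sigma_{k}\succ0$; hence $F_{k}^{T}P_{k,i}^{-1}F_{k}\ge blockdiag\bigl\{\delta_{L}I,\ \delta_{L}I+(L-N+1)\frac{\gamma}{\bar\varepsilon}\lambda_{\min}(\Sigma_{k})I\bigr\}$ for some $\delta_{L}>0$ (say $\delta_{L}=a_{ii,L}\lambda_{\min}(F_{k}^{T}\tilde P_{k,i}^{-1}F_{k})>0$, where $\lambda_{\min}$ denotes the smallest eigenvalue). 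Inverting this block-diagonal lower bound, $\bigl[\bigl(F_{k}^{T}P_{k,i}^{-1}F_{k}\bigr)^{-1}\bigr]_{22}\le\frac{\bar\varepsilon}{(L-N+1)\gamma\lambda_{\min}(\Sigma_{k})}I$, so $\lambda_{max}(E\{\underline e_{k,i}\underline e_{k,i}^{T}\})\le\frac{\bar\varepsilon}{(L-N+1)\gamma\lambda_{\min}(\Sigma_{k})}\to0$ as $L\to\infty$, and any $L_{0}>N-1+\frac{\bar\varepsilon}{\gamma\lambda_{\min}(\Sigma_{k})\epsilon_{0}}$ works.

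The Woodbury step and the bookkeeping with powers of $\mathcal A$ are routine; the delicate point is showing that, after the coordinate change by $F_{k}$, the accumulated constraint term is \emph{strictly} positive definite on the $\bar s_{k}$-dimensional constraint subspace — this is exactly where $\bar D_{k}F_{k}=[0,\tilde D_{k}]$ with $\tilde D_{k}$ nonsingular and $\mathrm{rank}\,D_{k}=\mathrm{rank}\,\bar D_{k}=\bar s_{k}$ are used, since they force $\Sigma_{k}\succ0$. If in addition $L_{0}$ is to be chosen independently of $k$, it remains to verify that $\lambda_{\min}(\Sigma_{k})$ is bounded away from zero uniformly in $k$, which follows from the standing bounds $0<\varpi I\le\bar D_{k}\bar D_{k}^{T}\le I$ and the uniform non-singularity of the decomposition matrix $F_{k}$.
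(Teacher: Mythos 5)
Your proof of the first inequality is exactly the paper's argument: Woodbury on the modified projection gives $(\tilde P_{k,i}^{l+1})^{-1}=(\check P_{k,i}^{l})^{-1}+\varepsilon_i^{-1}D_{k,i}^TD_{k,i}$, the CI step gives $(\check P_{k,i}^{l})^{-1}=\sum_{j}a_{i,j}(\tilde P_{k,j}^{l})^{-1}$, induction over $l$ produces the closed form, and consistency (Lemma~\ref{thm_pro}) converts the bound on $\tilde P_{k,i}^{L}$ into a bound on $E\{e_{k,i}e_{k,i}^T\}$. For the second claim you follow the same space-decomposition strategy as the paper but execute it more carefully: where the paper passes through a loose bound of the form $F_k^{-1}\tilde P_{k,i}F_k^{-T}\le n\,diag\{\lambda_j\}$ and then asserts that the constraint-subspace eigenvalues become small, you instead lower-bound $F_k^TP_{k,i}^{-1}F_k$ by an explicit block-diagonal matrix whose $(2,2)$ block grows like $(L-N+1)\gamma\lambda_{\min}(\Sigma_k)/\bar\varepsilon$, using primitivity of $\mathcal A$ ($a_{ij,s}\ge\gamma$ for $s\ge N-1$) and the fact that $D_k$ and $\bar D_k$ share a row space so that $F_k^TD_k^TD_kF_k=blockdiag\{0,\Sigma_k\}$ with $\Sigma_k\succ0$. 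This buys you an explicit $O(1/L)$ decay and a concrete formula for $L_0$ (which the paper only obtains later, in Theorem~\ref{thm_consistent}), and your closing remark about uniformity of $L_0$ in $k$ correctly identifies the one point the theorem statement leaves ambiguous and the paper's proof does not address. I find no gap in your argument.
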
	
	\begin{proof}
 Using the matrix inverse formula, one can obtain that $(\tilde P_{k,i}^{L})^{-1}=\sum_{j\in \mathcal{N}_{i}}a_{i,j}((\tilde P_{k,j}^{L-1})^{-1}+\frac{1}{\varepsilon_j}D_{k,j}^TD_{k,j})$.
 Recursively utilizing  the equation for $L-1$ steps, it then follows $(\tilde P_{k,i}^{L})^{-1}=\sum_{j\in \mathcal{V}}a_{ij,L}\tilde P_{k,j}^{-1}+S_{i,L}$,
 where $S_{i,L}=\sum_{s=0}^{L-1}\sum_{j\in \mathcal{V}}a_{ij,s}\frac{1}{\varepsilon_j}D_{k,j}^TD_{k,j}$,  $a_{ij,s}$ is the $(i,j)$th element of $\mathcal{A}^s,s\in\mathbb{Z}^+$, and  $a_{ij,0}=0,j\in \mathcal{N}_{i}^0,\, a_{ii,0}=1.$   
 According to the consistency in Lemma \ref{thm_pro}, $E\{e_{k,i} e_{k,i}^T\}=E\{\tilde e_{k,i}^L(\tilde e_{k,i}^L) ^T\}\leq \tilde P_{k,i}^{L}$. Thus, the conclusion in (\ref{eq_ineq}) holds.

    Given $\forall\epsilon_0>0$, we aim to prove the conclusion in (\ref{eq_small}). 
   According to the consistency in Lemma \ref{thm_pro}, 
   \begin{align}\label{eq_relax1}
   &F_{k}^{-1}E\{e_{k,i}e_{k,i}^T\}F_{k}^{-T}\leq F_{k}^{-1}\tilde P_{k,i}F_{k}^{-T}.
   \end{align}
  Considering $F_{k}^{-1} \tilde P_{k,i}F_{k}^{-T}$, it holds that $F_{k}^{-1} \tilde P_{k,i}F_{k}^{-T}\leq n diag\{\lambda_j,j=1,\cdots,n\}$, where $\lambda_j$ are the eigenvalues of  $F_{k}^{-1} \tilde P_{k,i}F_{k}^{-T}$.
    According to  Assumption \ref{ass_topology} and graph theory \cite{horn2012matrix, varga2009matrix}, $a_{ij,s}>0$ for $s\geq N-1$. Then, in light of $(\ref{eq_ineq})$, there exists an $L_0$, such that for 
    $L\geq L_0$, the eigenvalues of  $F_{k}^{-1} \tilde P_{k,i}F_{k}^{-T}$ corresponding to  the subspace related to constraint set $\mathbb{S}^{\bar s_k}$ are smaller than $\rho_0=\frac{\epsilon_0}{n}$. Thus,
 \begin{align}\label{eq_relax2}
 F^{-1} \tilde P_{k,i}F^{-T}
 \leq &  \left( \begin{array}{cc}
 n\lambda_{0}I_{n-s_k} & 0 \\
 0 & n\rho_0 \\
 \end{array}
 \right).
 \end{align}   
where $\lambda_{0}$  is the biggest eigenvalue of  $F_{k}^{-1} \tilde P_{k,i}F_{k}^{-T}$ in the rest subspace.
 According to (\ref{eq_relax1}) and (\ref{eq_relax2}),  
  the estimation error in the global constraint set  satisfies 
  		\begin{align}
  		\lambda_{max}\left(E\{\underline e_{k,i}\underline e_{k,i}^T\}\right)< n\rho_0=\epsilon_0.
  		\end{align}	
Then the conclusion in (\ref{eq_small}) holds.
	\end{proof}
According to Theorem \ref{prop_const}, the following two corollaries can be obtained.
\begin{corollary}\label{coro_1}
	Under Assumption \ref{ass_topology}, for the proposed TPDKF in Table \ref{ODKF2}, if $rank(\bar D_{k}^T\bar D_{k})=n$, then  $\forall\epsilon_0>0$, there exists an $L_0$, such that for $L\geq L_0,$					
	\begin{align}
\lambda_{max}\left(E\{e_{k,i} e_{k,i}^T\}\right)<\epsilon_0, \forall k\geq 0, \forall i\in\mathcal{V}.
	\end{align}
\end{corollary}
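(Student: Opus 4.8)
The plan is to obtain Corollary~\ref{coro_1} as the specialization of Theorem~\ref{prop_const} to the degenerate case in which the global constraint subspace is the whole state space, so that the constraint estimation error $\underline e_{k,i}$ coincides (up to a bounded change of basis) with the full estimation error $e_{k,i}$. First I would note that $rank(\bar D_{k}^T\bar D_{k})=n$ forces $\bar s_k=n$: since $\bar D_k\in\mathbb{R}^{\bar s_k\times n}$ is of full row rank, $\bar D_k^T\bar D_k$ has rank $\bar s_k$, and $\bar s_k\leq n$, so $rank(\bar D_k^T\bar D_k)=n$ can hold only if $\bar s_k=n$; in particular $\bar D_k$ is then a nonsingular $n\times n$ matrix and $\mathcal{D}_k$ is a singleton. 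Consequently, in the space decomposition preceding Theorem~\ref{prop_const}, the complementary block has dimension $n-\bar s_k=0$: the component $\underline x_k(1)$ is vacuous, $F_k^{-1}x_k=\underline x_k(2)$ and $F_k^{-1}\hat x_{k,i}=\underline{\hat x}_{k,i}(2)$, whence $F_k^{-1}e_{k,i}=\underline e_{k,i}$ and $e_{k,i}=F_k\underline e_{k,i}$.

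Next I would pin down a uniform bound on the decomposition matrices. With $\bar s_k=n$, the defining identity $\bar D_kF_k=[0^{\bar s_k\times(n-\bar s_k)},\tilde D_k]$ reduces to $\bar D_kF_k=\tilde D_k$; choosing $\tilde D_k=I_n$ gives $F_k=\bar D_k^{-1}$, so that $\lambda_{max}(F_kF_k^T)=\lambda_{max}((\bar D_k^T\bar D_k)^{-1})=1/\lambda_{min}(\bar D_k\bar D_k^T)\leq 1/\varpi$ by the standing normalization $0<\varpi I_{\bar s_k}\leq\bar D_k\bar D_k^T\leq I_{\bar s_k}$. Hence $\lambda_{max}(F_kF_k^T)\leq1/\varpi$ for every $k$.

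Then I would invoke Theorem~\ref{prop_const} with the tolerance $\varpi\epsilon_0$ in place of $\epsilon_0$: this yields an $L_0$ such that for all $L\geq L_0$, all $k\geq0$ and all $i\in\mathcal{V}$, $\lambda_{max}(E\{\underline e_{k,i}\underline e_{k,i}^T\})<\varpi\epsilon_0$. Combining this with $e_{k,i}=F_k\underline e_{k,i}$ and the elementary inequality $\lambda_{max}(F_kMF_k^T)\leq\lambda_{max}(F_kF_k^T)\lambda_{max}(M)$ for positive semidefinite $M$, I obtain, for all $L\geq L_0$,
\begin{align*}
\lambda_{max}\!\left(E\{e_{k,i}e_{k,i}^T\}\right)
&=\lambda_{max}\!\left(F_kE\{\underline e_{k,i}\underline e_{k,i}^T\}F_k^T\right)\\
&\leq\lambda_{max}(F_kF_k^T)\,\lambda_{max}\!\left(E\{\underline e_{k,i}\underline e_{k,i}^T\}\right)\\
&\leq\frac{1}{\varpi}\cdot\varpi\epsilon_0=\epsilon_0,
\end{align*}
which is exactly the asserted bound.

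The only delicate point — and the step I expect to be the main obstacle — is ensuring that the $L_0$ delivered by Theorem~\ref{prop_const} is uniform in $k$; this is really a feature of that theorem's proof rather than of the corollary. There, the shrinkage of $\tilde P_{k,i}^L$ in the constraint directions is driven by $S_{i,L}=\sum_{s=0}^{L-1}\sum_{j\in\mathcal{V}}a_{ij,s}\varepsilon_j^{-1}D_{k,j}^TD_{k,j}$, whose growth with $L$ is governed by the powers $\mathcal{A}^s$ of the time-invariant adjacency matrix: strong connectedness gives $a_{ij,s}>0$, in fact bounded below by a $k$-independent constant, for all $s\geq N-1$, so the coefficients $\sum_{s=0}^{L-1}a_{ij,s}$ grow at least linearly in $L$ at a rate that does not depend on $k$. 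Since here $rank(\bar D_k^T\bar D_k)=n$, the matrix $\sum_{j}\varepsilon_j^{-1}D_{k,j}^TD_{k,j}=D_k^TD_k$ acts on all of $\mathbb{R}^n$, so the per-step compression factor in~(\ref{eq_ineq}) stays uniformly bounded away from the identity; hence $L_0$ depends only on $\epsilon_0$, the graph, the parameters $\varepsilon_j$, and the conditioning constant $\varpi$. Granting this uniformity, the corollary follows at once from the displayed chain of inequalities.
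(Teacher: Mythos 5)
Your proof is correct and follows essentially the same route as the paper, which states Corollary \ref{coro_1} as an immediate consequence of Theorem \ref{prop_const} without spelling out the details: when $\bar s_k=n$ the constraint subspace is all of $\mathbb{R}^n$, so the bound on $\underline e_{k,i}$ transfers to $e_{k,i}$ through the (uniformly conditioned) change of basis $F_k=\bar D_k^{-1}$. Your explicit handling of $\lambda_{max}(F_kF_k^T)\leq 1/\varpi$ and of the $k$-uniformity of $L_0$ fills in precisely the steps the paper leaves implicit.
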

	
\begin{corollary}\label{coro_2}
		Under Assumption \ref{ass_topology}, for the proposed TPDKF in Table \ref{ODKF2},	the estimates $\hat{x}_{k,i} $ asymptotically satisfy the global constraints, i.e., 
	\begin{align}\label{prop_eq_pro}
	\lim\limits_{L\rightarrow +\infty} dis(\hat x_{k,i},P_{\mathcal{D}_{k}}[\hat x_{k,i}])=0, \forall k\geq 1,
	\end{align}
	where $\mathcal{D}_{k}=\bigcap_{i\in\mathcal{V}} \mathcal{D}_{k,i}$ and $\mathcal{D}_{k,i}=\{x_k|D_{k,i}x_k=d_{k,i}\}$.
\end{corollary}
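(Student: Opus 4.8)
The plan is to reduce the distance from $\hat x_{k,i}$ to the global constraint set $\mathcal{D}_{k}$ to the magnitude of the constraint estimation error $\underline e_{k,i}$ defined in (\ref{def_error}), and then to invoke Theorem~\ref{prop_const}, which already shows that the second moment of $\underline e_{k,i}$ can be made arbitrarily small by enlarging $L$. Throughout I would work in the space decomposition $u=F_{k}^{-1}x$ set up before Theorem~\ref{prop_const}, using that $F_{k}$ is bounded and that $\mathcal{D}_{k}=\bigcap_{i\in\mathcal{V}}\mathcal{D}_{k,i}=\{x\in\mathbb{R}^{n}:\bar D_{k}x=d_{k}\}$. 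The hypothesis here (Assumption~\ref{ass_topology}) is exactly what Theorem~\ref{prop_const} already requires, so no new assumption enters.

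First I would translate $\mathcal{D}_{k}$ into the new coordinates. Since $\bar D_{k}F_{k}=[0^{\bar s_k\times(n-\bar s_k)},\tilde D_{k}]$ with $\tilde D_{k}$ nonsingular, writing $F_{k}^{-1}x=[u^{T}(1),u^{T}(2)]^{T}$ gives $x\in\mathcal{D}_{k}\iff\tilde D_{k}u(2)=d_{k}\iff u(2)=\underline x_{k}(2)$, i.e. $F_{k}^{-1}\mathcal{D}_{k}=\mathbb{R}^{n-\bar s_k}\times\{\underline x_{k}(2)\}$ (in particular $\mathbb{S}^{\bar s_k}$ reduces to the single point $\underline x_{k}(2)$). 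Hence, with $F_{k}^{-1}\hat x_{k,i}=[\underline{\hat x}_{k,i}^{T}(1),\underline{\hat x}_{k,i}^{T}(2)]^{T}$, the point $z_{k,i}:=F_{k}[\underline{\hat x}_{k,i}^{T}(1),\underline x_{k}^{T}(2)]^{T}$ lies in $\mathcal{D}_{k}$ and $\hat x_{k,i}-z_{k,i}=F_{k}[0,\underline e_{k,i}^{T}]^{T}$. Using $dis(\hat x_{k,i},P_{\mathcal{D}_{k}}[\hat x_{k,i}])=dis(\hat x_{k,i},\mathcal{D}_{k})\le\|\hat x_{k,i}-z_{k,i}\|$, this produces the key bound
\[
dis\big(\hat x_{k,i},P_{\mathcal{D}_{k}}[\hat x_{k,i}]\big)\le\|F_{k}\|\,\|\underline e_{k,i}\|.
\]

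Then I would finish with Theorem~\ref{prop_const}: for every $\epsilon_{0}>0$ there is $L_{0}$ such that $L\ge L_{0}$ forces $\lambda_{max}(E\{\underline e_{k,i}\underline e_{k,i}^{T}\})<\epsilon_{0}$, hence $E\{\|\underline e_{k,i}\|^{2}\}=tr(E\{\underline e_{k,i}\underline e_{k,i}^{T}\})\le\bar s_{k}\epsilon_{0}$; letting $\epsilon_{0}\to0$ gives $\underline e_{k,i}\to0$ in mean square as $L\to\infty$. Combining this with the displayed bound and the finiteness of $\|F_{k}\|$ yields $E\{dis(\hat x_{k,i},P_{\mathcal{D}_{k}}[\hat x_{k,i}])^{2}\}\to0$, which is (\ref{prop_eq_pro}).

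The one genuine subtlety, and what I expect to be the main obstacle, is the mode of convergence: $\hat x_{k,i}$ and therefore $dis(\hat x_{k,i},P_{\mathcal{D}_{k}}[\hat x_{k,i}])$ are random, whereas Theorem~\ref{prop_const} controls only the covariance of $\underline e_{k,i}$, so the limit in (\ref{prop_eq_pro}) is naturally read in the mean-square (equivalently in-probability) sense, and a deterministic or almost-sure statement does not follow from the cited results alone. A secondary technical point is that $F_{k}$ is not orthogonal, so the factor $\|F_{k}\|$ must be kept (and, for a matching lower estimate, $\|F_{k}^{-1}\|$); both are finite because $F_{k}$ is bounded and $0<\varpi I_{\bar s_k}\le\bar D_{k}\bar D_{k}^{T}\le I_{\bar s_k}$.
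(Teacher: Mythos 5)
Your proposal is correct and follows essentially the same route the paper intends: the paper derives Corollary~\ref{coro_2} directly from Theorem~\ref{prop_const} via the space decomposition $F_k^{-1}x=[\underline x_k^T(1),\underline x_k^T(2)]^T$, which is exactly the reduction $dis(\hat x_{k,i},\mathcal{D}_k)\le\|F_k\|\,\|\underline e_{k,i}\|$ that you make explicit. Your observation that the limit in (\ref{prop_eq_pro}) can only be justified in the mean-square (or in-probability) sense from the covariance bound of Theorem~\ref{prop_const} is well taken --- the paper leaves the mode of convergence unstated, and your reading is the one the cited results actually support.
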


	\begin{remark}
		Theorem \ref{prop_const} and Corollary \ref{coro_1} show the influence of fusion-projection number $L$ to the dynamic estimation performance of the filter in terms of upper bound of the error covariance. Corollary \ref{coro_2} show the global equality constraints
		will be asymptotically satisfied with the increasing of $L$.		
	\end{remark}
	\begin{remark}
		For $rank(\bar D_{k}^T\bar D_{k})=n$, all the elements of system state become deterministic, and the overall state can	be fully determined by the global SEC according to Corollary \ref{coro_1}. 
	\end{remark}

	The analysis  of upper boundedness of the error covariance matrix is one of the most difficult issues in the distributed estimation due to the time-varying dynamics and the strong correlation between the estimation errors of agents. In the following theorem, we provide the conclusion on mean square boundedness of estimation error under quite mild conditions.		
	\begin{theorem}\label{thm_consistent}
		(\textbf{Mean square boundedness}) 
		Consider the system (\ref{system_all})-(\ref{system_constriants}) satisfying Assumptions  \ref{ass_topology}-\ref{ass_observable}. For TPDKF, if  $L\geq 1$ and $\bar D_{k}=\bar D$ with rank $\bar s>0$, there exists a  matrix $ P_1>0$ and a matrix $P_2\geq 0$ with rank $\bar s$, such that for $\forall i\in \mathcal{V}$,
		\begin{equation}\label{thm_compare}
		\sup_{k\geq N+\bar N} E\{e_{k,i}e_{k,i}^T\}\leq  \left(P_1+LP_2\right)^{-1}<\infty.   
		\end{equation}
Furthermore,   there exists a scalar $\rho_1>0$, such that for $L\geq 1$, 
\begin{align}
	\lambda_{max}\left(\sup_{k\geq N+\bar N} E\{\underline e_{k,i}\underline e_{k,i}^T\}\right)< \frac{\rho_1}{L}, \forall i\in\mathcal{V},
\end{align}
where $e_{k,i}\triangleq\hat x_{k,i}-x_{k}$ is the estimation error of agent $i$, and  $\underline e_{k,i}$ is the constraint estimation error defined in (\ref{def_error}).
	\end{theorem}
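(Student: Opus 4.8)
The plan is to reduce the theorem to a uniform lower bound on the information matrix $(\tilde P_{k,i}^{L})^{-1}$, from which both assertions follow. By the consistency statement in Lemma~\ref{thm_pro} (with $l=L-1$), $E\{e_{k,i}e_{k,i}^{T}\}\le P_{k,i}=\tilde P_{k,i}^{L}$, so it suffices to exhibit $P_1>0$ and a positive semidefinite $P_2$ of rank $\bar s$ with $(\tilde P_{k,i}^{L})^{-1}\ge P_1+LP_2$ for all $i\in\mathcal{V}$, all $L\ge1$ and all $k\ge N+\bar N$; inverting that matrix inequality and taking $\sup_k$ then gives (\ref{thm_compare}). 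As a first step I would recall from the proof of Theorem~\ref{prop_const} the telescoped identity $(\tilde P_{k,i}^{L})^{-1}=\sum_{j\in\mathcal{V}}a_{ij,L}\tilde P_{k,j}^{-1}+S_{i,L}$ with $S_{i,L}=\sum_{s=0}^{L-1}\sum_{j\in\mathcal{V}}a_{ij,s}\varepsilon_j^{-1}D_{j}^{T}D_{j}$, noting that $S_{i,L}$ is now independent of $k$ because $\bar D_{k}=\bar D$ forces every local $D_{k,j}=D_{j}$ to be time-invariant; and I would further unfold $\tilde P_{k,j}^{-1}=\bar P_{k,j}^{-1}+H_{k,j}^{T}R_{k,j}^{-1}H_{k,j}$ with $\bar P_{k,j}^{-1}=(A_{k-1}P_{k-1,j}A_{k-1}^{T}+Q_{k-1})^{-1}$ recursively in $k$, observing that by Lemma~\ref{lem_singular} the projection steps only add information and may therefore be discarded when a lower bound is sought.

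The $L$-dependent part comes from $S_{i,L}$. Strong connectivity (Assumption~\ref{ass_topology}) makes every entry of $\mathcal{A}^{s}$ bounded below by some $\underline a>0$ for all $s\ge N-1$, so for $L\ge N$ one gets $S_{i,L}\ge(L-N+1)\underline a(\min_{j}\varepsilon_{j}^{-1})\sum_{j\in\mathcal{V}}D_{j}^{T}D_{j}$. Since $\bar D$ has full row rank $\bar s$ with $\varpi I_{\bar s}\le\bar D\bar D^{T}$, the matrix $\sum_{j}D_{j}^{T}D_{j}$ dominates $c_{0}\Pi$ for some $c_{0}>0$, where $\Pi=\bar D^{T}(\bar D\bar D^{T})^{-1}\bar D$ is the orthogonal projector onto $\mathrm{range}(\bar D^{T})$, of dimension exactly $\bar s$. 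Taking $P_{2}:=\delta\Pi$ with $\delta>0$ small (so rank $\bar s$, positive semidefinite), the inequality $(\tilde P_{k,i}^{L})^{-1}\ge S_{i,L}\ge 2(L-N)P_{2}$ is valid for every $L\ge1$, the range $1\le L<N$ being trivial since the right-hand side is then negative semidefinite.

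The hard step, which I expect to be the main obstacle, is to produce a constant $c_{1}>0$ — independent of $i$, of $L\ge1$ and of $k\ge N+\bar N$ — with $\sum_{j\in\mathcal{V}}a_{ij,L}\tilde P_{k,j}^{-1}\ge c_{1}I_{n}$. This is the distributed, SEC-aware analogue of the classical ``uniform observability implies bounded error covariance''. I would unroll the recursion of the first paragraph over the super-interval $[k-N-\bar N,k]$ and combine three facts: (i) over $\bar N$ consecutive steps the ECO condition (\ref{Observability_matrix}) makes the aggregate of all agents' measurement information together with the constraint information $\bar D^{T}\bar D$, each conjugated by the appropriate $\Phi_{j,\cdot}$, at least $\alpha I_{n}$; (ii) over the remaining at most $N-1$ steps the positivity of the entries of $\mathcal{A}^{s}$ for $s\ge N-1$ (with $L\ge1$) guarantees each of these contributions has reached agent $i$ with a strictly positive weight; (iii) the bounds $\beta_{2}I\le A_{k}A_{k}^{T}\le\beta_{1}I$ and $Q_{k}\le\overline Q I$ bound the multiplicative attenuation incurred when an information term is carried through at most $N+\bar N$ prediction steps, via an estimate of the form $(APA^{T}+Q)^{-1}\ge\kappa\,A^{-T}P^{-1}A^{-1}$ uniform over bounded $P$. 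The delicate point — where the matrix-approximation technique mentioned in the Introduction enters — is to keep the $\Phi$-conjugations produced by the prediction steps aligned with those appearing inside (\ref{Observability_matrix}) so that the ECO lower bound survives, and to verify that the attenuation over a window of fixed length $N+\bar N$ stays bounded even when the system is expansive ($\beta_{1}>1$); a one-step induction fails here, which is precisely why the observability window and the super-interval of length $N+\bar N$ are needed.

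Finally I would assemble: adding $(\tilde P_{k,i}^{L})^{-1}\ge c_{1}I_{n}$ and $(\tilde P_{k,i}^{L})^{-1}\ge 2(L-N)P_{2}$ gives $(\tilde P_{k,i}^{L})^{-1}\ge\frac12 c_{1}I_{n}+(L-N)P_{2}=(\frac12 c_{1}I_{n}-NP_{2})+LP_{2}$; choosing $\delta$ small enough that $\frac12 c_{1}I_{n}-NP_{2}>0$ (possible since $\Pi\le I_{n}$) and setting $P_{1}:=\frac12 c_{1}I_{n}-NP_{2}$, we obtain $(\tilde P_{k,i}^{L})^{-1}\ge P_{1}+LP_{2}$, hence (\ref{thm_compare}). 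For the refined bound on $\underline e_{k,i}$ (defined in (\ref{def_error})) I would use the space decomposition preceding the theorem with $F_{k}\equiv F$ and $\tilde D_{k}\equiv\tilde D$ (legitimate because $\bar D$ is constant): then $\underline e_{k,i}=[\,0_{\bar s\times(n-\bar s)},\,I_{\bar s}\,]F^{-1}e_{k,i}$, so $E\{\underline e_{k,i}\underline e_{k,i}^{T}\}\le[\,0,\,I_{\bar s}\,]\,\big(F^{T}(P_{1}+LP_{2})F\big)^{-1}\,[\,0,\,I_{\bar s}\,]^{T}$. Since $\bar D F=[\,0,\,\tilde D\,]$ with $\tilde D$ nonsingular, $F^{T}P_{2}F=\mathrm{diag}(0_{n-\bar s},G)$ for some $G>0$, whence $F^{T}(P_{1}+LP_{2})F\ge\mathrm{diag}(\mu_{1}I_{n-\bar s},(\mu_{1}+Lg_{0})I_{\bar s})$ for constants $\mu_{1},g_{0}>0$; inverting and reading off the lower-right block gives $E\{\underline e_{k,i}\underline e_{k,i}^{T}\}\le(\mu_{1}+Lg_{0})^{-1}I_{\bar s}$, so $\lambda_{max}\big(\sup_{k\ge N+\bar N}E\{\underline e_{k,i}\underline e_{k,i}^{T}\}\big)<\rho_{1}/L$ with $\rho_{1}:=g_{0}^{-1}$, uniformly over $i$, which is the second assertion.
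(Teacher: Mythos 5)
Your proposal follows essentially the same route as the paper's proof: consistency (Lemma~\ref{thm_pro}) reduces the claim to lower-bounding $(\tilde P_{k,i}^{L})^{-1}$, the CI/projection telescoping gives $\sum_{j}a_{ij,L}\tilde P_{k,j}^{-1}+S_{i,L}$, the time recursion is then unrolled over a window of $N+\bar N$ steps using the $\eta$-attenuation lemma, strong connectivity ($a_{ij,s}\ge a_{min}>0$ for $s\ge N-1$) and the ECO condition to produce the uniform positive-definite term, the $L$ copies of the constraint information supply the rank-$\bar s$ term $LP_2$, and the space decomposition under $F$ yields the $\rho_1/L$ bound. The ``hard step'' you flag is resolved in the paper exactly as you anticipate (factoring the conjugations through $\Phi_{k,k-N-\bar N}$ so that the ECO window survives with a fixed attenuation of order $\eta^{N+\bar N}$ and a $\beta_1$-dependent factor), and your explicit choices $P_2=\delta\,\bar D^{T}(\bar D\bar D^{T})^{-1}\bar D$ and the block computation for the second assertion are, if anything, more concrete than the paper's corresponding steps.
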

	\begin{proof}
 Considering the consistency of TPDKF in Lemma \ref{thm_pro}, we turn to prove  the  upper boundedness of $P_{k,i}$ in TPDKF.	
		We consider $k\geq N+\bar N$. 
		By exploiting the matrix inverse formula on $P_{ k,i}$ and $\check P_{ k,i}$, respectively, we obtain
		\begin{align}\label{proof_stability13}
		P_{ k,i}^{-1}	=&\sum_{j\in \mathcal{V}}a_{ij,L}\tilde P_{k,j}^{-1}+\sum_{l=0}^{L-1}\sum_{j\in \mathcal{V}}a_{ij,l}\frac{D_{k,j}^TD_{k,j}}{\varepsilon_j}\\
		\geq&  \eta\sum_{j\in  \mathcal{V}}a_{ij,L}A_{ k-1}^{-T}P_{ k-1,j}^{-1} A_{ k-1}^{-1}\nonumber\\
		&+\sum_{j\in  \mathcal{V}} a_{ij,L}S_{ k,j}+\sum_{l=0}^{L-1}\sum_{j\in \mathcal{V}}a_{ij,l}\frac{D_{k,j}^TD_{k,j}}{\varepsilon_j},\nonumber
		\end{align}
		where $S_{k,i}=H_{ k,i}^TR_{k,i}^{-1}H_{ k,i}$, and
		$0< \eta<1$, which is obtained similarly to Lemma 1 in \cite{Battistelli2014Kullback} by noting the upper boundedness of   $Q_{k}$ and lower boundedness of $A_{k}A_{ k}^T$.
		By recursively applying  (\ref{proof_stability13}) for $\tilde L=N+\bar N$ times, one has
		\begin{align*}
		P_{ k,i}^{-1} 		\geq&\eta^{\tilde L}\Phi_{ k,k-\tilde L}^{-T}\left[\sum_{j\in \mathcal{V}} a_{ij,\tilde L\times L}P_{k-\tilde L,j}^{-1} \right] \Phi_{ k,k-\tilde L}^{-1}+
		\breve{P}_{ k,i}^{-1},
		\end{align*}
		where
			\begin{align}\label{proof_stability40}
			\breve{P}_{ k,i}^{-1}			=&
			\sum_{s=1}^{\tilde L}\eta^{s-1}\Phi_{ k, k-s+1}^{-T}
			\Bigg[
			\sum_{j\in \mathcal{V}} \big(a_{ij,s\times L}S_{ k-s+1,j}\\
			&\qquad+\sum_{l=0}^{L-1}a_{ij,s\times l }\frac{D_{ k-s+1,j}^TD_{ k-s+1,j}}{\varepsilon_j}
			\big)\Bigg] \Phi_{ k, k-s+1}^{-1},	\nonumber	
			\end{align}	
and $\Phi_{k,j}$ is the state transition matrix, 	$a_{ij,s}$ is the $(i,j)$th element of $\mathcal{A}^s$.
		According to  Assumption \ref{ass_topology} and graph theory \cite{horn2012matrix, varga2009matrix}, $a_{ij,s}>0$ for $s\geq N-1$.
		We consider $\breve{P}_{ k,i}^{-1}$.  From (\ref{proof_stability40}), one can obtain 
		\begin{align}\label{proof_stability4}
		\breve{P}_{ k,i}^{-1}
		\geq& a_{min}\eta^{\tilde L-1}\sum_{s=N}^{\tilde L}\Phi_{ k, k-s+1}^{-T}
		\Bigg[
		\sum_{j\in \mathcal{V}} \big(S_{ k-s+1,j}\nonumber\\
		&\qquad+L\frac{D_{ k-s+1,j}^TD_{ k-s+1,j}}{\varepsilon_j}
		\big)\Bigg] \Phi_{ k, k-s+1}^{-1},	\nonumber\\
		=& a_{min}\eta^{\tilde L-1}\sum_{s=N}^{\tilde L}\Phi_{ k, k-s+1}^{-T}
		\Bigg[H_{k-s+1}^T R_{k-s+1}^{-1}H_{k-s+1}\nonumber\\
		&+LD_{k-s+1}^T\Upsilon_{k-s+1}^{-1}D_{k-s+1}\Bigg] \Phi_{ k, k-s+1}^{-1},
		\end{align}	
		where $a_{min}=\min_{i,j\in \mathcal{V}}{a_{ij,s}>0,s\in [N:\tilde L\times L]}$ and
		$\Upsilon_{k}=blockdiag\{\varepsilon_1I_{s_{k,1}},\varepsilon_2I_{s_{k,2}},\ldots,\varepsilon_NI_{s_{k,N}}\}$.
		Under Assumption \ref{ass_observable}, for the given $\{\varepsilon_i,i\in\mathcal{V}\}$ and $\tilde L= N+\bar N$, there exists an $\bar \alpha>0$, such that
		\begin{equation*}
		\begin{split}
		&\sum_{s=N}^{\tilde L}\Phi_{ k, k-s+1}^{-T}
		\Bigg[H_{k-s+1}^T R_{k-s+1}^{-1}H_{k-s+1}\nonumber\\
		&+D_{k-s+1}^T\Upsilon_{k-s+1}^{-1}D_{k-s+1}\Bigg] \Phi_{ k, k-s+1}^{-1}\geq \bar\alpha I_{n}.
		\end{split}
		\end{equation*}

Denote $P_3=\bar\alpha a_{min}\eta^{N+\bar N-1}I_n$.		
Due to $0<\beta_2\leq A_{k}A_{k}^T\leq \beta_1 I_n$ and global constraint matrix $\bar D_{k}=\bar D$ with rank $\bar s$, then there exists a constant matrix $P_2$ with rank $\bar s$ subject to $0\leq P_2< P_3$, such that
\begin{align*}
\sum_{s=N}^{\tilde L}\Phi_{ k, k-s+1}^{-T}D_{k-s+1}^T\Upsilon_{k-s+1}^{-1}D_{k-s+1}\Phi_{ k, k-s+1}^{-1}\geq P_2\geq 0.
\end{align*}
Let $P_1=P_3-P_2> 0$. 	Therefore, for $L\geq 1$ and $k\geq N+\bar N$, 	$\breve{P}_{ k,i}^{-1}\geq P_3+(L-1)P_2\geq P_1+LP_2$. 
Considering the consistency and $P_{k,i}^{-1}\geq \breve{P}_{ k,i}^{-1}$, the conclusion of (\ref{thm_compare}) holds.

Due to $\bar D_{k}=\bar D$, for the global constraint matrix $\bar D\in\mathbb{R}^{\bar s\times\bar s}$, it holds that $\bar D\bar D^T\geq \varpi I_{\bar s}$.
Thus, from (\ref{proof_stability4}) and  we have
\begin{align}\label{eq_relax3}
\breve{P}_{ k,i}^{-1}\geq a_{min}\eta^{\tilde L-1} \left(\bar\alpha I_{n}+(L-1)P(\varpi)\right).
\end{align}
where $P(\varpi)$ is a diagonal matrix consisting of $n-\bar s$ zeros and $\bar s$ elements of $\varpi$.
Considering the state constraint set $\mathbb{S}^{\bar s_k}=\mathbb{S}^{\bar s}$ defined in (\ref{eq_space}),
it is similar as the proof of Theorem \ref{prop_const} to obtain
$E\{\underline e_{k,i}\underline e_{k,i}^T\}\leq \frac{\rho_1}{L},$ $\forall k\geq  N+\bar N$, where $\rho_1$ is easily obtained through (\ref{eq_relax3}).
	\end{proof}
\begin{remark}
	1) 
	Theorem \ref{thm_consistent} shows the SEC can  relax either the collective observability assumptions given in \cite{Battistelli2014Kullback,Battistelli2015Consensus,Battistelli2016stability,Dm2015Distributed,wang2017convergence} or the local observability conditions given in \cite{olfati2009kalman,stankovic2009consensus,yang2014stochastic}.
		
	2) Theorem \ref{thm_consistent} reveals the influence of fusion-projection number $L$ to the  boundedness of error covariance and shows that the bound of the mean square estimation error in the constraint set can be  arbitrarily small by setting a sufficiently large fusion-projection number $L$.
\end{remark}
For the initial biased estimates resulted from information asymmetry of agents,  we can obtain Theorem \ref{thm_noisefree} which depicts the asymptotic unbiasedness of state estimate for the proposed TPDKF. The following lemma on matrix approximation will help to prove Theorem \ref{thm_noisefree}.
\begin{lemma}\label{lem_ineq}
	Suppose $\varPi_{0}$ and $\varPi_{1}$ satisfy  $0\leq\varPi_{0}\leq \varPi_{1}$ and $\varPi_{1}>0$, then  $\varPi_{0}\varPi_{1}^{-1}\varPi_{0}\leq\varPi_{0}$.
\end{lemma}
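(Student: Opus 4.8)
The plan is to reduce the claimed inequality to the elementary scalar fact that $t^2\le t$ for $t\in[0,1]$, lifted to matrices by a congruence transformation. Since $\varPi_{1}>0$, it admits a symmetric positive definite square root $\varPi_{1}^{1/2}$ with inverse $\varPi_{1}^{-1/2}$. First I would set $M:=\varPi_{1}^{-1/2}\varPi_{0}\varPi_{1}^{-1/2}$, which is symmetric positive semidefinite because $\varPi_{0}\ge 0$, and which additionally satisfies $M\le I$: congruence by $\varPi_{1}^{-1/2}$ preserves the Loewner order, so $\varPi_{0}\le\varPi_{1}$ becomes $\varPi_{1}^{-1/2}\varPi_{0}\varPi_{1}^{-1/2}\le \varPi_{1}^{-1/2}\varPi_{1}\varPi_{1}^{-1/2}=I$.

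The key step is then $M^{2}\le M$. This follows from $M-M^{2}=M^{1/2}(I-M)M^{1/2}\ge 0$, using $0\le M\le I$ and that congruence by $M^{1/2}$ keeps $I-M\ge 0$ positive semidefinite. Applying the congruence by $\varPi_{1}^{1/2}$ to both sides gives $\varPi_{1}^{1/2}M^{2}\varPi_{1}^{1/2}\le \varPi_{1}^{1/2}M\varPi_{1}^{1/2}$. Finally I would identify the two sides with the quantities in the statement: $\varPi_{1}^{1/2}M\varPi_{1}^{1/2}=\varPi_{0}$, while $\varPi_{1}^{1/2}M^{2}\varPi_{1}^{1/2}=\big(\varPi_{1}^{1/2}M\varPi_{1}^{1/2}\big)\varPi_{1}^{-1}\big(\varPi_{1}^{1/2}M\varPi_{1}^{1/2}\big)=\varPi_{0}\varPi_{1}^{-1}\varPi_{0}$, so that $\varPi_{0}\varPi_{1}^{-1}\varPi_{0}\le \varPi_{0}$, as desired.

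There is no serious obstacle here; the only point worth flagging is that $\varPi_{0}$ may be singular (indeed, in the intended application via Lemma \ref{lem_singular} the projected covariances carry zero eigenvalues), so one cannot simply cancel a factor of $\varPi_{0}$. Routing everything through the congruence by $\varPi_{1}^{-1/2}$, which uses only invertibility of $\varPi_{1}$, sidesteps this entirely. An alternative one-line argument, which I would mention as a remark, is a Schur-complement observation: $\varPi_{0}-\varPi_{0}\varPi_{1}^{-1}\varPi_{0}$ is the Schur complement of the block $\varPi_{1}$ in $\left(\begin{smallmatrix}\varPi_{1}&\varPi_{0}\\ \varPi_{0}&\varPi_{0}\end{smallmatrix}\right)$, and this block matrix is positive semidefinite since it equals $\left(\begin{smallmatrix}\varPi_{1}-\varPi_{0}&0\\ 0&0\end{smallmatrix}\right)+\left(\begin{smallmatrix}I\\ I\end{smallmatrix}\right)\varPi_{0}\left(\begin{smallmatrix}I&I\end{smallmatrix}\right)\ge 0$ by $\varPi_{1}-\varPi_{0}\ge 0$ and $\varPi_{0}\ge 0$; I would nonetheless carry out the square-root proof as the primary line since it is the most transparent and self-contained.
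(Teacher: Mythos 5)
Your proof is correct, but it takes a genuinely different route from the paper's. The paper uses a perturbation-and-limit argument: it defines $\varPi=\varPi_{0}+\lambda(\varPi_{1}-\varPi_{0})$ for $0<\lambda<1$, notes that $0<\varPi\leq\varPi_{1}$ so that $\varPi_{1}^{-1}\leq\varPi^{-1}$, deduces $\varPi\varPi_{1}^{-1}\varPi\leq\varPi\varPi^{-1}\varPi=\varPi$ by congruence, and then lets $\lambda\to 0^{+}$ to recover the statement for the possibly singular $\varPi_{0}$. You instead work directly with the square root of $\varPi_{1}$, reduce to $0\leq M\leq I\Rightarrow M^{2}\leq M$ for $M=\varPi_{1}^{-1/2}\varPi_{0}\varPi_{1}^{-1/2}$, and conjugate back; your Schur-complement remark is also a valid one-line alternative. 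Both arguments are sound and of comparable length. The paper's approach buys you a proof that never invokes matrix square roots (only operator anti-monotonicity of the inverse on positive definite matrices plus continuity), which is consistent with its self-described ``matrix approximation method'' reused elsewhere (e.g., in the proof of Lemma \ref{lem_singular}); your approach buys a fully finite argument with no limiting step, and makes the underlying scalar inequality $t^{2}\leq t$ on $[0,1]$ transparent. Your explicit flag that $\varPi_{0}$ may be singular, so one cannot cancel a factor of $\varPi_{0}$, is exactly the point the paper's $\lambda$-perturbation is designed to handle.
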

\begin{proof}
	We utilize a matrix approximation method to finish the proof.
	Define $\varPi=\varPi_{0}+\lambda(\varPi_{1}-\varPi_{0})$, where $0<\lambda<1$. Then under the conditions of this lemma, it is easy to obtain that $0<\varPi\leq \varPi_{1}$. Hence, we have $\varPi_{1}^{-1}\leq \varPi^{-1}$, which leads to $\varPi\varPi_{1}^{-1}\varPi\leq \varPi$. Let $\lambda\rightarrow 0^+$, and we  obtain 
$\varPi_{0}\varPi_{1}^{-1}\varPi_{0}\leq\varPi_{0}$.	
\end{proof}

\begin{theorem}\label{thm_noisefree}
			(\textbf{Asymptotically unbiased}) Consider the system (\ref{system_all})-(\ref{system_constriants}) satisfying Assumptions  \ref{ass_topology}-\ref{ass_observable}. For TPDKF, if the fusion-projection number $L\geq1$, 
the state estimate is asymptotically unbiased with an exponential rate, i.e., there exist constants $b>0$ and $0<\varrho<1$, such that
	\begin{equation}\label{zero_covergence}
	\|E\{\hat x_{k,i}-x_{k}\}\|_2\leq b\varrho^k,  
	\end{equation}
	where $ k\in\mathbb{N}, i\in \mathcal{V}.$
\end{theorem}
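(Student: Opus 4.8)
The plan is to track the propagation of the mean error $\bar\mu_{k,i} \triangleq E\{\hat x_{k,i}-x_k\}$ through the four stages of TPDKF (prediction, measurement update, fusion, projection) and show it contracts geometrically. Since the process and measurement noises are zero-mean, the measurement update is affine-unbiased in the sense that $E\{\tilde x_{k,i}-x_k\} = (I-K_{k,i}H_{k,i})A_{k-1}\bar\mu_{k-1,i}$; the fusion step is a convex combination in the (deterministic, data-independent after noting Proposition \ref{thm_distri} gives Gaussianity but we only need the means) weights $\check P_{k,i}^l a_{i,j}(\tilde P_{k,j}^l)^{-1}$, so $E\{\check x_{k,i}^l - x_k\} = \check P_{k,i}^l\sum_{j\in\mathcal N_i} a_{i,j}(\tilde P_{k,j}^l)^{-1} E\{\tilde x_{k,j}^l - x_k\}$; and the projection step, because $x_k$ itself satisfies $D_{k,i}x_k = d_{k,i}$, maps $E\{\check x_{k,i}^l - x_k\}$ to $(I - \check P_{k,i}^l D_{k,i}^T(D_{k,i}\check P_{k,i}^l D_{k,i}^T)^- D_{k,i})E\{\check x_{k,i}^l - x_k\}$, i.e. an orthogonal-type projection in the $\check P_{k,i}^l$ inner product, which is norm-non-increasing in the associated weighted norm. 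Stacking the per-agent means into $\bar\mu_k = [\bar\mu_{k,1}^T,\dots,\bar\mu_{k,N}^T]^T$, one obtains a linear recursion $\bar\mu_k = T_k \bar\mu_{k-1}$ where $T_k$ is a product of a block-diagonal "Kalman gain" factor, $L$ copies of a (weighted) stochastic fusion matrix, and $L$ projection factors.

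The key step is to bound the product over one observability window of length $\tilde L = N+\bar N$. First I would pass to a suitable weighted norm: by Lemma \ref{thm_pro} the pair $(\hat x_{k,i},P_{k,i})$ (hence each intermediate pair) is consistent, and by Lemma \ref{thm_consistent} the matrices $P_{k,i}$, and more importantly the $\breve P_{k,i}$ appearing in its proof, are uniformly bounded below by $P_1 + LP_2 > 0$ for $k \ge N+\bar N$; uniform upper bounds follow since $\tilde P_{k,i} \le \bar P_{k,i} \le \beta_1 P_{k-1,i} + \overline Q$ and $\check P_{k,i}^l \le$ a convex combination of these, so all the $P$-type matrices live in a fixed compact set of positive-definite matrices. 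In the norm $\|v\|_{k,i}^2 = v^T (\tilde P_{k,i}^l)^{-1} v$, the measurement update is a strict contraction whenever $S_{k,i} = H_{k,i}^T R_{k,i}^{-1} H_{k,i} > 0$, with contraction factor bounded away from $1$; the projection is a contraction (strict whenever $D_{k,i}\ne 0$) because it kills the component of the error along the range of $\check P_{k,i}^l D_{k,i}^T$; and fusion is non-expansive since it is an averaging in a compatible metric — this is where I would invoke a discrete-time consensus-type argument together with strong connectivity (Assumption \ref{ass_topology}, giving $a_{ij,s}>0$ for $s \ge N-1$, exactly as used in the proofs of Theorems \ref{prop_const} and \ref{thm_consistent}).

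Then, over a window of $\tilde L = N+\bar N$ steps, Assumption \ref{ass_observable} (ECO) guarantees that for every agent $i$ the cumulative information $\sum_{s}\Phi^T(H^T R^{-1} H + D^T \Upsilon^{-1} D)\Phi \ge \bar\alpha I_n$ — combined with strong connectivity this forces the error of each agent to have been "contracted in every direction" at least once within the window (either through its own innovation/constraint, or, via fusion, through some agent that did). Quantitatively this yields a uniform bound $\|\bar\mu_{k+\tilde L}\| \le \gamma \|\bar\mu_k\|$ with $\gamma \in (0,1)$ independent of $k$, after translating between the time-varying weighted norms and the Euclidean norm using the uniform two-sided bounds on the $P$-matrices (this introduces a fixed conditioning constant $c \ge 1$). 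Iterating over $\lceil k/\tilde L\rceil$ windows gives $\|\bar\mu_k\| \le c\,\gamma^{\lfloor k/\tilde L\rfloor}\|\bar\mu_0\| \le b\varrho^k$ with $\varrho = \gamma^{1/\tilde L} \in (0,1)$ and $b = c\gamma^{-1}\|\bar\mu_0\|$, which is exactly \eqref{zero_covergence}; the initialization bound $P_{0,i}\ge(1+\theta_i)P_0 + \tfrac{\theta_i+1}{\theta_i}P_{0,i}^*$ ensures $\bar\mu_0$ is finite and consistency holds at $k=0$.

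The main obstacle I anticipate is making the "fusion is non-expansive while update/projection are strictly contracting, and over one window everybody gets hit" argument fully rigorous in the \emph{time-varying, directed-graph} setting: the natural norm $v^T(\tilde P_{k,i}^l)^{-1}v$ changes at every substep, and fusion mixes vectors measured in different metrics, so one must either fix a single reference metric and absorb the metric mismatch into the uniform conditioning constant $c$, or work directly with the lower-bound matrix $\breve P_{k,i}^{-1}$ from the proof of Theorem \ref{thm_consistent} and Lemma \ref{lem_ineq} to control the "loss" in the fusion step. I would handle it by using Lemma \ref{lem_ineq} in the form: if $E\{\tilde e\,\tilde e^T\}\le \tilde P$ then the projected/fused mean satisfies $\bar\mu^T \Xi \bar\mu$ decreasing for an appropriate $\Xi$ tied to $\breve P^{-1}$, reducing everything to the already-established positive-definite lower bound $P_1 + LP_2$ and the spectral-gap estimates for $\mathcal A^s$.
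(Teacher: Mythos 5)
Your skeleton (propagate the mean error through prediction, update, fusion, projection; work in $P^{-1}$-weighted quadratic forms; use consistency and the uniform boundedness from Theorem \ref{thm_consistent} to convert back to the Euclidean norm) matches the paper's proof. But the mechanism you propose for the strict contraction is different from the paper's, and it is precisely the step you leave unproven. You try to extract the geometric decay from observability: measurement/projection steps contract "in the observed directions," and over a window of length $\tilde L=N+\bar N$ the ECO condition plus strong connectivity should force every direction of every agent's mean error to be contracted, giving $\|\bar\mu_{k+\tilde L}\|\le\gamma\|\bar\mu_k\|$ with $\gamma<1$. As you yourself note, making this rigorous in a time-varying, directed, metric-mixing setting is the hard part, and the proposal only asserts the quantitative bound $\gamma<1$ rather than deriving it. Moreover, your intermediate claim that the measurement update is "a strict contraction whenever $S_{k,i}>0$ with factor bounded away from $1$" does not help here: under collective (not local) observability, $H_{k,i}^TR_{k,i}^{-1}H_{k,i}$ is generically rank deficient, so each individual update is only non-expansive, which is exactly why the windowed argument becomes delicate.

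The paper avoids all of this by locating the strict contraction in the \emph{prediction} step, not in the observability. Since $Q_{k}\ge\underline Q>0$ and $A_kA_k^T\ge\beta_2 I_n$, Lemma 1 of \cite{Battistelli2014Kullback} gives a uniform $\varrho\in(0,1)$ with $\bar P_{k,i}^{-1}\le\varrho\,A_{k-1}^{-T}P_{k-1,i}^{-1}A_{k-1}^{-1}$, so $V_{k,i}=E\{\bar e_{k,i}\}^T\bar P_{k,i}^{-1}E\{\bar e_{k,i}\}$ strictly contracts in \emph{every} direction at \emph{every} step. The measurement update, the CI fusion, and the projection are then shown to be merely non-expansive in the corresponding weighted forms — the projection via Lemma \ref{lem_ineq} applied to $P_{k,i}^{D,l}\check P_{k,i}^l$, the fusion via the Jensen-type inequality (Lemma 2 of \cite{Battistelli2014Kullback}) — yielding $V_{k+1,i}\le\varrho\sum_{j}a_{ij,L}V_{k,j}$ and hence $V_{k,i}\le\varrho^kV_{0,\max}$ with no observability window at all. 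ECO enters only through Theorem \ref{thm_consistent}, to upper-bound $\bar P_{k,i}$ uniformly so that the decay of $V_{k,i}$ implies decay of $\|E\{e_{k,i}\}\|_2$. If you incorporate this one observation (the process-noise lower bound is the source of $\varrho$), your argument collapses to the paper's and the obstacle you flag disappears; without it, the central contraction estimate of your proposal remains a genuine gap.
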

\begin{proof}
	Recall the denotations of estimation errors below Table \ref{filter_stru}.
	Consider the function $V_{k,i}(E\{\bar e_{k,i}\})=E\{\bar e_{k,i}\}^T\bar P_{k,i}^{-1}E\{\bar e_{k,i}\}$.
	According to the fact (\expandafter{\romannumeral3}) of Lemma 1 in \cite{Battistelli2014Kullback} and  $0<\underline Q\leq Q_{k}\leq \overline Q<+\infty$, there exists a $\varrho\in(0,1)$, such that
	\begin{equation}\label{iteration_V1}
	\begin{split}
	V_{k,i}(E\{\bar e_{k,i}\})\leq& \varrho\bar E\{\bar e_{k,i}\}^TA_{k}^{-T}P_{k,i}^{-1}A_{k}^{-1}E\{\bar e_{k,i}\}\\
	\leq& \varrho E\{ e_{k,i}\}^TP_{k,i}^{-1}E\{ e_{k,i}\}.
	\end{split}
	\end{equation}

	Then $E\{\tilde e_{k,i}^{l+1}\}=P_{k,i}^{D,l}\check P_{k,i}^l\sum_{j\in \mathcal{N}_{i}}a_{i,j}(\tilde P_{k,j}^l)^{-1}E\{\tilde e_{k,j}^{l}\},$
where $P_{k,i}^{D,l}= I_n-\check P_{k,i}^lD_{k,i}^T(D_{k,i}\check P_{k,i}^lD_{k,i}^T)^{-}D_{k,i}$. Employing the matrix approximation method in Lemma \ref{lem_ineq} yields $P_{k,i}^{D,l}\check P_{k,i}^l(\tilde P_{k,i}^{l+1})^{-1}P_{k,i}^{D,l}\check P_{k,i}^l\leq P_{k,i}^{D,l}\check P_{k,i}^l\leq \check P_{k,i}^l$.
Applying  Lemma 2 in \cite{Battistelli2014Kullback} to the right hand of  (\ref{iteration_V1}), one can obtain that
$E\{(\tilde e_{k,i}^{l+1})\}^T(\tilde P_{k,i}^{l+1})^{-1}E\{(\tilde e_{k,i}^{l+1})\}\leq \sum_{j\in \mathcal{N}_{i}}a_{i,j}E\{(\tilde e_{k,j}^{l})\}^T(\tilde P_{k,j}^{l})^{-1}E\{(\tilde e_{k,j}^{l})\}$.
By applying this iteration for $l=0,1,\cdots,L-1$, we have
\begin{align}\label{eq_recur2}
&E\{e_{k,i}\}^TP_{k,i}^{-1}E\{e_{k,i}\}\nonumber\\
\leq& \sum_{j\in \mathcal{V}}a_{ij,L}E\{(\tilde e_{k,j}^{0})\}^T(\tilde P_{k,j}^{0})^{-1}E\{(\tilde e_{k,j}^{0})\}\nonumber\\
=&\sum_{j\in \mathcal{V}}a_{ij,L}E\{\tilde e_{k,j}\}^T\tilde P_{k,j}^{-1}E\{\tilde e_{k,j}\}.
\end{align}
	Since $\tilde P_{k,i}=(I-K_{k,i}H_{k,i})\bar P_{k,i}$, one can get $\tilde P_{k,i}^{-1}(I-K_{k,i}H_{k,i})=\bar P_{k,i}^{-1}$.
	From TPDKF, we have $\check P_{k,i}=(\sum_{j\in \mathcal{N}_{i}}a_{i,j}\tilde P_{k,j}^{-1})^{-1}\leq (\sum_{j\in \mathcal{N}_{i}}a_{i,j}\bar P_{k,j}^{-1})^{-1}$. Considering $E\{\tilde e_{k,i}\}=(I-K_{k,i}H_{k,i})E\{\bar e_{k,i}\}$, (\ref{iteration_V1})  and (\ref{eq_recur2}), it can be obtained that
	$V_{k+1,i}(E\{\bar e_{k+1,i}\})\leq\varrho\sum_{j\in \mathcal{V}}a_{ij,L}V_{k,j}(E\{\bar e_{k+1,i}\}).$
	Then, we have 
			\begin{align}\label{V_ineq_final6}
			V_{k+1,i}(E\{\bar e_{k+1}\})&\leq \varrho^{k+1}\sum_{j\in \mathcal{V}}a_{ij,L(k+1-k^*)}V_{0,j}(E\{\bar e_{0,j}\})\nonumber\\
			&\leq \varrho^{k+1}V_{0,max}, \quad 0<\varrho<1,
			\end{align}
		where $V_{0,max}=\max\limits_{j\in\mathcal{V}}V_{0,j}(E\{\bar e_{0,j}\})>0$.	
	Under the conditions of this theorem, the conclusion of Theorem \ref{thm_consistent} holds. 
	Then  $\bar P_{k,i}$ is uniformly upper bounded due to $0<\beta_{2}I_n\leq A_{k}A_{k}^T\leq\beta_{1}I_n$ and  $Q_{k}\leq \overline Q<+\infty$. 
	Since $\bar P_{k+1,i}$ is uniformly upper bounded, i.e., there exists a constant matrices $\bar P_1>0$, such that $\bar P_{k+1,i}\leq \bar P_1.$
	Therefore,
	\begin{align}\label{ineq_1}
	V_{k+1,i}(E\{\bar e_{k+1,i}\})
	=& E\{\bar e_{k+1,i}\}^T\bar P_{k+1,i}^{-1}E\{\bar e_{k+1,i}\}\nonumber\\
	\geq& \lambda_{min}(\bar P_{1}^{-1})\|E\{\bar e_{k+1,i}\}\|_2.
	\end{align}	
	Substituting (\ref{ineq_1}) into (\ref{V_ineq_final6}) and considering  $E\{\bar e_{k+1,i}\}=A_{k}E\{e_{k,i}\}$, we can obtain
	$\|E\{e_{k,i}\}\|_2\leq \frac{ \varrho V_{0,max}}{\lambda_{min}(A_{k}^TA_{k})\lambda_{min}(\bar P_{1}^{-1})}\varrho^{k}.$
	Due to $0<\beta_{2}I_n\leq A_{k}A_{k}^T$. Let $b=\frac{\varrho V_{0,max}}{\beta_{2}\lambda_{min}(\bar P_{1}^{-1})}>0$. Then  we have $\|E\{e_{k,i}\}\|_2\leq  b\varrho^k.$ Then for $k\in\mathbb{N}$,
	$\|E\{e_{k,i}\}\|_2\leq b\varrho^k$.
\end{proof}	
\begin{corollary}\label{cor_pro}
	{\
	Under the same conditions as Theorem \ref{thm_noisefree}, $\lim\limits_{k\rightarrow +\infty} dis(E\{\hat x_{k,i}\},P_{\mathcal{D}_{k}}[E\{\hat x_{k,i}\}])=0$ with an exponential rate, where $\mathcal{D}_{k}=\bigcap_{i\in\mathcal{V}} \mathcal{D}_{k,i}$ and $\mathcal{D}_{k,i}=\{x_k|D_{k,i}x_k=d_{k,i}\}$.
	}
\end{corollary}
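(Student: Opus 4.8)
The plan is to deduce the corollary directly from the asymptotic unbiasedness established in Theorem~\ref{thm_noisefree}, using only the elementary geometry of the constraint set. The key preliminary observation is that the mean of the true state already lies in the global constraint set: since each local constraint $D_{k,i}x_k=d_{k,i}$ holds for the true trajectory, taking expectations gives $D_{k,i}E\{x_k\}=d_{k,i}$ for every $i\in\mathcal{V}$, so $E\{x_k\}\in\bigcap_{i\in\mathcal{V}}\mathcal{D}_{k,i}=\mathcal{D}_k$. Because $\mathcal{D}_k$ is a nonempty affine subspace (nonemptiness was imposed in Section~II), it is closed and convex, so the metric projection $P_{\mathcal{D}_k}[\cdot]$ is well defined and realizes the distance to the set, i.e., $dis\big(y,P_{\mathcal{D}_k}[y]\big)=\inf_{z\in\mathcal{D}_k}\|y-z\|_2$ for every $y$.

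First I would instantiate this with $y=E\{\hat x_{k,i}\}$ and use the membership $E\{x_k\}\in\mathcal{D}_k$ to obtain
\begin{align*}
dis\big(E\{\hat x_{k,i}\},P_{\mathcal{D}_k}[E\{\hat x_{k,i}\}]\big)=\inf_{z\in\mathcal{D}_k}\|E\{\hat x_{k,i}\}-z\|_2\leq \|E\{\hat x_{k,i}\}-E\{x_k\}\|_2 .
\end{align*}
Then I would invoke Theorem~\ref{thm_noisefree}, whose hypotheses coincide with those of the present corollary: it supplies constants $b>0$ and $0<\varrho<1$ such that $\|E\{\hat x_{k,i}-x_k\}\|_2=\|E\{\hat x_{k,i}\}-E\{x_k\}\|_2\leq b\varrho^k$ for all $k\in\mathbb{N}$ and $i\in\mathcal{V}$. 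Chaining the two estimates gives $dis\big(E\{\hat x_{k,i}\},P_{\mathcal{D}_k}[E\{\hat x_{k,i}\}]\big)\leq b\varrho^k$, which tends to $0$ as $k\to+\infty$ at the geometric rate $\varrho$, establishing the claim.

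I do not expect any genuine obstacle here: the corollary is an immediate corollary of the already-proved exponential convergence of $E\{\hat x_{k,i}\}$ to $E\{x_k\}$. The only point deserving an explicit line is the combination of two standard facts — that $E\{x_k\}$ inherits the equality constraints by linearity of expectation, and that projection onto the affine set $\mathcal{D}_k$ is distance-realizing — after which the exponential bound is carried over verbatim from Theorem~\ref{thm_noisefree}.
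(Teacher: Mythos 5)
Your proof is correct and follows essentially the same route as the paper: both reduce the claim to the exponential unbiasedness of Theorem~\ref{thm_noisefree} together with the fact that $E\{x_k\}\in\mathcal{D}_k$. The only difference is the elementary projection fact invoked. The paper uses the triangle inequality plus the non-expansiveness $dis(P_{\mathcal{D}}[x],P_{\mathcal{D}}[y])\leq dis(x,y)$, which yields the bound $2\,dis(E\{\hat x_{k,i}\},E\{x_k\})\leq 2b\varrho^k$; you instead use the distance-realizing characterization $dis\bigl(y,P_{\mathcal{D}_k}[y]\bigr)=\inf_{z\in\mathcal{D}_k}\|y-z\|_2$ and test with $z=E\{x_k\}$, which gives the same conclusion with the sharper constant $b\varrho^k$ (factor $1$ instead of $2$). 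Both arguments are valid; yours is marginally more direct, while the paper's non-expansiveness route is the one it explicitly cites a reference for.
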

\begin{proof}
	Because $dis(P_{\mathcal{D}}[x],P_{\mathcal{D}}[y])\leq dis(x,y)$ \cite{lou2013convergence} and  $x_{k}\in\mathcal{D}_{k}$, 
	\begin{align}
	&dis(E\{\hat x_{k,i}\},P_{\mathcal{D}_{k}}[E\{\hat x_{k,i}\}])\nonumber\\
	\leq& dis(E\{\hat x_{k,i}\},E\{x_k\})+dis(E\{x_k\},P_{\mathcal{D}_{k}}[E\{\hat x_{k,i}\}])\nonumber\\
	= &dis(E\{\hat x_{k,i}\},E\{x_k\})+dis(P_{\mathcal{D}_{k}}[E\{x_k\}],P_{\mathcal{D}_{k}}[E\{\hat x_{k,i}\}])\nonumber\\
	\leq& 2dis(E\{\hat x_{k,i}\},E\{x_k\}).\label{eq_cor}
	\end{align}
According to Theorem \ref{thm_noisefree}, $dis(E\{\hat x_{k,i}\},E\{x_k\})\rightarrow 0$ as $k\rightarrow\infty$ with an exponential rate. Therefore, the conclusion holds due to (\ref{eq_cor}).
\end{proof}

\begin{remark}
Theorem \ref{thm_noisefree}  shows the proposed filter can remove the  estimation bias resulted from initial information asymmetry at an exponentially fast rate. 
	It can be seen from Corollary \ref{cor_pro} that the mean of state estimate for each agent will eventually satisfy global  SEC.
\end{remark}

	\section{Distributed Filter with Event-triggered communications}
	In the commonly-used time-based distributed communication mechanism, the local messages of each agent may be broadcasted to its neighbors for several times between two measurement updates.
	Such a communication strategy may lead to some unnecessary communication data flow in the network, which unavoidably increases the communication  burden and energy consumption.
	In fact, it is quite necessary to employ efficient strategies to reduce communication rate and save energy, because of the practical bandwidth and energy constraints (e.g., in wireless sensor network).

	Given fusion-projection step $L\geq 1$ in Table \ref{ODKF2}, we aim to reduce the communication frequencies of the agents not sending sufficiently new information. As a result, different agents over the network may have different communication times  ranging from zero to $L$.
	In this section, for convenience, we study the fusion-projection step $L=1$. For $L>1$, the similar results can be obtained.
	To study event-triggered schemes, we focus on time-invariant systems, though the method can be extended to time-varying cases.  
	In other words, we give the following assumption for simplicity.
	
	\begin{assumption}\label{ass_A2}
		The matrices in the system (\ref{system_all})-(\ref{system_all1}) and the constraints (\ref{system_constriants}) satisfy $A_{k}=A$,  $Q_{k}=Q,H_{k,i}=H_{i},R_{k,i}=R_{i},D_{k,i}=D_{i},d_{k,i}=d_i,\forall i\in \mathcal{V}, k=1,2,\ldots$.
	\end{assumption}
	
	Then the following two lemmas are quite straightforward.
	
	\begin{lemma}
		Under Assumptions \ref{ass_observable} and \ref{ass_A2}, $[A,\bar H]$ is observable, where
		$\bar H=[H^T,D^T]^T, H= [H_{1}^T,H_{2}^T,\ldots,H_{N}^T]^T, D=[D_{1}^T,D_{2}^T,\ldots,D_{N}^T]^T.$
	\end{lemma}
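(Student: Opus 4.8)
The plan is to reduce the ECO condition of Assumption~\ref{ass_observable} to the time‑invariant data of Assumption~\ref{ass_A2}, and then convert the resulting uniform positive definiteness into the classical Kalman rank condition for the pair $(A,\bar H)$.

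First I would substitute $\Phi_{j,k}=A^{j-k}$, $H_j=H$, $R_j=R$, $D_j=D$ into (\ref{Observability_matrix}); since the condition holds for every $k\ge 0$ it suffices to take $k=0$, giving
\[
\sum_{j=0}^{\bar N}(A^{j})^{T}\bigl(H^{T}R^{-1}H+D^{T}D\bigr)A^{j}\ \ge\ \alpha I_n\ >\ 0 .
\]
Next I would remove the weighting $R^{-1}$: because $R=blockdiag\{R_1,\dots,R_N\}>0$ we have $R^{-1}\le \lambda_{max}(R^{-1})I$, hence
\[
H^{T}R^{-1}H+D^{T}D\ \le\ c\,(H^{T}H+D^{T}D)\ =\ c\,\bar H^{T}\bar H,\qquad c:=\max\{1,\lambda_{max}(R^{-1})\},
\]
and combining the two displays yields $\sum_{j=0}^{\bar N}(A^{j})^{T}\bar H^{T}\bar H A^{j}\ge (\alpha/c)I_n>0$. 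Equivalently, the stacked matrix $\mathcal{O}_{\bar N}=[\bar H^{T},(\bar H A)^{T},\dots,(\bar H A^{\bar N})^{T}]^{T}$ has full column rank $n$.

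Finally I would upgrade this to the observability matrix of order $n-1$. If $\bar N\le n-1$, the block rows of $\mathcal{O}_{\bar N}$ already appear in $\mathcal{O}_{n-1}$, so $\mathrm{rank}\,\mathcal{O}_{n-1}\ge \mathrm{rank}\,\mathcal{O}_{\bar N}=n$; if $\bar N> n-1$, the Cayley--Hamilton theorem expresses each $\bar H A^{j}$ with $j\ge n$ as a linear combination of $\bar H,\bar H A,\dots,\bar H A^{n-1}$, so deleting those blocks does not change the rank and again $\mathrm{rank}\,\mathcal{O}_{n-1}=n$. In either case $(A,\bar H)$ is observable. The proof is essentially routine; the only step needing a moment's care is the comparison between the $R$‑weighted Gramian and $\bar H^{T}\bar H$ (one may alternatively note that $H^{T}R^{-1}H$ and $H^{T}H$ share the same kernel, which gives a genuine two‑sided bound), but only the upper bound is needed here, so I do not expect a real obstacle.
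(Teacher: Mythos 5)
Your proof is correct and supplies precisely the routine argument the paper omits (the paper states this lemma without proof, declaring it ``quite straightforward''): specializing the ECO Gramian to the time-invariant data, comparing $H^{T}R^{-1}H+D^{T}D$ with $\bar H^{T}\bar H$ via $R^{-1}\le\lambda_{max}(R^{-1})I$, and invoking Cayley--Hamilton to pass from order $\bar N$ to order $n-1$ are all sound steps. This is also consistent with how the paper itself manipulates the weighted observability Gramian $G_{n}^{T}\bar R_{n}^{-1}G_{n}$ in the proof of Theorem \ref{thm_consistent2}, so no discrepancy arises.
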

	
	\begin{lemma}
		Under Assumption  \ref{ass_A2}, the system matrix $A$ is non-singular.	
	\end{lemma}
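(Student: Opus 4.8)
The plan is to recognize that this claim does not rest on Assumption \ref{ass_A2} in isolation, but on combining it with the standing bound on the system matrix imposed in the description of the dynamics (\ref{system_all}), namely $\beta_{2}I_n\leq A_{k}A_{k}^T\leq\beta_{1}I_n$ with $\beta_1,\beta_2>0$ for every $k$. First I would specialize this bound under Assumption \ref{ass_A2}: since $A_{k}\equiv A$, it becomes $\beta_{2}I_n\leq A A^T\leq\beta_{1}I_n$, and in particular $A A^T\geq \beta_{2}I_n>0$. Hence the symmetric matrix $A A^T$ is positive definite, and therefore invertible.

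Next I would deduce non-singularity of $A$ itself from positive definiteness of $A A^T$. Since $A\in\mathbb{R}^{n\times n}$ is square, $\det(A A^T)=(\det A)^2$, and positive definiteness gives $(\det A)^2=\det(A A^T)\geq \beta_{2}^{\,n}>0$, so $\det A\neq 0$. Equivalently, the estimate $\|A^T w\|_2^2=w^T A A^T w\geq \beta_{2}\|w\|_2^2$ shows that $A^T$, and hence $A$, has trivial kernel, which for a square matrix is the same as invertibility. Either route yields the conclusion in one line.

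There is essentially no obstacle here; the only point requiring care is not to search for the quantitative content inside Assumption \ref{ass_A2} itself, which supplies only time-invariance of $(A,Q,H_i,R_i,D_i,d_i)$. The invertibility margin $\beta_{2}>0$ is part of the model specification that accompanies (\ref{system_all}), and the lemma is simply its specialization to the time-invariant case. This non-singularity is what lets the subsequent event-triggered analysis of Section IV invoke inverse-transition arguments (the factors $A^{-1}$, $\Phi_{k,j}^{-1}$) exactly as in the time-varying development of Section III.
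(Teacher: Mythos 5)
Your proof is correct and matches the paper's intent: the paper states this lemma without proof (calling it ``quite straightforward''), and the only available route is exactly the one you take, namely specializing the standing bound $\beta_{2}I_n\leq A_{k}A_{k}^T$ from the model description of (\ref{system_all}) to $A_k\equiv A$ and concluding $\det(A)^2=\det(AA^T)\geq\beta_2^n>0$. Nothing further is needed.
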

	
	Different from many existing results with event-triggered schemes \cite{liu2015event}, {whose  assumptions on the upper boundedness of the error covariance matrix  are usually  related to the existence of the solutions of Riccati equations or Hurwitz stability of $A$,}
	the collective observability of $[A,\bar H]$ of the system can be verified before the implementation of algorithms.
	Clearly, the collective observability of $[A,\bar H]$ is a time-invariant version of the ECO condition in Assumption \ref{ass_observable}, which is weaker than the assumption that $(A,H)$ is observable given in \cite{Battistelli2014Kullback,Battistelli2016Distributed,Battistelli2015Consensus}.  The non-singularity of $A$ can be guaranteed through discretization from general continuous systems.
	
	\subsection{Event-triggered Communication Scheme}
	In this subsection, we will study a event-triggered communication scheme for distributed Kalman filter.
	{For  TPDKF in Table \ref{ODKF2} with $L=1$, each agent can decide whether to broadcast the message $(\tilde x_{k,j},\tilde P_{k,j})$ to its out-neighbors or not, thus we need to give a criterion to determine what message is worth broadcasting.}
	Here, we utilize the SoD regulation triggering mechanism, which is an event-triggered principle and related to the information increment.
	Different from the SoD methods in \cite{liu2015event,andren2016event} depending on some stochastic variables, a novel  SoD method
	is given in this paper. The event triggers of proposed SoD method is indeed deterministic and  based on $\tilde P_{k,j}^{-1}$, where one can judge the information increment according to its variation degree.

	Let the pair ($\tilde x_{t,j},\tilde P_{t,j}$) be the latest  message broadcasted by agent $j$ to its out-neighbors.
	Define the following triggering mechanism function for agent $j$ as $g_{k,j}(\cdot):\mathbb{R}^{n\times n}\times \mathbb{R}^{n\times n}\times \mathbb{R}\rightarrow \mathbb{R}$:
	\begin{equation}\label{triggering}
	g_{k,j}(\tilde P_{k,j},\bar{\tilde{P}}_{k,j},\delta_{j})=\lambda_{max}\big(\tilde P_{k,j}^{-1}-\bar{\tilde{P}}_{k,j}^{-1}\big)-\delta_{j},j\in\mathcal{V},
	\end{equation}
	where  $\delta_{j}\geq0$ is the triggering threshold of agent $j$ which is usually predefined, and
	\begin{equation}\label{eq_pred_matrix}
	\bar{\tilde{P}}_{k,j}\triangleq A^{k-t}\tilde P_{t,j} (A^{k-t})^T+\sum_{l=t}^{k-1} A^{k-1-l}Q(A^{k-1-l})^T
	\end{equation}
	is the multi-step prediction matrix.

	The event for agent $j$ at time $k$ is triggered if $ g_{k,j}> 0$. In other words, if $ g_{k,j}> 0$, agent $j$ broadcasts its information message to its out-neighbors at time $k$.
	Suppose that the event is triggered at the initial time.  Define the pair $(\tilde x_{k,j}^t,\tilde P_{k,j}^t)$ as
	\begin{equation}\label{estimates_lattest}
	(\tilde x_{k,j}^t,\tilde P_{k,j}^t)=
	\begin{cases}
	\big(A^{k-t}\tilde x_{t,j},\bar{\tilde{P}}_{k,j}\big),&\text{if}\quad  g_{k,j}\leq 0,\\
	\big(\tilde x_{k,j},\tilde P_{k,j}\big),&\text{if}\quad  g_{k,j}> 0.
	\end{cases}
	\end{equation}
	If $ g_{k,j}>0$, the event for agent $j$ is triggered at this moment and $\big(\tilde x_{k,j},\tilde P_{k,j}\big)$ is broadcasted to the out-neighbors of agent $j$, and then each out-neighbor receives $(\tilde x_{k,j}^t,\tilde P_{k,j}^t)=\big(\tilde x_{k,j},\tilde P_{k,j}\big)$. Otherwise, these out-neighbors obtain no information of agent $j$ at this moment and they make a multi-step prediction using the latest received pair  ($\tilde x_{t,j},\tilde P_{t,j}$) to obtain $(\tilde x_{k,j}^t,\tilde P_{k,j}^t)=\big(A^{k-t}\tilde x_{t,j},\bar{\tilde{P}}_{k,j}\big)$.
	Based on the above discussion,  we propose an event-triggered projected distributed Kalman filter (EPDKF) in Table \ref{ODKF3}.
	\begin{table}[htp]
		\caption{Event-triggered Projected Distributed Kalman Filter (EPDKF)}
		\label{ODKF3}	
		\centering  
		\begin{tabular}{l}  
			\hline\hline
			\textbf{Initialization:}\\
			{
			$(\hat x_{0,i},P_{0,i})$ is satisfied with $P_{0,i}\geq (1+\theta_i)P_0+\frac{\theta_i+1}{\theta_i}P_{0,i}^*$,$\exists \theta_i>0, $}\\
			{where $P_{0,i}^*\geq (\hat x_{0,i}-E\{x_0\})(\hat x_{0,i}-E\{x_0\})^T$,}\\
			\textbf{Input:}\\	
			$(\hat x_{k-1,i},P_{k-1,i},\varepsilon_i,\delta_i)$,  \\
			\textbf{Prediction:}\\
			$\bar x_{k,i}=A\hat x_{k-1,i},$\\  
			$\bar P_{k,i}=AP_{k-1,i}A^T+Q,$\\         
			\textbf{Measurement Update:}\\
			$\tilde x_{k,i}=\bar x_{k,i}+K_{k,i}(y_{k,i}-H_{i}\bar x_{k,i}),$\\        
			$K_{k,i}=\bar P_{k,i}H_{i}^T(H_{i}\bar P_{k,i}H_{i}^T+R_{i})^{-1},$\\
			$\tilde P_{k,i}=(I-K_{k,i}H_{i})\bar P_{k,i}$,\\
			\textbf{Local Fusion:}Obtain the pair ($\tilde x_{k,j}^t$, $\tilde P_{k,j}^t$ ) through  (\ref{estimates_lattest})\\
			$\check x_{k,i}=\check P_{k,i}\bigg(a_{i,i}\tilde P_{k,i}^{-1}\tilde x_{k,i}+\sum_{j\in \mathcal{N}_{i}^0}a_{i,j} (\tilde P_{k,j}^t)^{-1}\tilde x_{k,j}^t\bigg)$,\\	
			$\check P_{k,i}=\bigg(a_{i,i}\tilde P_{k,i}^{-1}+\sum_{j\in \mathcal{N}_{i}^0}a_{i,j}  (\tilde P_{k,j}^t)^{-1}\bigg)^{-1}$,	\\
			\textbf{Projection:}\\
			$\hat x_{k,i}=\check x_{k,i}-\check P_{k,i}D_{i}^T(D_{i}\check P_{k,i}D_{i}^T)^{-}(D_{i}\check x_{k,i}-d_{i}),$\\
			$P_{k,i}=\check P_{k,i}-\check P_{k,i}D_{i}^T(D_{i}\check P_{k,i}D_{i}^T+\varepsilon_i I_{s_i})^{-1}D_{i}\check P_{k,i}$,\\	
			\textbf{Output:}\\	
			$(\hat x_{k,i},P_{k,i})$.	\\
			\hline
		\end{tabular}
	\end{table}
	
	Clearly, given threshold $\delta_{i}$, the design of the filtering gain, the parameter matrices and the event-triggered scheme, simply depends  on the local available information without the global knowledge of the system or the network topology.  Hence, the proposed algorithm EPDKF is a distributed filtering algorithm.

	\subsection{Estimation Performance of EPDKF}\label{subsec_EPDKF}
	In this subsection, we will investigate the main estimation performance of EPDKF and provide  design principle for triggering thresholds.
	The following result shows the Gaussian distribution of estimation error for EPDKF in Table \ref{ODKF3}.	
	\begin{proposition}\label{thm_distri2}
	{		Consider the system (\ref{system_all})--(\ref{system_all1}) with constraints (\ref{system_constriants}). For EPDKF, the estimation error $e_{k,i}=\hat x_{k,i}-x_{k}$ is Gaussian, $\forall i\in \mathcal{V}, k=1,2,\ldots$.}
	\end{proposition}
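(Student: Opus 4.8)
The plan is to mirror the proof of Proposition~\ref{thm_distri}, arguing by induction on $k$ and tracking the joint Gaussianity of all the estimates with the primitive random variables $x_{0}$, $\{\omega_{l}\}_{l\geq 0}$ and $\{v_{l,j}\}_{l\geq 0,\,j\in\mathcal V}$, which are jointly Gaussian and independent by assumption. The decisive preliminary observation — and the reason the statement holds at all — is that every covariance-type matrix generated by EPDKF, namely $\bar P_{k,i}$, $K_{k,i}$, $\tilde P_{k,i}$, $\check P_{k,i}$, $P_{k,i}$ and the multi-step prediction matrix $\bar{\tilde P}_{k,j}$ in (\ref{eq_pred_matrix}), obeys a deterministic recursion that does not involve the measurements $y_{k,i}$. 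Hence the triggering function $g_{k,j}$ in (\ref{triggering}) is deterministic, so the set of triggering instants of each agent, and therefore the last-transmission index $t$ used in (\ref{estimates_lattest}), form a fixed non-random sequence. This is precisely what separates the proposed SoD rule from the stochastic event-triggered schemes of \cite{andren2016event,HanTacStochasEV}: here the case split in (\ref{estimates_lattest}) is resolved by a deterministic switch, so it neither mixes distributions nor destroys Gaussianity.

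Granting this, the induction is routine affine-map bookkeeping. For the base case, $\hat x_{0,i}$ is a deterministic initial guess, so $(\hat x_{0,i},x_{0},\{\omega_{l}\},\{v_{l,j}\})$ is jointly Gaussian (degenerate in the first block) and $e_{0,i}=\hat x_{0,i}-x_{0}$ is Gaussian. For the inductive step I would assume that for every $i$ the vector $\hat x_{k-1,i}$ is jointly Gaussian with the primitive variables, and then pass through the stages of Table~\ref{ODKF3}: the prediction $\bar x_{k,i}=A\hat x_{k-1,i}$ is an affine image; the measurement update $\tilde x_{k,i}=\bar x_{k,i}+K_{k,i}(H_{i}x_{k}+v_{k,i}-H_{i}\bar x_{k,i})$ is affine with deterministic $K_{k,i}$, while $x_{k}=\Phi_{k,0}x_{0}+\sum_{l=0}^{k-1}\Phi_{k,l+1}\omega_{l}$ is itself affine in the primitives; by the deterministic-switch observation $\tilde x_{k,j}^{t}$ equals either $A^{k-t}\tilde x_{t,j}$ or $\tilde x_{k,j}$, in either case a deterministically selected affine function of already-constructed Gaussian vectors; the fusion $\check x_{k,i}=\check P_{k,i}(a_{i,i}\tilde P_{k,i}^{-1}\tilde x_{k,i}+\sum_{j\in\mathcal N_{i}^{0}}a_{i,j}(\tilde P_{k,j}^{t})^{-1}\tilde x_{k,j}^{t})$ is a linear combination with deterministic coefficients, the required inverses existing by Lemma~\ref{lem_singular} specialized to $L=1$; and the projection $\hat x_{k,i}=\check x_{k,i}-\check P_{k,i}D_{i}^{T}(D_{i}\check P_{k,i}D_{i}^{T})^{-}(D_{i}\check x_{k,i}-d_{i})$ is affine in $\check x_{k,i}$. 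Therefore $\hat x_{k,i}$ is jointly Gaussian with the primitive variables, in particular with $x_{k}$, and $e_{k,i}=\hat x_{k,i}-x_{k}$ is Gaussian, closing the induction.

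The one genuine obstacle is the deterministic-switch claim, which I would establish as its own induction on $k$ carried out jointly with the main one: since $\bar{\tilde P}_{k,j}$ depends on the last transmission time $t$, which depends on past values of $g_{\cdot,j}$, which depend in turn on the past matrices $\tilde P_{\cdot,j}$, the recursion for the ``covariance'' matrices and the recursion for the triggering instants are coupled and must be resolved together, level by level in $k$; once one checks that at each level all these matrices and all the transmission indices are data-independent, the remaining steps are identical to those in the proof of Proposition~\ref{thm_distri}.
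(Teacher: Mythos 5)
Your proposal is correct and follows essentially the same route as the paper, which simply invokes the affine-map induction from the proof of Proposition~\ref{thm_distri} (prediction, update, fusion and projection errors are all deterministic affine images of jointly Gaussian primitives). Your explicit verification that the triggering function $g_{k,j}$ in (\ref{triggering}) depends only on the data-independent matrices $\tilde P_{k,j}$ and $\bar{\tilde P}_{k,j}$ --- so the switch in (\ref{estimates_lattest}) is deterministic and cannot mix distributions --- is precisely the point the paper leaves implicit, and it is the right thing to check.
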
	
	\begin{proof}
		The proof is similar to the proof of Proposition \ref{thm_distri}.
	\end{proof}	
	On the state estimates of EPDKF, we have the following lemma.	
	\begin{lemma}\label{thm_pro2}
	For EPDKF of the system (\ref{system_all})--(\ref{system_all1}) with constraints (\ref{system_constriants}),
	the state estimate $\hat x_{k,i}$  satisfies the SEC of agent $i$ at time $k$, and 
	 the pair $(\hat x_{k,i}, P_{k,i})$ is consistent.   Moreover, $ P_{k,i}\leq\check P_{k,i}, \forall i\in \mathcal{V}, k=1,2,\ldots$.
	\end{lemma}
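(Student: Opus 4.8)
The plan is to establish all three assertions together by induction on $k$, reusing the argument behind Lemma~\ref{thm_pro} (SEC and consistency for TPDKF) and adding the bookkeeping needed for the multi-step extrapolation \eqref{eq_pred_matrix} that the event-triggered rule \eqref{estimates_lattest} inserts into the local fusion. For the base case $k=0$ I would start from the deterministic bound $(\hat x_{0,i}-x_0)(\hat x_{0,i}-x_0)^T\le(1+\theta_i)(x_0-E\{x_0\})(x_0-E\{x_0\})^T+\tfrac{\theta_i+1}{\theta_i}(\hat x_{0,i}-E\{x_0\})(\hat x_{0,i}-E\{x_0\})^T$; taking expectations and using $E\{(x_0-E\{x_0\})(x_0-E\{x_0\})^T\}\le P_0$, $(\hat x_{0,i}-E\{x_0\})(\hat x_{0,i}-E\{x_0\})^T\le P_{0,i}^*$ and the initialization rule gives $E\{e_{0,i}e_{0,i}^T\}\le(1+\theta_i)P_0+\tfrac{\theta_i+1}{\theta_i}P_{0,i}^*\le P_{0,i}$. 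For the inductive step, assume $(\hat x_{k-1,i},P_{k-1,i})$ consistent for all $i$. Since $\bar e_{k,i}=Ae_{k-1,i}-\omega_{k-1}$ with $\omega_{k-1}$ temporally independent of $e_{k-1,i}$, the prediction gives $E\{\bar e_{k,i}\bar e_{k,i}^T\}\le AP_{k-1,i}A^T+Q=\bar P_{k,i}$; and since $\tilde e_{k,i}=(I-K_{k,i}H_i)\bar e_{k,i}-K_{k,i}v_{k,i}$ with $v_{k,i}$ independent of $\bar e_{k,i}$, the Kalman gain produces, via the Joseph-form identity, $E\{\tilde e_{k,i}\tilde e_{k,i}^T\}\le\tilde P_{k,i}=(\bar P_{k,i}^{-1}+H_i^TR_i^{-1}H_i)^{-1}>0$.

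The crux is to check that each pair a neighbor supplies to the local fusion is a consistent estimate of the \emph{current} state $x_k$. If agent $j$ has not triggered since its last triggering instant $t$, then $\tilde x_{k,j}^t-x_k=A^{k-t}\tilde e_{t,j}-\sum_{l=t}^{k-1}A^{k-1-l}\omega_l$ with $\tilde e_{t,j}$ independent of $\{\omega_t,\dots,\omega_{k-1}\}$; hence, using $E\{\omega_l\omega_l^T\}\le Q$ and the consistency of $(\tilde x_{t,j},\tilde P_{t,j})$ obtained when the chain was run at time $t$, one gets $E\{(\tilde x_{k,j}^t-x_k)(\tilde x_{k,j}^t-x_k)^T\}\le A^{k-t}\tilde P_{t,j}(A^{k-t})^T+\sum_{l=t}^{k-1}A^{k-1-l}Q(A^{k-1-l})^T=\bar{\tilde{P}}_{k,j}=\tilde P_{k,j}^t$. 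If agent $j$ has triggered at $k$, then $(\tilde x_{k,j}^t,\tilde P_{k,j}^t)=(\tilde x_{k,j},\tilde P_{k,j})$ is consistent by the previous paragraph. In either case $\tilde P_{k,j}^t>0$, since $A$ is non-singular and $\tilde P_{t,j}>0$. Then I would invoke the covariance-intersection consistency property---the same one used for TPDKF in Lemma~\ref{thm_pro}, valid for arbitrary unknown cross-correlations among the errors---applied with the convex weights $a_{i,i}$ and $\{a_{i,j}\}_{j\in\mathcal{N}_{i}^0}$ to the family $\{(\tilde x_{k,i},\tilde P_{k,i})\}\cup\{(\tilde x_{k,j}^t,\tilde P_{k,j}^t)\}_{j\in\mathcal{N}_{i}^0}$, concluding $E\{\check e_{k,i}\check e_{k,i}^T\}\le\check P_{k,i}$ and $\check P_{k,i}>0$.

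The projection step then closes the induction. Because $D_ix_k=d_i$, we have $D_i\check x_{k,i}-d_i=D_i\check e_{k,i}$, so $D_i\hat x_{k,i}=d_i$ (using $(D_i\check P_{k,i}D_i^T)^-=(D_i\check P_{k,i}D_i^T)^{-1}$ when $D_i\neq0$, and trivially when $D_i=0$), which is the claimed SEC; moreover $e_{k,i}=\Psi_{k,i}\check e_{k,i}$ with $\Psi_{k,i}=I-\check P_{k,i}D_i^T(D_i\check P_{k,i}D_i^T)^-D_i$, so $E\{e_{k,i}e_{k,i}^T\}\le\Psi_{k,i}\check P_{k,i}\Psi_{k,i}^T=\check P_{k,i}-\check P_{k,i}D_i^T(D_i\check P_{k,i}D_i^T)^-D_i\check P_{k,i}$. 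By the computation already carried out in Lemma~\ref{lem_singular} (with $\varepsilon_iI_{s_i}\ge0$ giving $(D_i\check P_{k,i}D_i^T+\varepsilon_iI_{s_i})^{-1}\le(D_i\check P_{k,i}D_i^T)^{-1}$), this matrix is $\le P_{k,i}\le\check P_{k,i}$, which yields both the consistency of $(\hat x_{k,i},P_{k,i})$ and $P_{k,i}\le\check P_{k,i}$ for $k\ge1$, completing the induction. I expect the main obstacle to be precisely the middle paragraph: one must verify that the extrapolated pair $(A^{k-t}\tilde x_{t,j},\bar{\tilde{P}}_{k,j})$ broadcast by a non-triggering neighbor is still consistent with respect to the \emph{present} state, which is where the temporal independence of the process noise over $[t,k)$ and the conservative over-bound $E\{\omega_l\omega_l^T\}\le Q$ built into $\bar{\tilde{P}}_{k,j}$ are both essential; once this is settled, the robustness of covariance intersection to unknown correlations lets the fusion and projection steps go through exactly as in Lemma~\ref{thm_pro}.
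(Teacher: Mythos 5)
Your proposal is correct and follows essentially the same route as the paper: the paper's proof likewise reduces everything except the fusion step to Lemmas \ref{thm_pro} and \ref{lem_singular}, and its key argument is exactly your middle paragraph — writing $\tilde e_{k,j}^t=A^{k-t}\tilde e_{t,j}-\sum_{l=t}^{k-1}A^{k-1-l}\omega_l$, using $E\{\tilde e_{t,j}\omega_l^T\}=0$ for $l\ge t$ to bound the extrapolated error by $\bar{\tilde P}_{k,j}$, and then invoking CI consistency for the fusion. Your additional spelled-out induction (initialization, prediction, update, projection) is just the detail the paper delegates to the earlier lemmas.
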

	\begin{proof}
		See the proof in Appendix \ref{pf_thm4}.
	\end{proof}

	Like TPDKF, the estimation error covariances of EPDKF in the stages of prediction, measurement update and local fusion are also upper bounded by  $\bar P_{k,i}$, $\tilde P_{k,i}$ and $\check P_{k,i}$.
	Thus, these parameter matrices can be employed to evaluate the estimation error in real time under the error distribution illustrated in Proposition \ref{thm_distri2}.  Additionally, for EPDKF,  the inverse of $\tilde P_{k,i}$  can be treated as a lower bound of information matrix, which contributes much in the design of event-triggered mechanism (\ref{estimates_lattest}). 

	{
	The following result shows  upper boundedness of the error covariance matrix of the proposed EPDKF algorithm.}
	
	\begin{theorem}\label{thm_consistent2}
		(\textbf{Mean square boundedness}) For EPDKF in Table \ref{ODKF3},	under Assumptions \ref{ass_topology} -- \ref{ass_A2},
		there exists a positive definite matrix $\hat P$, such that
		\begin{equation}\label{thm_compare22}
		0<E\{e_{k,i}e_{k,i}^T\}\leq \hat P<\infty, \quad  \forall k\geq 0, i\in \mathcal{V},
		\end{equation}
		if 	the triggering thresholds $\{\delta_{j},j\in \mathcal{V}\}$ satisfy
		\begin{equation}\label{delta_condition120}
		\begin{split}
		&\sum_{j\in \mathcal{V}}\delta_{j}M_{i,j}
		<\bar M_{i}, \forall  i\in \mathcal{V},
		\end{split}
		\end{equation}
		where
		\begin{equation}\label{delta_condition1200}
						\left\{
		\begin{aligned}
&M_{i,j}=\sum_{\tau=1}^{k^{*}}\beta^{\tau-1}a_{ij,\tau}(A^{1-\tau})^{T}A^{1-\tau}, \\
&\bar M_{i}=\sum_{\tau=1}^{k^{*}}\beta^{\tau-1}(A^{1-\tau})^{T}\tilde M_{i,j,\tau}A^{1-\tau}\\
&\tilde M_{i,j,\tau}=\sum_{j\in \mathcal{V}}(a_{ij,\tau}H_{j}^TR_{j}^{-1}H_{j}+\frac{a_{ij,\tau-1}}{\varepsilon_j }D_{j}^TD_{j}),
		\end{aligned}
		\right.
		\end{equation}
		for all $k^{*}\geq N+n$, $\beta$ given in  (\ref{P_recursive}) and
		$a_{ij,\tau}$ is the $(i,j)$th element of $\mathcal{A}^\tau$.
	\end{theorem}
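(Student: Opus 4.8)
The plan is to reduce the two-sided bound in (\ref{thm_compare22}) to a uniform \emph{positive-definite lower bound} on the information matrices $P_{k,i}^{-1}$. By the consistency established in Lemma \ref{thm_pro2}, $E\{e_{k,i}e_{k,i}^T\}\leq P_{k,i}$, so the upper bound follows as soon as $P_{k,i}$ is shown uniformly bounded; the strict positivity $0<E\{e_{k,i}e_{k,i}^T\}$ is immediate, since every measurement update acts through $R_i>0$ and every projection through $\varepsilon_iI>0$ (keeping $P_{k,i}$ positive definite) and the error is a non-degenerate Gaussian by Proposition \ref{thm_distri2}. Thus the real work is the uniform lower bound on $P_{k,i}^{-1}$.

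First I would unwind the recursions of Table \ref{ODKF3}: the measurement update and the projection step give, via the matrix inversion lemma, $\tilde P_{k,i}^{-1}=\bar P_{k,i}^{-1}+H_i^TR_i^{-1}H_i$ and $P_{k,i}^{-1}=\check P_{k,i}^{-1}+\tfrac1{\varepsilon_i}D_i^TD_i$, while by definition of the local fusion $\check P_{k,i}^{-1}=a_{i,i}\tilde P_{k,i}^{-1}+\sum_{j\in\mathcal N_i^0}a_{i,j}(\tilde P_{k,j}^t)^{-1}$; moreover $\bar P_{k,i}^{-1}\geq\beta A^{-T}P_{k-1,i}^{-1}A^{-1}$ for the constant $\beta\in(0,1)$ of (\ref{P_recursive}), obtained from $Q<\infty$ and the positive lower-boundedness of $AA^T$ exactly as in the proof of Theorem \ref{thm_consistent} (cf. Lemma 1 in \cite{Battistelli2014Kullback}).

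The key new step is to absorb the event-triggered communication. Whatever the (history-dependent) triggering pattern, (\ref{triggering})--(\ref{estimates_lattest}) force $(\tilde P_{k,j}^t)^{-1}\geq\tilde P_{k,j}^{-1}-\delta_jI_n$: if $g_{k,j}>0$ then $\tilde P_{k,j}^t=\tilde P_{k,j}$ and this is trivial since $\delta_j\geq0$; if $g_{k,j}\leq0$ then $\lambda_{max}(\tilde P_{k,j}^{-1}-\bar{\tilde P}_{k,j}^{-1})\leq\delta_j$, i.e.\ $\bar{\tilde P}_{k,j}^{-1}\geq\tilde P_{k,j}^{-1}-\delta_jI_n$, which is exactly the claimed inequality. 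Feeding this into $\check P_{k,i}^{-1}$ and combining the four relations above yields the one-step bound
\begin{equation*}
P_{k,i}^{-1}\geq\beta\sum_{j\in\mathcal N_i}a_{i,j}A^{-T}P_{k-1,j}^{-1}A^{-1}+\sum_{j\in\mathcal N_i}a_{i,j}H_j^TR_j^{-1}H_j+\tfrac1{\varepsilon_i}D_i^TD_i-\Big(\sum_{j\in\mathcal N_i}a_{i,j}\delta_j\Big)I_n .
\end{equation*}
I would then iterate this $k^*\geq N+n$ times, using that $X\mapsto A^{-T}XA^{-1}$ preserves the Loewner order, collecting the graph-power weights $a_{ij,\tau}$ (with $a_{ii,0}=1$, $a_{ij,0}=0$ for $j\neq i$), and discarding the nonnegative term $\beta^{k^*}\sum_{j\in\mathcal V}a_{ij,k^*}(A^{-k^*})^TP_{k-k^*,j}^{-1}A^{-k^*}$. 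This produces precisely $P_{k,i}^{-1}\geq\bar M_i-\sum_{j\in\mathcal V}\delta_jM_{i,j}$ for all $k\geq k^*$, with $M_{i,j}$, $\bar M_i$, $\tilde M_{i,j,\tau}$ as in (\ref{delta_condition1200}); under the threshold condition (\ref{delta_condition120}) this right-hand side is positive definite, so $P_{k,i}$ is bounded by its inverse for $k\geq k^*$. For $0\leq k<k^*$ a short induction through prediction/update/fusion/projection (using $Q<\infty$, $R_i>0$, $\varepsilon_i>0$ and $P_{k,i}\leq\check P_{k,i}$ from Lemma \ref{thm_pro2}) keeps $P_{k,i}$ finite, and choosing $\hat P$ to dominate this finite family together with all $(\bar M_i-\sum_j\delta_jM_{i,j})^{-1}$ closes the proof.

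I expect the main obstacle to be exactly the triggering step and its bookkeeping: showing that replacing the true update matrices by the SoD-thresholded, possibly multi-step-predicted $\tilde P_{k,j}^t$ costs at most an additive $\delta_jI_n$ per agent per time, uniformly over all triggering patterns, and that the accumulated loss after $k^*$ congruence-wrapped steps is exactly $\sum_j\delta_jM_{i,j}$; everything downstream is a time-invariant specialization of the inverse-covariance recursion already used for Theorem \ref{thm_consistent}. As a sanity check on the hypothesis, I would remark that (\ref{delta_condition120}) is non-vacuous: already with all $\delta_j=0$, on the window $\tau\in[N,N+n-1]$ all weights $a_{ij,\tau}$ and $a_{ij,\tau-1}$ are uniformly positive (strong connectivity, \cite{horn2012matrix,varga2009matrix}), so $\bar M_i$ dominates a positive multiple of the observability Gramian of $(A,\bar H)$, which is positive definite by the observability lemma preceding the theorem.
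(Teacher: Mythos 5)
Your proposal is correct and follows essentially the same route as the paper's proof: reduce to a lower bound on $P_{k,i}^{-1}$ via the consistency of Lemma \ref{thm_pro2}, absorb the event-triggered substitution through the pointwise bound $(\tilde P_{k,j}^t)^{-1}\geq \tilde P_{k,j}^{-1}-\delta_j I_n$ (which holds in both the triggered and the multi-step-predicted case by the definition of $g_{k,j}$), invoke Lemma 1 of \cite{Battistelli2014Kullback} for the factor $\beta$, iterate $k^*\geq N+n$ times to obtain $P_{k,i}^{-1}\geq \bar M_i-\sum_j\delta_j M_{i,j}$, and verify positivity of $\bar M_i$ through strong connectivity and the observability of $(A,\bar H)$. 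Your explicit two-case verification of the key triggering inequality and the closing sanity check make the argument, if anything, slightly more complete than the paper's own write-up.
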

	\begin{proof}
		Considering the consistency of EPDKF in Lemma \ref{thm_pro2}, we turn to prove  the  upper boundedness of $P_{k,i}$ in EPDKF.
		Given EPDKF and the triggering condition  (\ref{estimates_lattest}),  exploiting  matrix inverse formula on $P_{k,i}$ and $\check P_{k,i}$ yields
		\begin{align}\label{proof_stability213}
		P_{k,i}^{-1}&\geq a_{i,i}\tilde P_{k,i}^{-1}+\sum_{j\in \mathcal{N}_{i}^0}a_{i,j}  (\tilde P_{k,j}^t)^{-1}+\frac{1}{\varepsilon_i }D_{i}^TD_{i}\nonumber\\
		&\geq\sum_{j\in \mathcal{N}_{i}}a_{i,j} (\bar P_{k,j}^{-1}+H_{j}^TR_{j}^{-1}H_{j}-\delta_{j})+\frac{1}{\varepsilon_i }D_{i}^TD_{i}.
		\end{align}	
		By Lemma 1 in \cite{Battistelli2014Kullback}, 	there exists a real scalar $\beta\in(0,1)$ such that
		\begin{align}\label{P_recursive}
		P_{k,i}^{-1}
		&\geq\beta\sum_{j\in \mathcal{N}_{i}}a_{i,j}A^{-T}P_{k-1,j}^{-1}A^{-1}\\
		&\quad+\sum_{j\in \mathcal{N}_{i}}a_{i,j}(H_{j}^TR_{j}^{-1}H_{j}-\delta_{j})+\frac{1}{\varepsilon_i }D_{i}^TD_{i}.\nonumber
		\end{align}	
		By recursively applying the above inequality for $k^{*}$ times, we obtain
		\begin{equation*}\label{P_recursive2}
		\begin{split}
		P_{k,i}^{-1}\geq&\beta^{k^{*}}\sum_{j\in \mathcal{V}}a_{ij,k^{*}}\left[ (A^{-k^{*}})^{T}P_{k-k^{*},j}^{-1}A^{-k^{*}}\right] +\hat P_{i}^{-1}.
		\end{split}
		\end{equation*}	
		where
		\begin{align}
		\hat P_{i}^{-1}=&\sum_{\tau=1}^{k^{*}}\beta^{\tau-1}\sum_{j\in \mathcal{V}}(A^{1-\tau})^{T}\big(a_{ij,\tau}H_{j}^TR_{j}^{-1}H_{j}\nonumber\\
		&+\frac{a_{ij,\tau-1}}{\varepsilon_j }D_{j}^TD_{j}-a_{ij,\tau}\delta_{j}\big)A^{1-\tau}.\label{P_upper}
		\end{align}

If $\hat P_i^{-1}$ is positive definite, we have $P_{k,i}\leq \hat P_i$. Thus, there exists a matrix $P^*$ such that $P_{k,i}\leq P^*, \forall i\in \mathcal{V}$.
		To guarantee the positiveness of $\hat P_i^{-1}$, $\delta_{j},j\in \mathcal{V}$ can be designed such that
		$\sum_{\tau=1}^{k^{*}}\beta^{\tau-1}\sum_{j\in \mathcal{V}}a_{ij,\tau}(A^{1-\tau})^{T}\delta_{j}A^{1-\tau}
		<\check{P}_{i}^{-1}.$		
		where $\check{P}_{i}^{-1}=\sum_{\tau=1}^{k^{*}}\beta^{\tau-1}\sum_{j\in \mathcal{V}}(A^{1-\tau})^{T}(a_{ij,\tau}H_{j}^TR_{j}^{-1}H_{j}+\frac{a_{ij,\tau-1}}{\varepsilon_j }D_{j}^TD_{j})A^{1-\tau}.$
		Then the conclusion in  (\ref{delta_condition120}) is reached.
		For the existence of a positive upper bound of $\delta_{j},j\in\mathcal{V}$,  one needs to show the positiveness of $\check{P}_{i}^{-1}$. 	
		According to  Assumption \ref{ass_topology} and graph theory \cite{horn2012matrix, varga2009matrix}, we have $a_{ij,s}>0$, $s\geq N-1$. Thus,
		$\check{P}_{i}^{-1}	\geq a_{min}\beta^{k^{*}-1}(A^{-N})^{T}D_{H}A^{-N},$
		where
		\begin{align*}
		\begin{cases}
		R= blockdiag\{R_{1},\ldots,R_{N}\}\\
		a_{min}=\min{a_{ij,\tau}>0,\tau\in [N:k^{*}]}\\
		D_{H}=\sum_{l=0}^{k^{*}-N-1}(A^{-l})^{T}(H^TR^{-1}H+D^T\Upsilon^{-1}D)A^{-l}.
		\end{cases}		
		\end{align*}

		Consider the observability matrix  \\$ O_{n}=\sum_{t=0}^{n-1}(A^t)^{T}(H^TR^{-1}H+D^T\Upsilon^{-1}D)A^{t}=G_{n}^T \bar R_{n}^{-1}G_{n},$
		where $\bar R_{n}=I_{n}\otimes blockdiag\{R,\Upsilon\}$,  and
		$	G_{n}=\left[H^T,D^T,\ldots,(HA^{n-1})^T,(DA^{n-1})^T \right]^T$.
		Under Assumption \ref{ass_observable}, it easily to see that $O_{n}>0$ and $G_{n}$ is of column full rank.
		Define the matrix $F_{n}=A^{1-n}$, and then the matrix $G_{n}F_{n}$ is still of column full rank, where
		$G_{n}F_{n}=\left[(HA^{1-n})^T ,(DA^{1-n})^T  ,\ldots, H^T,D^T \right]^T.  $	
		Thus, $F_{n}^TO_{n}F_{n}=F_{n}^TG_{n}^T \bar R_{n}^{-1}G_{n}F_{n}>0$.	
		Considering the form $F_{n}^TO_{n}F_{n}$ and $D_H$,
		let $k^{*}\geq N+n$, then the positiveness of $\check{P}_{i}^{-1}$ is verified.
		For $0\leq k\leq k^{*}$, there exists a sufficiently large $P_{*}$ such that $0<P_{k,i}\leq P_{*}, \quad \forall i\in \mathcal{V}, \forall k=0,1,2,\ldots,k^{*}.$ Recall that $P_{k,i}\leq P^*, k>k^*,\forall i\in \mathcal{V}$.
		Thus, under  condition (\ref{delta_condition120}), it is straightforward to guarantee  (\ref{thm_compare22}).	
	\end{proof}

	{
	\begin{corollary}\label{cor1}
		Under the same conditions as  Theorem \ref{thm_consistent2} and  $\delta_{j}=\delta,\; j\in \mathcal{V}$, if
		\begin{equation}\label{design_interval}
		\delta
		<\lambda_{min}(M_{i}^{-\frac{1}{2}}\bar M_{i}M_{i}^{-\frac{1}{2}}), \forall  i\in \mathcal{V},
		\end{equation}	
		\item 1) there exists a positive definite matrix $\hat P_{\delta}$ such that $0<P_{k,i}\leq \hat P_{\delta}<\infty, \quad  \forall k\geq 0, \forall i\in \mathcal{V}.$
		\item 2) $\hat P_{\delta}$ is non-decreasing with respect to $\delta$, i,e., $P_{\delta_1}\leq P_{\delta_2},\delta_1\leq \delta_2,$
			with $M_{i}=\sum_{j\in \mathcal{V}}M_{i,j}$, where $M_{i,j}$ and $\bar M_{i}$ are defined in  (\ref{delta_condition1200}).
	\end{corollary}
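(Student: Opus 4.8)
The plan is to obtain Corollary \ref{cor1} as the homogeneous specialization $\delta_j=\delta$, $j\in\mathcal{V}$, of Theorem \ref{thm_consistent2}, supplemented by a monotonicity observation on the explicit bound appearing in that proof. First I would set $\delta_j=\delta$ in the triggering condition (\ref{delta_condition120}). Since $\mathcal{A}$ is row stochastic, so is every power $\mathcal{A}^\tau$, hence $\sum_{j\in\mathcal{V}}a_{ij,\tau}=1$, and therefore $M_i=\sum_{j\in\mathcal{V}}M_{i,j}=\sum_{\tau=1}^{k^{*}}\beta^{\tau-1}(A^{1-\tau})^TA^{1-\tau}$, which is positive definite because $A$ is nonsingular under Assumption \ref{ass_A2}; in particular $M_i^{-1/2}$ is well defined. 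With $\delta_j=\delta$, condition (\ref{delta_condition120}) becomes $\delta M_i<\bar M_i$, which after the congruence transformation $X\mapsto M_i^{-1/2}XM_i^{-1/2}$ reads $\delta I_n<M_i^{-1/2}\bar M_iM_i^{-1/2}$, i.e. exactly (\ref{design_interval}). Thus Theorem \ref{thm_consistent2} applies verbatim and delivers a positive definite $\hat P_\delta$ with $0<P_{k,i}\le\hat P_\delta<\infty$; invoking the consistency $E\{e_{k,i}e_{k,i}^T\}\le P_{k,i}$ of EPDKF (Lemma \ref{thm_pro2}) then gives $0<E\{e_{k,i}e_{k,i}^T\}\le\hat P_\delta$, which is part 1).

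For part 2) I would track the $\delta$-dependence of the bound constructed inside the proof of Theorem \ref{thm_consistent2}. For $k>k^{*}$ that proof gives $P_{k,i}\le\hat P_i$ with $\hat P_i^{-1}$ equal to (\ref{P_upper}) evaluated at $\delta_j\equiv\delta$, namely $\hat P_i^{-1}=\bar M_i-\delta M_i$ (using that the measurement/constraint sum there is precisely $\bar M_i=\check P_i^{-1}$). For $0\le k\le k^{*}$ I would instead use $P_{k,i}\le\check P_{k,i}$ from Lemma \ref{thm_pro2} together with $\check P_{k,i}^{-1}\ge a_{i,i}\tilde P_{k,i}^{-1}$ and $\tilde P_{k,i}\le\bar P_{k,i}=AP_{k-1,i}A^T+Q$ to obtain the recursion $P_{k,i}\le a_{i,i}^{-1}(AP_{k-1,i}A^T+Q)$, whose iterate from the fixed $P_{0,i}$ yields a bound $P_{*}$ that does not depend on $\delta$. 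Setting $\hat P_\delta:=P_{*}+\sum_{i\in\mathcal{V}}\hat P_i(\delta)$ gives a uniform upper bound valid for all $k\ge0$ and all agents. If $\delta_1\le\delta_2$ both satisfy (\ref{design_interval}), then $0<\bar M_i-\delta_2M_i\le\bar M_i-\delta_1M_i$, so $\hat P_i(\delta_1)=(\bar M_i-\delta_1M_i)^{-1}\le(\bar M_i-\delta_2M_i)^{-1}=\hat P_i(\delta_2)$ for every $i$, whence $\hat P_{\delta_1}\le\hat P_{\delta_2}$; this is part 2).

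The genuinely new content beyond Theorem \ref{thm_consistent2} is slim, so the only point that needs care is the claim that the early-time bound $P_{*}$ can be chosen independent of $\delta$: the event-triggering instants themselves shift with $\delta$, so one cannot simply argue that the iterates $P_{k,i}$ are monotone in $\delta$; the remedy is to discard the neighbour contributions in $\check P_{k,i}^{-1}$ to produce a $\delta$-free recursion for $k\le k^{*}$, and then to concentrate all of the $\delta$-monotonicity in the clean factor $\hat P_i(\delta)=(\bar M_i-\delta M_i)^{-1}$, whose positive definiteness on the whole admissible interval for $\delta$ is exactly what (\ref{design_interval}) guarantees. Everything else is the row-stochasticity bookkeeping and the congruence rewriting, both routine.
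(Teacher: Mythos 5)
Your proposal is correct and follows essentially the same route as the paper: the paper's own proof of this corollary simply states that part 1) is a direct specialization of Theorem \ref{thm_consistent2} with $\delta_j\equiv\delta$ (your congruence rewriting $\delta M_i<\bar M_i\Leftrightarrow\delta<\lambda_{\min}(M_i^{-1/2}\bar M_iM_i^{-1/2})$ is exactly the intended reduction), and that part 2) follows from the explicit $\delta$-dependence of $\hat P_i^{-1}$ in (\ref{P_upper}), i.e. the monotonicity of $(\bar M_i-\delta M_i)^{-1}$ in $\delta$. Your additional care in making the pre-$k^{*}$ bound $P_{*}$ manifestly $\delta$-free is a point the paper's one-line proof glosses over, but it does not change the argument's substance.
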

	\begin{proof}
		The first conclusion is directly obtained from Theorem \ref{thm_consistent2}. Considering (\ref{P_upper}), the second conclusion can be easily proved.
	\end{proof}
}

	{
	\begin{remark}
			The conditions (\ref{delta_condition120}) and (\ref{design_interval}) essentially provide design principles for  the triggering thresholds $\delta_{i},i\in\mathcal{V}$ to guarantee the mean square boundedness of estimation error.  Although the upper bounds (\ref{delta_condition120}) and (\ref{design_interval}), which the triggering conditions depend on, are related to the network topology and overall system, through centralized design, they can be checked  before the implementations of the filters by each agent.
	\end{remark}
}
	\begin{remark}	
		Under collective observability conditions, $\delta_{i},i\in\mathcal{V}$ cannot be set too large.  Otherwise, suppose $\delta_{i},i\in\mathcal{V}$ are sufficiently large. Then, agents will not communicate with each other, which may lead to instability of estimates under collective observability conditions.
	\end{remark}

\subsection{Communication Rate}
In the subsection \ref{subsec_EPDKF}, we have analyzed the main estimation performance of the proposed EPDKF with event-triggered communications.
Another essential aspect, namely, the communication rate of event-triggered scheme, will be studied in this subsection. 
In the following, for convenience of analysis, we suppose $\delta_{j}=\delta,\; j\in \mathcal{V}$, which means all agents share the same triggering threshold. The conclusion for $\delta_{j}\neq\delta,j\in \mathcal{V},$ can be obtained similarly. Suppose there is at most one communication between two updates for all agents, then we provide the definition of the communication rate.

\begin{definition}
	For a multi-agent system, 	 the \textbf{communication rate} $\lambda$ of a distributed filter is defined as
\begin{align}\label{eq_rate}
\lambda=1-\frac{\sum_{i\in\mathcal{V}}p_i|\mathcal{N}_{i,out}^0| }{\sum_{i\in\mathcal{V}}|\mathcal{N}_{i,out}^0|},
\end{align}
where $p_i\in[0,1]$ is the ratio of non-triggering times for agent $i$ in a time interval of interest, $|\mathcal{N}_{i,out}^0|$ is the out-degree of agent $i$ without including itself, 	 and $\mathcal{V}$ stands for the index set of all agents.
\end{definition}

From (\ref{eq_rate}), we see that the communication rate $\lambda$ is directly influenced by the ratio $p_i$ of the non-triggering event. If the triggering condition in (\ref{estimates_lattest}) is satisfied in the entire interval for all agents, i.e., $p_i=0$, then $\lambda=1$. If the triggering conditions in (\ref{estimates_lattest}) are not satisfied at any time for any agent, i.e., $p_i=1$, then $\lambda=0$.
Therefore, the definition of $\lambda$ is reasonable  to represent the communication rate during the entire interval  of interest.
Since triggering condition in (\ref{estimates_lattest}) is defined based on the triggering threshold $\delta$, we will also analyze the  relationship between $\lambda$ and $\delta$ in the following.

It is noted that since the local interactions of agents are complex and related with nonlinear operation (i.e., the inverse of matrix), it seems impossible to accurately quantify the communication rate of event-triggered distributed filters. In the subsequent, we propose a novel analysis approach to obtain the minimal (maximal) successive non-triggering (triggering) times, which results in a conservative communication rate, i.e., an upper bound of communication rate can be given.	
Suppose $\mathbb{T}_1(i)=[t_{k_1}(i):t_{k_2}(i)]$ is one period during which the event in (\ref{estimates_lattest}) is successively triggered, and $\mathbb{T}_2(i)=[t_{l_1}(i):t_{l_2}(i)]$ is one period during which the event is successively not triggered. Thus, for each agent, the entire time domain consists of subintervals like $\mathbb{T}_1(i)$ and $\mathbb{T}_2(i)$.
In the following two lemmas, we will provide conditions which ensure the maximal interval of $\mathbb{T}_1(i)$ and the minimal interval of $\mathbb{T}_2(i)$.

\begin{lemma}\label{lem_trigger}
	Consider the system (1)-(3) with nonsingular state transition matrix. For EPDKF algorithm with triggering threshold $\delta\geq 0$, if the event of agent $i$ is successively triggered in the interval $[t_0:t_1]$, then
	\begin{align}
\bar f(t,\delta,i)>0, \forall  t\in [0:t_1-t_0], 
	\end{align}
 where $\bar f(t,\delta,i)$ is  defined in (\ref{eq_ferror}).
\end{lemma}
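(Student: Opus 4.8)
The plan is to turn the matrix event test $g_{k,i}>0$ into a scalar comparison of $\lambda_{max}$ against a deterministic sequence attached to agent $i$ alone, and to identify that sequence with the matrix entering $\bar f$ in (\ref{eq_ferror}). The first observation is structural: if the event of agent $i$ is triggered at every $k\in[t_0:t_1]$, then for each such $k$ (strictly, for $k\ge t_0+1$; the endpoint $t=0$ is treated separately below) the latest broadcast of agent $i$ occurred at $k-1$, so the multi-step prediction matrix (\ref{eq_pred_matrix}) collapses to the one-step form $\bar{\tilde P}_{k,i}=A\tilde P_{k-1,i}A^{T}+Q$, and the event condition becomes $\lambda_{max}\big(\tilde P_{k,i}^{-1}-(A\tilde P_{k-1,i}A^{T}+Q)^{-1}\big)>\delta$.

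Next I would dominate $\tilde P_{k,i}^{-1}$ along the triggering block by a deterministic, network-free comparison sequence. From the EPDKF identities in Table \ref{ODKF3} and Lemma \ref{thm_pro2}, one has $\tilde P_{k,i}^{-1}=\bar P_{k,i}^{-1}+H_i^{T}R_i^{-1}H_i$ with $\bar P_{k,i}=AP_{k-1,i}A^{T}+Q$, and by the matrix inversion lemma $P_{k-1,i}^{-1}=a_{i,i}\tilde P_{k-1,i}^{-1}+\sum_{j\in\mathcal{N}_i^0}a_{i,j}(\tilde P_{k-1,j}^{t})^{-1}+\varepsilon_i^{-1}D_i^{T}D_i$. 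The only network coupling is through the neighbour terms $(\tilde P_{k-1,j}^{t})^{-1}$, which are uniformly bounded from above (every prediction matrix of the form (\ref{eq_pred_matrix}) dominates $Q$, and every $\tilde P_{k,j}^{-1}$ is dominated by $\underline Q^{-1}+H_j^{T}R_j^{-1}H_j$ because $\bar P_{k,j}\ge Q$). Replacing them by a constant $C_i$ and using that $X\mapsto a_{i,i}X+C_i$ and the Riccati-type map $X\mapsto(AX^{-1}A^{T}+Q)^{-1}$ are operator monotone, I obtain a monotone self-map $\mathcal{F}_i$ with $\tilde P_{k,i}^{-1}\le\mathcal{F}_i(\tilde P_{k-1,i}^{-1})$ and $\mathcal{F}_i(X)\le\underline Q^{-1}+H_i^{T}R_i^{-1}H_i$ for all $X\ge0$. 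Setting $\bar Y_i(0)=\underline Q^{-1}+H_i^{T}R_i^{-1}H_i$ (a uniform upper bound on $\tilde P_{k,i}^{-1}$) and $\bar Y_i(t)=\mathcal{F}_i(\bar Y_i(t-1))$, a one-line induction gives $\tilde P_{t_0+t,i}^{-1}\le\bar Y_i(t)$, while $\mathcal{F}_i(\bar Y_i(0))\le\bar Y_i(0)$ together with monotonicity makes $\{\bar Y_i(t)\}$ nonincreasing — which is exactly the feature that will later cap the length of the triggering block.

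Combining the two steps, along the block we have $\lambda_{max}(\tilde P_{k,i}^{-1}-\bar{\tilde P}_{k,i}^{-1})\le\lambda_{max}(\bar Y_i(t))$ (or the sharper bound $\lambda_{max}(\bar Y_i(t)-\underline{\mathcal{P}}_i)$ if, using the uniform boundedness in the proof of Theorem \ref{thm_consistent2}, one also lower-bounds $\bar{\tilde P}_{k,i}^{-1}\ge\underline{\mathcal{P}}_i$), so the right-hand side is precisely $\bar f(t,\delta,i)+\delta$ with $\bar f$ as in (\ref{eq_ferror}); together with $g_{k,i}>0$ this yields $\bar f(t,\delta,i)>0$ for $t\in[1:t_1-t_0]$. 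For the endpoint $t=0$ (i.e. $k=t_0$) the prediction is genuinely multi-step, but (\ref{eq_pred_matrix}) still satisfies $\bar{\tilde P}_{t_0,i}\ge Q>0$, hence $\bar{\tilde P}_{t_0,i}^{-1}\ge0$ and the trigger gives $\lambda_{max}(\tilde P_{t_0,i}^{-1})>\delta$, so $\bar f(0,\delta,i)\ge\lambda_{max}(\tilde P_{t_0,i}^{-1})-\delta>0$. The main obstacle is the middle step — decoupling agent $i$'s recursion from the topology: the neighbours' information terms push $\tilde P_{k,i}^{-1}$ upward and so cannot simply be discarded; one must show they stay uniformly bounded and enter the recursion monotonically, so that the scalar comparison $\bar Y_i(t)$ is well posed, nonincreasing, and genuinely dominates $\tilde P_{t_0+t,i}^{-1}$. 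This requires the matrix-ordering tools (Lemma \ref{lem_ineq} and operator monotonicity of the Riccati map), the uniform bounds of Theorem \ref{thm_consistent2}, and careful bookkeeping of the broadcast/prediction pattern over $[t_0:t_1]$, especially at the boundary $t=0$.
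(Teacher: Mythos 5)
Your argument is logically sound and shares the paper's core mechanism: since the event fires at every $k\in[t_0:t_1]$, the test $\lambda_{max}(\tilde P_{k,i}^{-1}-\bar{\tilde P}_{k,i}^{-1})>\delta$ is converted into a necessary scalar condition by sandwiching --- a deterministic upper bound on $\tilde P_{k,i}^{-1}$ minus a deterministic lower bound on $\bar{\tilde P}_{k,i}^{-1}$ --- and Weyl monotonicity then forces that deterministic gap to exceed $\delta$ for every $t$ in the block. Where you diverge is in how the two bounding sequences are built. The paper unrolls the coupled network recursion $t$ times, keeping the powers $a_{ij,\tau}$ of the adjacency matrix and, crucially, retaining the $-\delta I_n$ slack from the non-triggering neighbours inside the lower bound; this is what produces the specific matrices $f_{t,i}(\cdot)$ and $z_{t,i}(\cdot)$ whose combination defines $\bar f$ in (\ref{eq_ferror}), and the explicit $\delta$-dependence of $z_{t,i}$ is later what drives condition 1) and the monotonicity claim in Theorem \ref{thm_rate}. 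You instead decouple agent $i$ from the topology by capping the neighbour terms with a constant $C_i$ and iterating a monotone self-map $\mathcal{F}_i$, and you essentially discard the prediction term via $\bar{\tilde P}_{k,i}^{-1}\geq 0$. That yields a valid, self-contained necessary condition (and your monotone, nonincreasing $\bar Y_i(t)$ is arguably a cleaner device for capping the triggering-block length), but the resulting function is \emph{not} the $\bar f$ of (\ref{eq_ferror}): your claim that your bound ``is precisely $\bar f(t,\delta,i)+\delta$'' does not hold, and your version is looser, so the $T_1(i)$ it certifies is larger. Directionally this still gives a conservative bound on the communication rate, but it would not reproduce the structural form of $z_{t,i}(\cdot)+\delta I_n$ that Theorem \ref{thm_rate} manipulates. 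Your separate treatment of the endpoint $t=0$ (where the prediction is genuinely multi-step) is a point of care that the paper glosses over.
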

\begin{proof}
		Please see the proof in Appendix \ref{pf_lemma_tri1}.
\end{proof}
\begin{lemma}\label{lem_trigger2}
	Consider the system (1)-(3) with nonsingular state transition matrix. For EPDKF algorithm with triggering threshold $\delta\geq 0$,  the event of agent $i$ is successively not triggered in the interval $[t_2:t_3]$, if 
	\begin{align*}
\bar g(\delta,t,i)\leq 0, \forall t\in [0:t_3-t_2],
	\end{align*}
 where $\bar g(\delta,t,i)$ is an operator defined in (\ref{eq_ferror2}).
\end{lemma}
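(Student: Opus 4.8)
The plan is to prove this as the non-triggering counterpart of Lemma \ref{lem_trigger}, by a forward induction on the time index $k$ running through $[t_2:t_3]$, with the hypothesis $\bar g(\delta,t,i)\le 0$ supplying exactly what is needed to keep the induction going. Let $t$ denote the last instant at or before $t_2$ at which agent $i$ broadcast, so that for $k$ in the interval of interest the prediction matrix $\bar{\tilde{P}}_{k,i}$ in (\ref{eq_pred_matrix}) is the multi-step prediction of $\tilde P_{t,i}$, and the event at $k$ is \emph{not} triggered precisely when $g_{k,i}=\lambda_{max}\big(\tilde P_{k,i}^{-1}-\bar{\tilde{P}}_{k,i}^{-1}\big)-\delta\le 0$. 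The claim I would establish inductively is: if the event has not been triggered on $[t_2:k-1]$, then $g_{k,i}\le\bar g(\delta,k-t_2,i)$. Granting this, the hypothesis gives $g_{k,i}\le 0$, hence the event is not triggered at $k$ and $t$ remains the last broadcast instant, so the induction proceeds; the base case $k=t_2$ is the $t=0$ instance of the hypothesis together with $\bar{\tilde P}_{t_2,i}=\tilde P_{t_2,i}$, which makes the discrepancy vanish there.

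For the inductive step I would track the information discrepancy $\Delta_{k,i}:=\tilde P_{k,i}^{-1}-\bar{\tilde{P}}_{k,i}^{-1}$ through one EPDKF cycle. From the filter equations $\tilde P_{k,i}^{-1}=\bar P_{k,i}^{-1}+H_i^TR_i^{-1}H_i$ with $\bar P_{k,i}=AP_{k-1,i}A^T+Q$; the projection step gives the identity $P_{k-1,i}^{-1}=\check P_{k-1,i}^{-1}+\tfrac1{\varepsilon_i}D_i^TD_i$; and the local fusion gives $\check P_{k-1,i}^{-1}=a_{i,i}\tilde P_{k-1,i}^{-1}+\sum_{j\in\mathcal{N}_i^0}a_{i,j}(\tilde P_{k-1,j}^t)^{-1}$. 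Using the monotone bound $(X+Q)^{-1}\le X^{-1}$ (valid since $A$ nonsingular forces $AP_{k-1,i}A^T>0$) one gets $\tilde P_{k,i}^{-1}\le A^{-T}P_{k-1,i}^{-1}A^{-1}+H_i^TR_i^{-1}H_i$, while the prediction recursion $\bar{\tilde{P}}_{k,i}=A\bar{\tilde{P}}_{k-1,i}A^T+Q$ together with Lemma~1 of \cite{Battistelli2014Kullback} supplies a matching lower bound $\bar{\tilde{P}}_{k,i}^{-1}\ge\beta A^{-T}\bar{\tilde{P}}_{k-1,i}^{-1}A^{-1}$ (the coefficient $\beta$ uniform over the finite window $[t_2:t_3]$), which is precisely what captures the monotone decay of the stale prediction information and hence the growth of $\Delta_{k,i}$ with elapsed time. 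Subtracting, substituting the fusion/projection identities, and bounding each neighbor term $(\tilde P_{k-1,j}^t)^{-1}$ by its worst case over the two branches of (\ref{estimates_lattest}) — a branch that is itself uniformly bounded because $\tilde P_{k,j}^{-1}\le Q^{-1}+H_j^TR_j^{-1}H_j$ and, under the threshold condition, all $P$-matrices are uniformly bounded (Theorem \ref{thm_consistent2}, whose proof gives $P_{k,j}\le\hat P_\delta$) — yields a fixed matrix map $h$ such that $\Delta_{k,i}\le h(\Delta_{k-1,i})$ plus a fixed offset. Its $(k-t_2)$-fold iterate from the zero initial discrepancy is, up to the constants, the operator $\bar g(\delta,k-t_2,i)$ of (\ref{eq_ferror2}) (the notation $h^t$ being the one reserved for iterated operators), so $\lambda_{max}(\Delta_{k,i})-\delta\le\bar g(\delta,k-t_2,i)$, closing the induction.

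The main obstacle is the coupling of agent $i$'s matrices to those of its neighbors through the matrix inversions in the fusion step: each neighbor $j$ carries its own $\tilde P_{k,j}$ whose value depends on whether $j$ itself triggered and on $j$'s own fusion/projection recursion, so one must propagate positive-semidefinite and maximal-eigenvalue bounds cleanly through prediction, fusion and projection while keeping the dependence on $\delta$ sharp enough that ``$\bar g\le 0$'' is exactly the sufficient condition claimed and not a strictly weaker one. Verifying that the resulting one-step map $h$ is monotone (so that iterating it is legitimate) and finite (so that $\bar g$ is well defined), and handling the two branches of (\ref{estimates_lattest}) in the worst case, are where the bulk of the work lies; the base case and the elementary inequality $(X+Q)^{-1}\le X^{-1}$ are routine.
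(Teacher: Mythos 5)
Your overall strategy --- replace $\tilde P_{k,i}^{-1}$ by a deterministic upper bound and $\bar{\tilde{P}}_{k,i}^{-1}$ by a deterministic lower bound obtained by unrolling the filter recursions, so that the non-triggering test can be certified offline --- is the same idea the paper's proof uses, but the paper keeps the two bounds decoupled rather than folding them into a single recursion for the discrepancy $\Delta_{k,i}$. Concretely, the paper iterates the information-matrix recursion down to the crude terminal bound $\bar P^{-1}\le Q^{-1}$ to get a fixed matrix $f_{t,i}(\cdot)$, and separately lower-bounds $\bar{\tilde{P}}_{k,i}^{-1}=\big(h^{t+1}(\tilde P_{t_2-1,i})\big)^{-1}\ge\beta^{t+1}(A^{-t-1})^TH_i^TR_i^{-1}H_iA^{-t-1}$ using only $\bar P_{t_2-1,i}^{-1}\ge 0$; the function $\bar g$ of (\ref{eq_ferror2}) is then literally $\lambda_{max}\{f_{t,i}(\cdot)-l_t\}-\delta$. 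Decoupling sidesteps the coefficient mismatch your one-step map must confront: the upper-bound recursion carries the weight $a_{i,i}$ on $\tilde P_{k-1,i}^{-1}$ while the lower-bound recursion carries $\beta$ on $\bar{\tilde{P}}_{k-1,i}^{-1}$, so $\Delta_{k,i}$ is not cleanly a function of $\Delta_{k-1,i}$ without an additional worst-case bound on $\bar{\tilde{P}}_{k-1,i}^{-1}$, a case analysis you do not acknowledge.

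Two concrete gaps. First, your base case is false: since $t_2$ lies inside the non-triggering interval, agent $i$'s last broadcast time $t$ is at most $t_2-1$, so $\bar{\tilde{P}}_{t_2,i}\ne\tilde P_{t_2,i}$ and the discrepancy at $k=t_2$ does not vanish --- it is precisely the quantity the triggering test evaluates at $t_2$ and must itself be bounded. The zero-discrepancy anchor exists only at the broadcast instant $t$, where (\ref{eq_pred_matrix}) degenerates to $\bar{\tilde{P}}_{t,i}=\tilde P_{t,i}$; the induction must start there. This is repairable but, as written, the first step of your induction has nothing to stand on. Second, you assert that the iterate of your one-step map ``is, up to the constants, the operator $\bar g$ of (\ref{eq_ferror2})''; it is not --- the paper's $\bar g$ is the difference of two independently derived bounds, not the orbit of a discrepancy map, so your construction certifies non-triggering against a different function and does not prove the statement with the $\bar g$ the paper actually defines. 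Finally, invoking Theorem \ref{thm_consistent2} to bound the neighbor terms smuggles in the threshold condition (\ref{delta_condition120}), which is not a hypothesis of this lemma; the elementary bound $(\tilde P_{k,j}^t)^{-1}\le Q^{-1}+H_j^TR_j^{-1}H_j$ that you also mention suffices and should be used instead.
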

\begin{proof}
	Please see the proof in Appendix \ref{pf_lemma_tri2}.
\end{proof}
Suppose $[0:T]$ is the time interval of interest. The maximal successive triggering time $T_1(i)$ and the minimal non-triggering time $T_2(i)$   can be obtained by solving  Problems 1 and 2 in the following, respectively.

\emph{Problem 1}: Given $\delta\geq 0$, for	$i\in\mathcal{V}$,
\begin{align}
T_1(i)=\max_{t\in[0:T]} t, \text{ subject to } \bar f(t,\delta,i)>0,
\end{align}
where $\bar f(t,\delta,i)$ is defined in (\ref{eq_ferror}).

\emph{Problem 2}: Given $\delta\geq 0$, for	$i\in\mathcal{V}$,
\begin{align}
T_2(i)=\max_{t\in[0:T]} t,  \text{ subject to } \bar g(\delta,t,i)\leq 0,
\end{align}
where  $\bar g(\delta,t,i)$ is defined in (\ref{eq_ferror2}).

\begin{remark}
	For a decision maker with global system information, the optimization problems 1 and 2 are easily to be solved offline, since the constraints of the problems can be linearly expanded with time $t$.
\end{remark}
Recall that $\beta$ is given in (\ref{P_recursive}) and  the denotation  $|\mathcal{N}_{i,out}^0|$  is the out-degree of agent $i$ without including itself. 
Let $\mathcal{I}_{ \{x\geq y\}}$ be an indicative function of $\{0,1\}$ judging whether $x\geq y$ holds.
Then, we have the following theorem on communication rate. 
\begin{theorem}\label{thm_rate}
		(\textbf{Communication rate}) 
	Consider the system (1)-(3) with nonsingular state transition matrix. For EPDKF algorithm with triggering threshold $\delta\geq 0$, 
   given the time interval of interest  $[0:T]$, 
	if there exists a nonempty set $\mathcal{V}_1\subseteq \mathcal{V}$, such that for $\forall i\in\mathcal{V}_1$, 
	the following conditions hold,\\
1) Problem 1 has a feasible solution $T_1(i)\in [0:T]$, subject to $\mathcal{I}_{\{T_1(i)\geq 2\}}\sum_{\tau=2}^{T_1(i)} \beta^{\tau}(A^{-\tau})^{T}A^{-\tau}\leq I_{n}$.  	\\
2) Problem 2 has a feasible solution $T_2(i)\in [0:T-T_1(i)]$. \\
then the communication rate $\lambda (\delta)$ is no larger than $\lambda_0(\delta)\in(0,1)$, i.e.,
\begin{align}
0<\lambda (\delta)\leq \lambda_0(\delta)<1,
\end{align}
where 
\begin{align}\label{eq_rate0}
\lambda_0(\delta)=1-\frac{\sum_{i\in\mathcal{V}_1}T_{2}(i)\left\lfloor\frac{T}{T_{1}(i)+T_{2}(i)}\right\rfloor |\mathcal{N}_{i,out}^0| }{T\sum_{i\in\mathcal{V}}|\mathcal{N}_{i,out}^0|}.
\end{align}
Furthermore, $\lambda_0(\delta)$ is  monotonic decreasing function of $\delta$.
\end{theorem}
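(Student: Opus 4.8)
The plan is to lift the single‑period statements of Lemmas~\ref{lem_trigger} and \ref{lem_trigger2} to a worst‑case count of the non‑triggering instants of each agent over the horizon $[0:T]$, and then feed this count into the definition~(\ref{eq_rate}) of the communication rate. Fix an agent $i\in\mathcal{V}_1$. By the contrapositive of Lemma~\ref{lem_trigger} together with feasibility condition~1), any maximal run of consecutive triggering steps of agent~$i$ has length at most $T_1(i)$: a run of length $T_1(i)+1$ would, by Lemma~\ref{lem_trigger}, force $\bar f(\tau,\delta,i)>0$ for every $\tau\in[0:T_1(i)+1]$, contradicting the maximality defining $T_1(i)$ in Problem~1 (the side constraint $\mathcal{I}_{\{T_1(i)\ge 2\}}\sum_{\tau=2}^{T_1(i)}\beta^{\tau}(A^{-\tau})^{T}A^{-\tau}\le I_{n}$ is exactly what keeps the recursion behind Lemma~\ref{lem_trigger} valid up to length $T_1(i)$). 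Dually, by Lemma~\ref{lem_trigger2} and feasibility condition~2), the end of each triggering run is followed by at least $T_2(i)$ consecutive non‑triggering steps, since $\bar g(\delta,\tau,i)\le 0$ holds on the window certified by the definition of $T_2(i)$ in Problem~2 and $T_2(i)\le T-T_1(i)$ keeps that window inside $[0:T]$. Hence the timeline of agent~$i$ decomposes into triggering blocks of length $\le T_1(i)$ alternating with non‑triggering blocks of length $\ge T_2(i)$.

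Next I would lower‑bound $p_i$. The number of non‑triggering instants of agent~$i$ is minimized — the worst case for an upper bound on $\lambda$ — when every triggering block has length exactly $T_1(i)$ and every non‑triggering block exactly $T_2(i)$; then $[0:T]$ contains at least $\lfloor T/(T_1(i)+T_2(i))\rfloor$ complete cycles, each contributing $T_2(i)$ non‑triggering steps, so $p_i\ge T_2(i)\lfloor T/(T_1(i)+T_2(i))\rfloor/T$, while trivially $p_i\ge 0$ for $i\notin\mathcal{V}_1$. Substituting these bounds into~(\ref{eq_rate}) gives $\lambda(\delta)\le\lambda_0(\delta)$ with $\lambda_0(\delta)$ as in~(\ref{eq_rate0}). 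For the strict inequalities: strong connectedness of $\mathcal{G}$ yields $|\mathcal{N}_{i,out}^0|\ge 1$ for all $i$, and $T_1(i)+T_2(i)\le T$ forces $\lfloor T/(T_1(i)+T_2(i))\rfloor\ge 1$, so the term subtracted in~(\ref{eq_rate0}) is strictly positive, i.e.\ $\lambda_0(\delta)<1$; since the event is triggered at the initial instant for every agent, $T_1(i)\ge 1$, whence $T_2(i)\lfloor T/(T_1(i)+T_2(i))\rfloor<T$ strictly and also $p_i<1$ for every $i$, which gives both $\lambda_0(\delta)>0$ and $\lambda(\delta)>0$.

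Finally, for the monotonicity I would read off from the explicit forms~(\ref{eq_ferror}) and~(\ref{eq_ferror2}) that $\bar f(t,\delta,i)$ and $\bar g(\delta,t,i)$ depend affinely on $\delta$ with non‑positive slope: raising the threshold makes the necessary triggering inequality $\bar f>0$ fail earlier and the sufficient non‑triggering inequality $\bar g\le 0$ persist longer, so $T_1(i)$ is non‑increasing and $T_2(i)$ is non‑decreasing in $\delta$; substituting into~(\ref{eq_rate0}) then shows the subtracted fraction is non‑decreasing, hence $\lambda_0(\delta)$ is non‑increasing in $\delta$. The step I expect to be the main obstacle is the alternating‑block decomposition in the second paragraph — in particular, verifying that Lemma~\ref{lem_trigger2} genuinely certifies a non‑triggering stretch of the \emph{full} length $T_2(i)$ right after every triggering run (i.e.\ that $\bar g(\delta,\cdot,i)\le 0$ on the whole initial window, not merely at its endpoint), and that the one‑sided conservative quantities $T_1(i)$ (an over‑estimate of triggering‑run length) and $T_2(i)$ (an under‑estimate of non‑triggering‑run length) combine in the correct direction to produce a genuine \emph{lower} bound on $p_i$ — together with the bookkeeping around the floor functions needed to make the monotonicity of $\lambda_0(\delta)$ rigorous.
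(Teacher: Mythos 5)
Your proposal is correct and follows essentially the same route as the paper's proof: $T_1(i)$ from Problem 1 over-estimates the length of any successive-triggering run via the necessary condition of Lemma \ref{lem_trigger}, $T_2(i)$ from Problem 2 under-estimates the length of the ensuing non-triggering run via the sufficient condition of Lemma \ref{lem_trigger2}, these combine into the floor-function bound $\lambda_0(\delta)$, and the monotonicity in $\delta$ follows from the affine $\delta$-dependence of $\bar f$ and $\bar g$ together with the side condition $\mathcal{I}_{\{T_1(i)\geq 2\}}\sum_{\tau=2}^{T_1(i)}\beta^{\tau}(A^{-\tau})^{T}A^{-\tau}\leq I_n$ in 1). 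Your alternating-block bookkeeping is actually more explicit than the paper's terse argument, and the concern you flag about certifying $\bar g(\delta,\cdot,i)\le 0$ on the whole initial window (not just its endpoint) is a real subtlety that the paper's proof glosses over in the same way.
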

\begin{proof}
Based on the definition of communication rate in (\ref{eq_rate}), if conditions 1) and 2) of this theorem hold,  (\ref{eq_rate0}) can be calculated.  
According to Lemma \ref{lem_trigger}, the  length of real successive triggering times is smaller than that calculated based on  Problem 1. From Lemma \ref{lem_trigger2}, 	we see that the  length of real successive triggering times is larger than that calculated based on  Problem 2. Thus, according to (\ref{eq_rate0}), the communication rate of the network $\lambda (\delta)$ is no larger than $\lambda_0(\delta)\in(0,1)$. 

To prove the monotonicity of $\lambda_0(\delta)$ with respect to $\delta$, next we will analyze the detailed form of $ z_{ t,i}(A,H,R,D,\mathcal{A},\Upsilon, \beta,\delta)$ in (\ref{eq_lower}). According to (\ref{pf_rate}), 
\begin{align*}
&z_{ t,i}(A,H,R,D,\mathcal{A},\Upsilon, \beta,\delta)\\
=&\bar z_{ t,i}(A,H,R,D,\mathcal{A},\Upsilon, \beta)-\delta\mathcal{I}_{t\geq 2}\sum_{\tau=2}^{t} \beta^{\tau}(A^{-\tau})^{T}A^{-\tau}
\end{align*}
where $\mathcal{I}_{ \{x\geq y\}}$ be an indicative function of $\{0,1\}$ judging whether $x\geq y$ holds, and $\bar z_{ t,i}(A,H,R,D,\mathcal{A},\Upsilon, \beta)$ is  a constant matrix derived based on predefined system matrices or vectors $A,H,R,D,\mathcal{A},\Upsilon, \beta$, and  the predefined integer $t$. According to (\ref{eq_ferror}), $\bar f(t,\delta,i)>0$ is equivalent to $\lambda_{max}\{J(\cdot)\}>0$, where
\begin{align*}
J(\cdot)&= f_{t,i}(\cdot)-eig_{pos}\left(z_{t,i}(\cdot)+\delta I_n\right),
\end{align*}
where $eig_{pos}(\cdot)$ is the matrix operator defined in (\ref{eq_ferror}), and 
\begin{align*}
&z_{t,i}(\cdot)+\delta I_n=\bar z_{t,i}(\cdot)+\delta\left(I_{n}-\mathcal{I}_{t\geq 2}\sum_{\tau=2}^{t} \beta^{\tau}(A^{-\tau})^{T}A^{-\tau}\right)
\end{align*}
Due to the condition  $\mathcal{I}_{\{T_1(i)\geq 2\}}\sum_{\tau=2}^{T_1(i)} \beta^{\tau}(A^{-\tau})^{T}A^{-\tau}\leq I_{n}$ in 1),
we see if $\delta$ is bigger, the length of successive triggering times (i.e., $t_1$) is smaller. Besides, considering (\ref{eq_ferror2}), if $\delta$ is bigger, the length of successive non-triggering times (i.e., $t_2$) is larger. Thus, the calculated $\lambda_0(\delta)$ is a monotonic decreasing function of $\delta.$
\end{proof}
	
If the  time interval of interest is very large (e.g., $T\rightarrow\infty$), we have the following corollary.
\begin{corollary}\label{coro_rate}
	Under the same conditions as Theorem \ref{thm_rate}, if there exists an integer $T_b$, such that for any $T\geq T_b$, $T_1(i)+T_2(i)\leq T_b$, then 
	\begin{align}
	0<\lambda (\delta)\leq \bar\lambda_0(\delta)=\lim\limits_{T\rightarrow\infty}\lambda_0(\delta)<1,
	\end{align}
	where 
	\begin{align}\label{eq_rate02}
	\bar \lambda_0(\delta)=1-\frac{\sum_{i\in\mathcal{V}_1}\frac{T_{2}(i)}{T_{1}(i)+T_{2}(i)}|\mathcal{N}_{i,out}^0|}{\sum_{i\in\mathcal{V}}|\mathcal{N}_{i,out}^0|}.
	\end{align}
\end{corollary}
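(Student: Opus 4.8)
The plan is to derive Corollary~\ref{coro_rate} by passing to the limit $T\to\infty$ directly in the bound established in Theorem~\ref{thm_rate}. First I would fix the system and the threshold $\delta$ and record the structural fact that, as $T$ grows, $T_1(i)$ and $T_2(i)$ — being the largest admissible $t$ in the expanding window $[0:T]$ subject to feasibility constraints that do not depend on $T$ — are non-decreasing integer sequences in $T$. The hypothesis that $T_1(i)+T_2(i)\le T_b$ whenever $T\ge T_b$ then forces both sequences to be bounded, hence eventually constant; I would denote the stabilized values again by $T_1(i)$ and $T_2(i)$, which are precisely the quantities appearing in $\bar\lambda_0(\delta)$ in (\ref{eq_rate02}). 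This step also makes $\bar\lambda_0(\delta)$ a well-defined constant.

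Next I would compute the limit of each summand in (\ref{eq_rate0}). Writing $c_i=T_1(i)+T_2(i)$ and $\lfloor T/c_i\rfloor=T/c_i-\{T/c_i\}$ with $\{\cdot\}\in[0,1)$, one gets $\lfloor T/c_i\rfloor/T=1/c_i-\{T/c_i\}/T\to 1/c_i$ as $T\to\infty$. Since $\mathcal{V}$ and $\mathcal{V}_1$ are finite, the whole fraction in (\ref{eq_rate0}) converges term by term, yielding $\lim_{T\to\infty}\lambda_0(\delta)=1-\big(\sum_{i\in\mathcal{V}_1}\tfrac{T_2(i)}{T_1(i)+T_2(i)}|\mathcal{N}_{i,out}^0|\big)\big/\big(\sum_{i\in\mathcal{V}}|\mathcal{N}_{i,out}^0|\big)=\bar\lambda_0(\delta)$, which is exactly (\ref{eq_rate02}).

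Then I would transfer the inequality. For every admissible window with $T\ge T_b$, Theorem~\ref{thm_rate} gives $0<\lambda(\delta)\le\lambda_0(\delta)<1$; letting $T\to\infty$ therefore gives $\lambda(\delta)\le\bar\lambda_0(\delta)$, while $\lambda(\delta)>0$ is already furnished by Theorem~\ref{thm_rate}. The strict upper bound $\bar\lambda_0(\delta)<1$ follows because $\mathcal{V}_1\neq\emptyset$ contains at least one agent with $T_2(i)\ge 1$ and $|\mathcal{N}_{i,out}^0|\ge 1$, so the subtracted fraction is strictly positive; the strict lower bound $\bar\lambda_0(\delta)>0$ follows from $\tfrac{T_2(i)}{T_1(i)+T_2(i)}\le 1$ together with $T_1(i)\ge 1$ (the event is triggered at the initial instant), which keeps the subtracted fraction strictly below $1$. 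Combining these gives $0<\lambda(\delta)\le\bar\lambda_0(\delta)<1$, and one may remark that $\bar\lambda_0(\delta)$ inherits monotonic decrease in $\delta$ as a pointwise limit of the decreasing functions $\lambda_0(\delta)$ from Theorem~\ref{thm_rate}.

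The only subtle point — and thus the main obstacle — is the bookkeeping in the first step: one has to argue carefully that $T_1(i)$ and $T_2(i)$ actually stabilize rather than continuing to grow with $T$ (this is where the bound $T_b$ is essential), and to be precise about the sense in which the true communication rate $\lambda(\delta)$ is independent of $T$, so that the inequality $\lambda(\delta)\le\lambda_0(\delta)$ genuinely survives the limit. Once $T_1(i)+T_2(i)$ is pinned below $T_b$, everything else reduces to the elementary limit $\lfloor T/c\rfloor/T\to 1/c$ and summation over finitely many agents.
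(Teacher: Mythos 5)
Your proposal is correct and follows essentially the same route as the paper: the paper's proof is simply the observation that $\lim_{T\to\infty}\frac{1}{T}\left\lfloor\frac{T}{T_1(i)+T_2(i)}\right\rfloor=\frac{1}{T_1(i)+T_2(i)}$ applied termwise to (\ref{eq_rate0}). Your additional bookkeeping (stabilization of $T_1(i),T_2(i)$ for $T\ge T_b$ and the explicit strict bounds) fills in details the paper leaves implicit but does not change the argument.
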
	
\begin{proof}
If there exists an integer $T_b$, such that for any $T\geq T_b$, $T_1(i)+T_2(i)\leq T_b$, then $\lim\limits_{T\rightarrow\infty}\frac{1}{T}\left\lfloor\frac{T}{T_{1}(i)+T_{2}(i)}\right\rfloor=\frac{1}{T_{1}(i)+T_{2}(i)}$. According to (\ref{eq_rate0}), the conclusion holds.
\end{proof}
\begin{remark}
	Theorem \ref{thm_rate} and Corollary \ref{coro_rate} study the communication rate for the distributed event-triggered filter and provide  methods to obtain the minimal (maximal) successive non-triggering (triggering) times, which seem to be not investigated in the existing results \cite{liu2015eventinformatics,liu2015event,Battistelli2016Distributed} of distributed filtering to our knowledge.
\end{remark}	
	\section{Numerical simulation}
	In this section, we consider a state constraint navigation problem of a land-based vehicle, which was widely studied \cite{simon2010kalman,simon2002kalman}, in order to illustrate the effectiveness of the proposed TPDKF and EPDKF. 
	Corresponding to the system (\ref{system_all})-(\ref{system_all1}), 
	the  vehicle dynamics can be approximated through setting 
	$A_{k}=\left[\begin{smallmatrix}
	1 & 0 & T_s& 0\\  
	0 &1 & 0 & T_s\\  
	0 & 0 & 1& 0\\  
	0 &0 & 0 & 1\\  
	\end{smallmatrix}	\right],$
	and the first two state elements of $x_{k}$ are the north and east positions, and the last two are the north and east velocities,
	$\omega_k$ is the process noise whose covariance matrix is upper bounded by $Q=diag\{4,4,1,1\}$; and $v_{k,i}$ is the measurement noise of $i$th agent with the covariance matrix $R_i=90$.
	The time interval of measurements is $[0,25]$ with the sampling period $T_s$ is 0.1s, which means $k\in[0:250]$, and
	the initial state is generated by a Gaussian distribution with zero mean and covariance  $[100,100,\tan^2\theta,1]^T$. 
	If one agent has the knowledge of the vehicle running on a road with the heading of $\theta$ equal to $60$ degrees, then
	$\tan\theta=x_{k}(1)/x_{k}(2)=x_{k}(3)/x_{k}(4)$. The constraint can be rewritten in the form of $D_kx_{k}=0$ with $	D_k=\left(
	\begin{smallmatrix}
	1 & -\tan\theta& 0 & 0\\
	0  &  0       &  1&-\tan\theta
	\end{smallmatrix}
	\right),k=1,2,\ldots.$
	In the following, the elements of the weighted adjacent matrix $\mathcal{A}$ are given by the so-called Metropolis weights (\cite{Battistelli2014Kullback}). The initial settings of the algorithms are
	$\hat x_{0,i}=[0,0,0,0]^T,P_{0,i}=diag\{100,100,4,4\}, \varepsilon_i=0.01,i=1,\ldots,N.$
	To better show the estimation performance of the proposed TPDKF, in the following, we compare the proposed TPDKF with the centralized  Kalman filter (CKF) and Distributed State Estimation with Consensus on the Posteriors (DSEA-CP) \cite{Battistelli2014Kullback}.
	CKF is the  minimum-variance centralized filter for linear dynamic systems and
	the algorithm DSEA-CP considers the distributed filter based on consensus. 
	We conduct the numerical simulation through Monte Carlo experiment, in which $1000$ Monte Carlo trials for TPDKF, CKF and DSEA-CP are performed, respectively. The mean square error, averaged over all the agents, is defined as
$	MSE_{k}=\frac{1}{N}\sum_{i=1}^{N}\left[ \frac{1}{500}\sum_{j=1}^{500}(\hat x_{k,i}^j-x_{k}^j)^T(\hat x_{k,i}^j-x_{k}^j)\right],$
	{where $\hat x_{k,i}^j$ is the state estimate of the $j$th trail of agent $i$ at time $k$.}
	In addition, considering the proposed TPDKF, the upper bound of MSE defined above is given as $tr(P_{k})=\frac{1}{N} \sum_{i=1}^N tr(P_{k,i})$.
	
	\subsection{Performance Evaluation: Case 1}
	
	In this subsection, we focus on the extended collective observability condition (\ref{ass_A2}) for the performance of the proposed algorithms by only considering three agents in the network shown in Fig. \ref{topology3}.    The state constraints of the agents are assumed to be
	$D_{k,1}=D_{k,3}=\left(
	\begin{smallmatrix}
	1 & -\tan\theta& 0 & 0\\
	0  &  0       &  1&-\tan\theta
	\end{smallmatrix}
	\right),
	D_{k,2}=\left(
	\begin{smallmatrix}
	0 & 0& 0 & 0
	\end{smallmatrix}
	\right),d_{k,i}=0,i=1,2,3.$
	The observation matrices are supposed to have the following forms $	H_{k,1}=H_{k,3}=\left(
	\begin{smallmatrix}
	1 & 0& 0 & 0
	\end{smallmatrix}
	\right),
	H_{k,2}=\left(
	\begin{smallmatrix}
	0 & 0& 0 & 0
	\end{smallmatrix}
	\right).$
	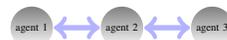
\begin{figure}[htp]
		\begin{center}
			\begin{tikzpicture}[scale=0.4, transform shape,line width=2pt]
			\tikzstyle{every node} = [circle,shade, fill=gray!30]
			\node (a) at (0, 0) {agent 1};
			\node (b) at +(0: 1.5*2) {agent 2};
			\node (c) at +(0: 1.5*4) {agent 3};
			\foreach \from/\to in {a/b, b/c}
			\draw [blue!30,<->] (\from) -- (\to);
			\end{tikzpicture}
		\end{center}
		\caption{The communication illustration of a agent network with 3 agents}\label{topology3}
	\end{figure}
	The MSE comparison of TPDKF, CKF and DSEA-CP  is given in Fig. \ref{case12}. It can be seen that the algorithms CKF and DSEA-CP are both divergent, but the MSE of TPDKF and the corresponding upper bounds (i.e., $\sum_{i=1}^{N}tr(P_{k,i})$) still remain stable if $L\geq 1$. Since  the collective observability condition  is not satisfied, the CKF is divergent. Yet the extended collective observability in Assumption \ref{ass_observable} holds due to the contribution of state constraints. Besides, as $L$ increases, TPDKF achieves better  estimation performance in terms of MSE and tr(P). 
	The results demonstrate that the state constraints can relax the observability condition of the algorithms,  TPDKF  can efficiently employ the information of state constraints and the fusion-projection number $L$ has direct influence on estimation performance of TPDKF.

		\begin{figure}[htp]
			\centering
			\includegraphics[width=0.3\textwidth]{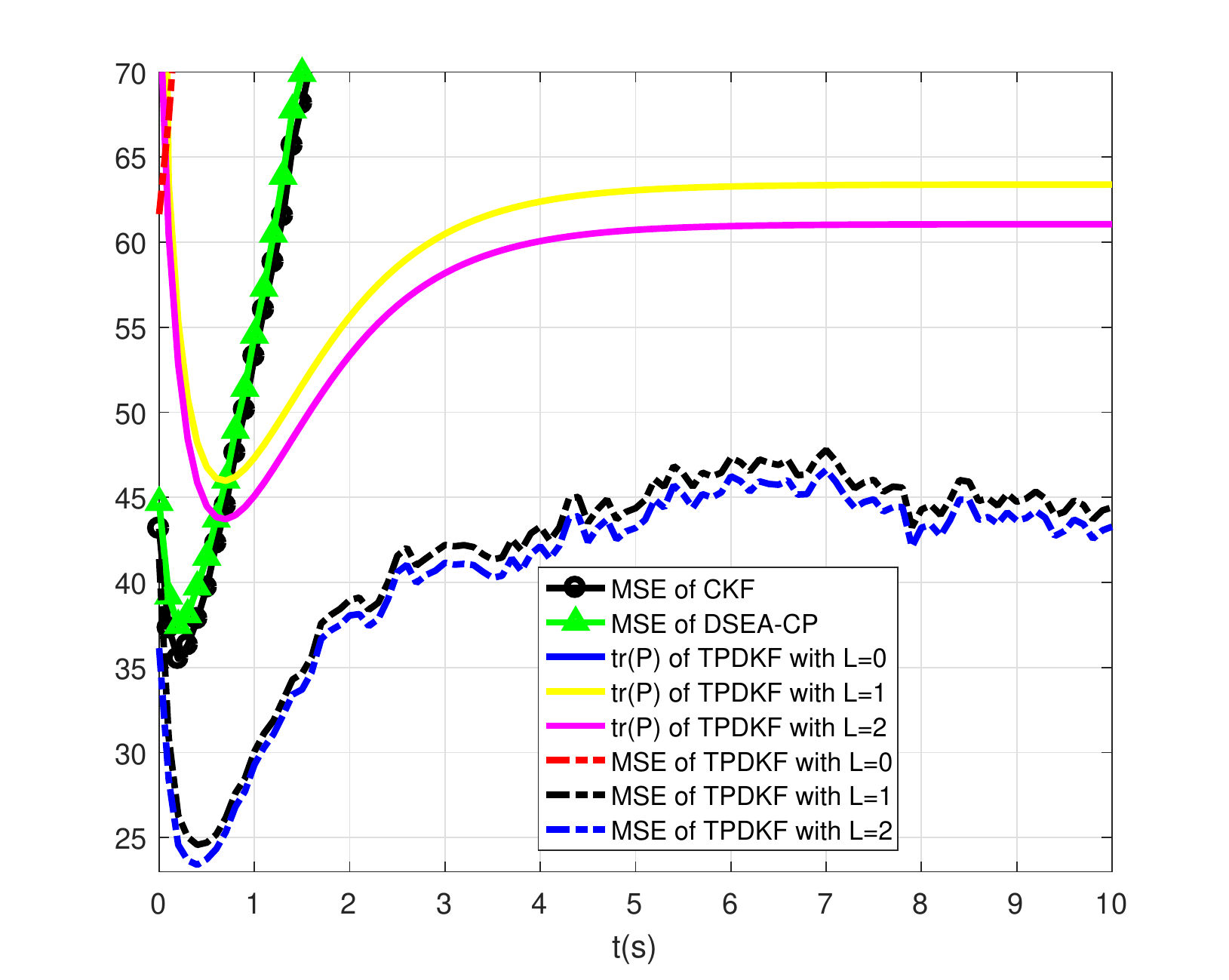}
			\caption{Performance comparision among CKF, DSEA-CP and TPDKF}
			\label{case12}
		\end{figure}

	To show the effectiveness of the proposed EPDKF in Table \ref{ODKF3}, we consider the simulation performance under the initial settings.
	The event triggering thresholds are set to  $\delta_{1}=0.3,\delta_{2}=0.4,\delta_{3}=0.8$. By simulation, the communication rate is $\lambda=0.311$, which effectively reduces the communication burden of the network. Fig. \ref{triggering_instants} shows the triggering instants of each agent. 
	  {Fig. \ref{single_consitence} reveals that the estimation error of EPDKF remains stable and the consistency  holds.
	Fig. \ref{triggering_instants} illustrates that there is a periodic transmission behavior of each agent. As a result, Fig. \ref{single_consitence} shows some  periodicity of the estimation performance.}
	{
	Table \ref{table_compare} shows the relationships between  $\delta$, $tr( MSE_e)=\frac{1}{3} \sum_{i=1}^3 \max_{k\geq 50}(tr(MSE_{k,i}))$ and $tr( P_e)=\frac{1}{3} \sum_{i=1}^3 \max_{k\geq 50}(tr(P_{k,i}))$.
	Fig. \ref{trace_P_lambda} reveals the dynamic changing of estimation performance along with the communication rate $\lambda$. }It can be seen that the event triggering thresholds decrease, and meanwhile, the estimation error decreases, as the communication rate increases.
	\begin{figure}[htp]
		\centering
		\includegraphics[width=0.3\textwidth]{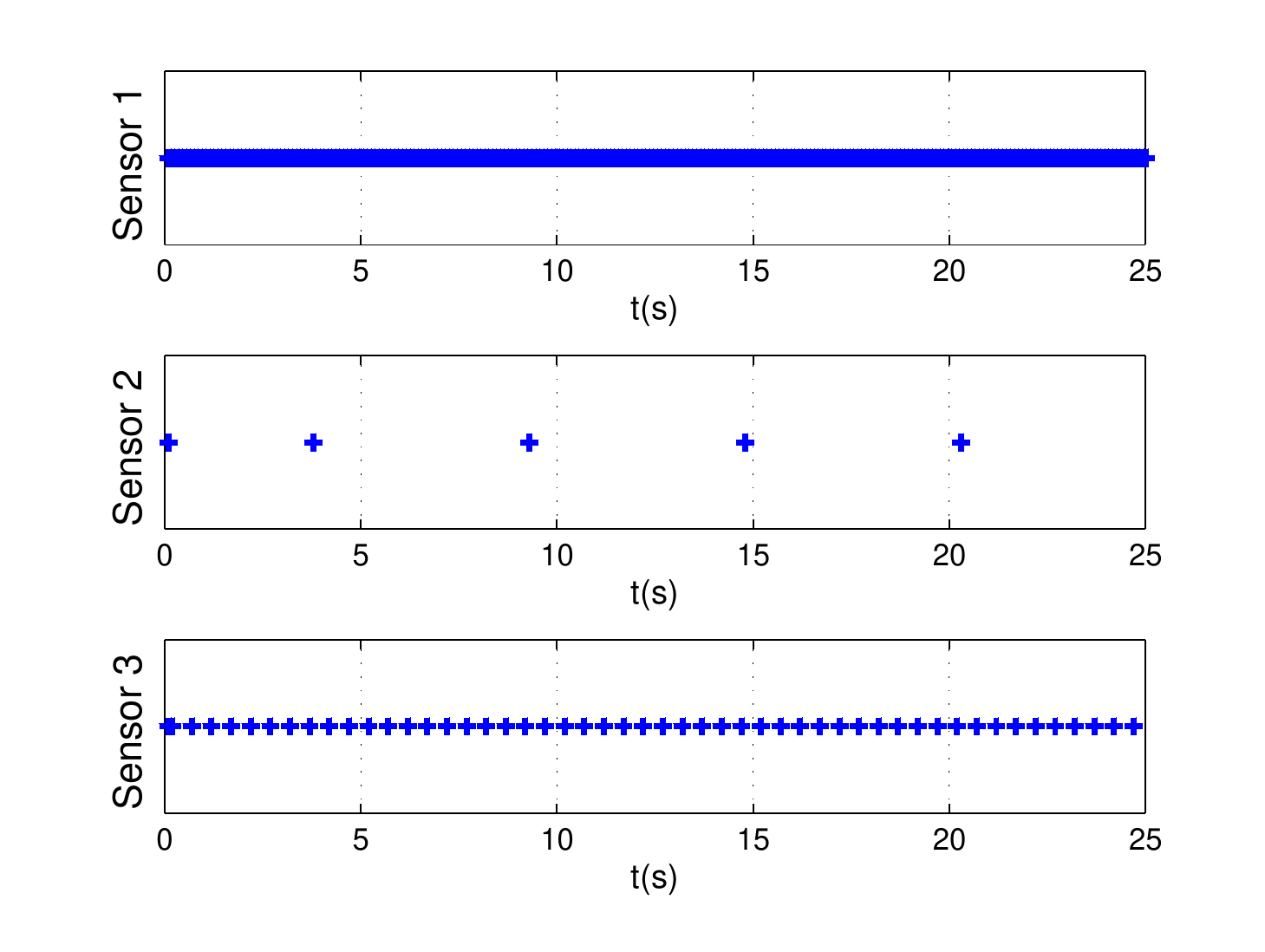}
		\caption{Triggering instants of agents}\label{triggering_instants}
	\end{figure}
	\begin{figure}[htp]
		\centering
		\includegraphics[width=0.3\textwidth]{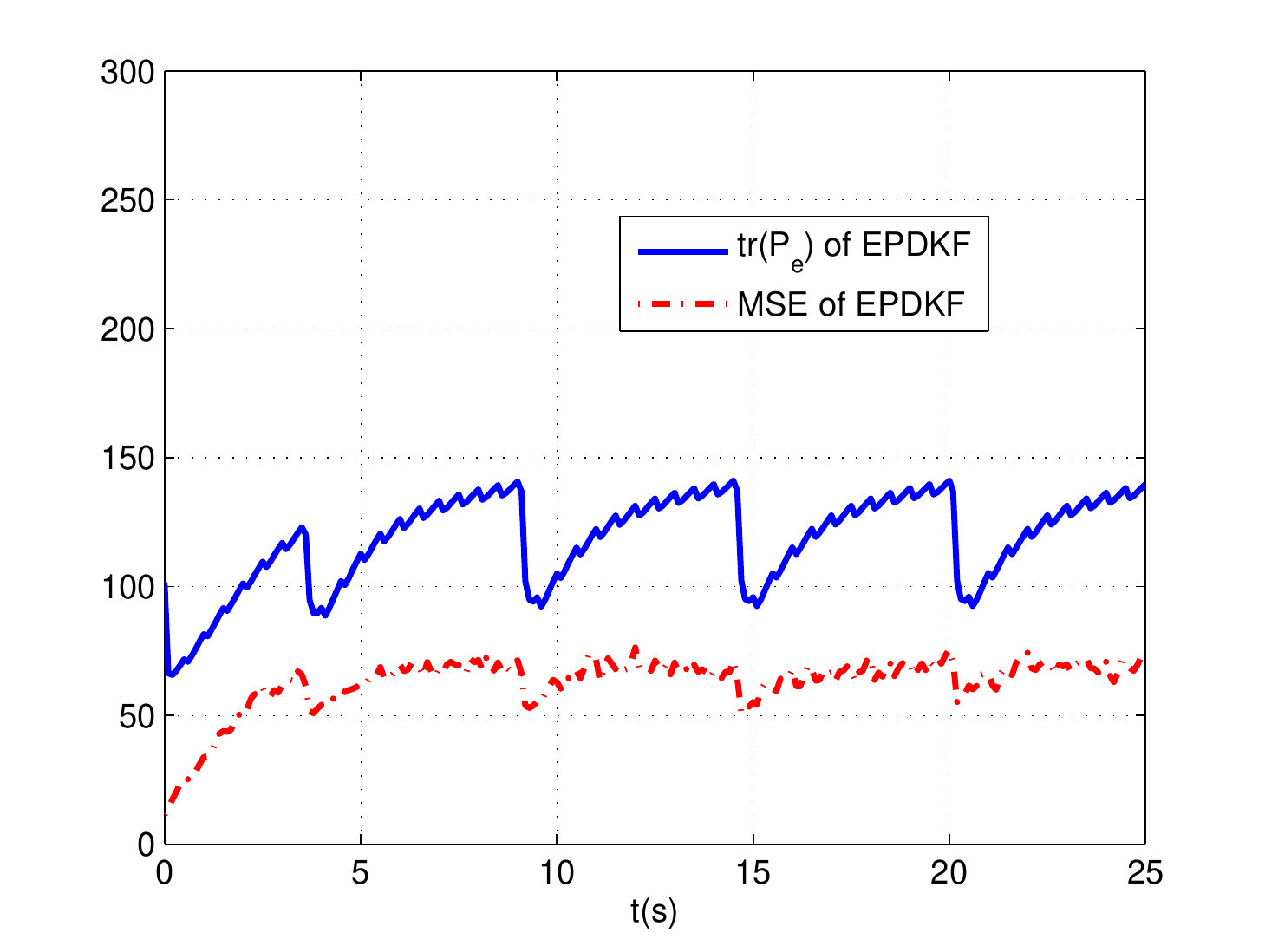}
		\caption{The estimation performance of EPDKF}\label{single_consitence}
	\end{figure}

	\begin{figure}[htp]
		\centering
		\includegraphics[width=0.3\textwidth]{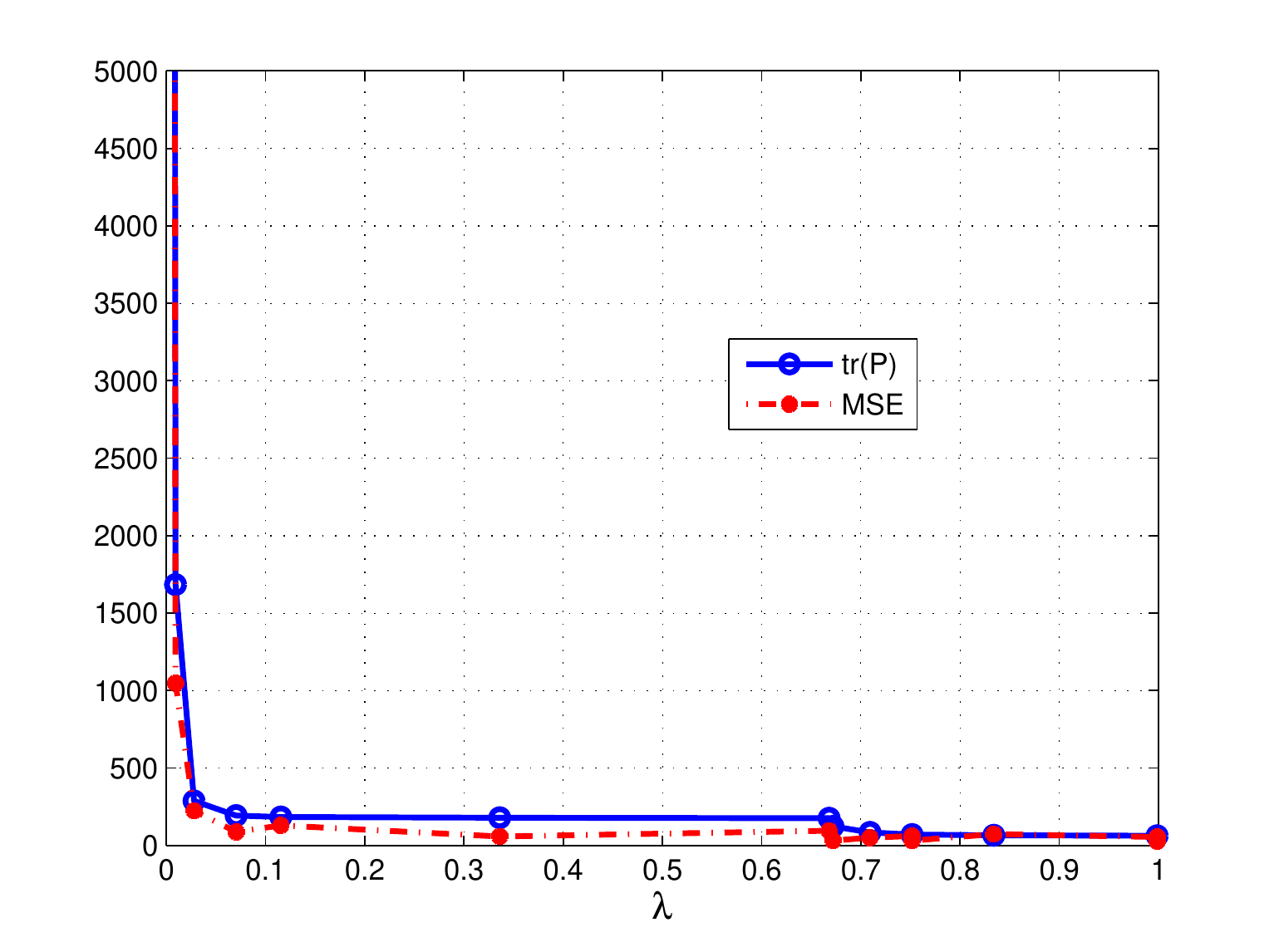}
		\caption{The relationships among tr(P), MSE and communication rate $\lambda$}\label{trace_P_lambda}
	\end{figure}

	\begin{table}[htp]
		\centering
		\caption{The relationship of $tr(P_e)$, $MSE_e$,   $\delta$ and $\lambda$}\label{table_compare}
		{
			\begin{tabular}{lccccc}  
				\hline
				$\quad\quad\delta_{i}=\delta\quad\quad\quad\quad\quad$ 2.00&0.97&0.57&0.42& 0.12&\\ \hline  
				trace($P_e$)$\quad\quad\qquad\quad$ 3.76e4&3.77e3&179.19&141.99&101.53 &   \\       
				\hline  
				trace($MSE_e$)$\qquad\quad$ 2.01e4&2.77e3&124.40&78.35&49.71 &   \\  \hline       
			\end{tabular}
		}
	\end{table}

	\subsection{Performance Evaluation: Case 2}
	
	In this subsection, we consider 20 agents for illustration of the overall performance, where the observation matrices of these agents are uniformly randomly selected from 	$H_{k,1}=\left(
	\begin{array}{cccc}
	1 & 0& 0 & 0
	\end{array}
	\right),H_{k,2}=\left(
	\begin{array}{cccc}
	0 & 0.3& 0 & 0
	\end{array}
	\right),H_{k,3}=\left(
	\begin{array}{cccc}
	0 & 1& 0 & 0
	\end{array}
	\right)$. 
	The network communication topology is illustrated in Fig. \ref{case221}. It is noted that for the system, regarding all the observation matrices listed below, the traditional collective observability condition is satisfied. The state constraints of the agents are uniformly randomly selected from 
	$	D_{k,1}=\left(
	\begin{smallmatrix}
	1 & -\tan\theta& 0 & 0\\
	0  &  0       &  1&-\tan\theta
	\end{smallmatrix}
	\right),
	D_{k,2}=\left(
	\begin{smallmatrix}
	0 & 0& 0 & 0
	\end{smallmatrix}
	\right),d_{k,i}=0,i=1,2$.
	\begin{figure}[htp]
		\centering
		\includegraphics[width=0.3\textwidth]{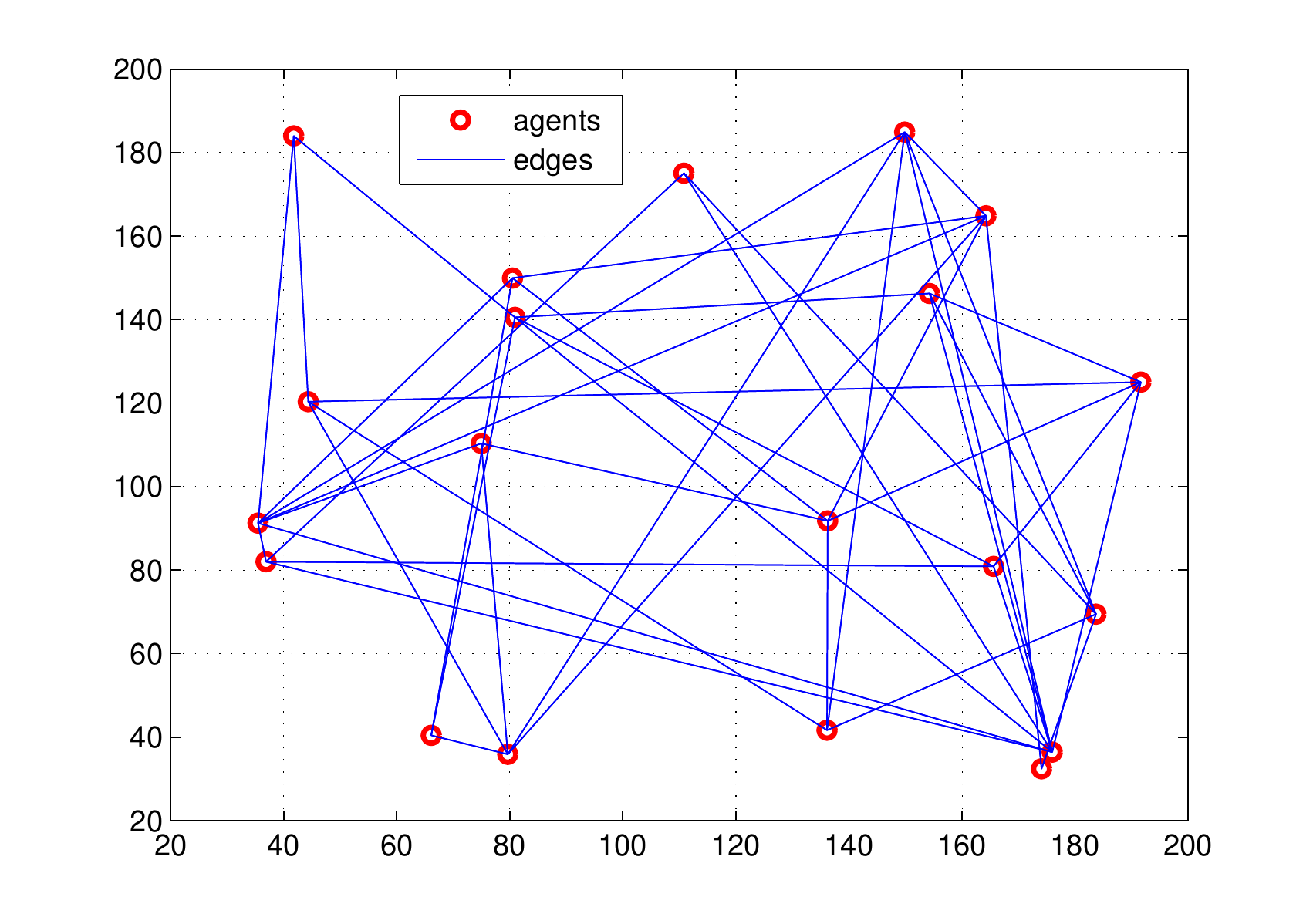}
		\caption{Communication topology of a network with 20 agents}
		\label{case221}
	\end{figure}
	\begin{figure}[htp]
		\centering
		\includegraphics[width=0.3\textwidth]{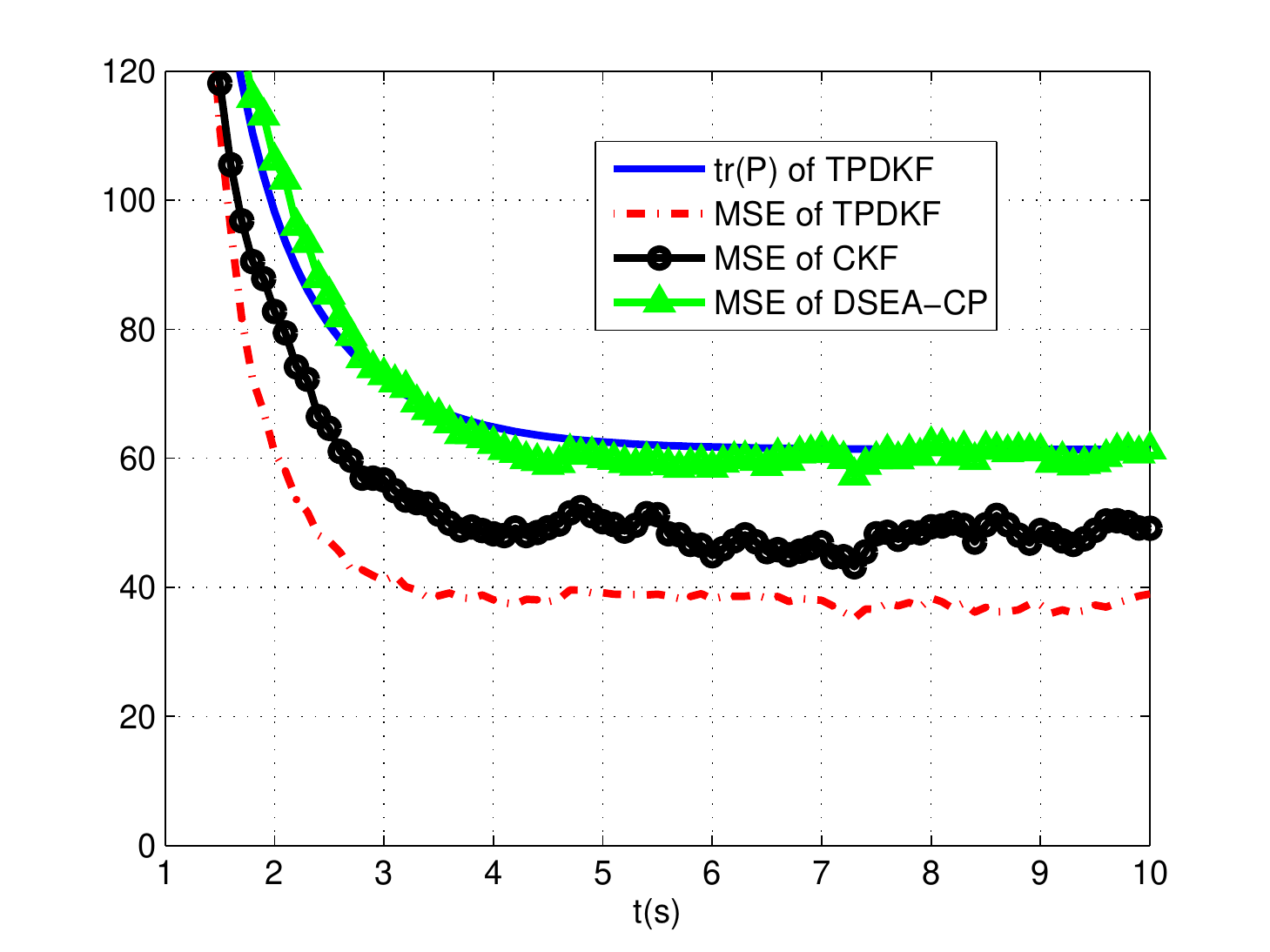}
		\caption{Performance comparison among TPDKF, CKF and DSEA-CP}
		\label{case211}
	\end{figure}	
			\begin{figure}[htp]
				\centering
				\includegraphics[width=0.3\textwidth]{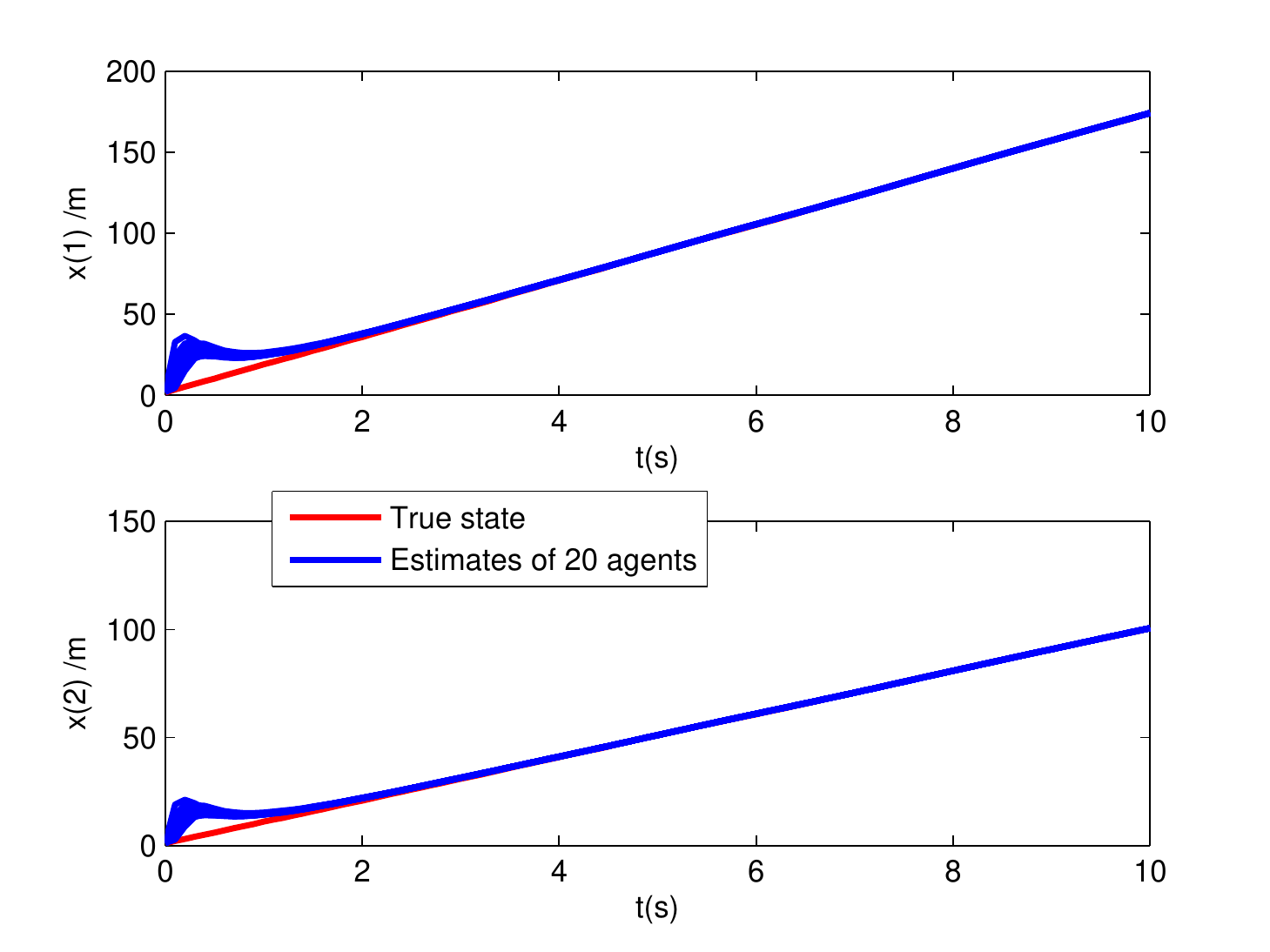}
				\caption{The trajectories of 20 agents}\label{states_agents}
			\end{figure}	
	In this scenario, the MSE comparison of TPDKF with $L=1$, CKF and DSEA-CP is given in Fig. \ref{case211}.
	According to Fig. \ref{case211}, we see that the  consistency of  TPDKF in Lemma \ref{thm_pro} is illustrated.
	Additionally, although the three algorithms are all stable, the proposed TPDKF has better estimation performance than
	DSEA-CP and even than CKF, which effectively shows that the state constraints in our algorithm can improve the estimation precision.
{	From Fig. \ref{states_agents}, it can be seen that the state estimates of 20 agents using the proposed TPDKF with different initial values can well track the stochastic dynamics of the system. }
	\section{Conclusions}
	In this paper, we investigated the problem of distributed estimation with SEC under time-based and event-triggered communication schemes, respectively.
	We proposed a time-driven distributed Kalman filter by combining a filtering structure and a fusion-projection
		operator. Then we provided several essential properties of the filter under some mild conditions of the system.
	Furthermore, we analyzed the influence of fusion-projection step to the filtering performance. 	
	Moreover, we proposed a   distributed event-triggered filter with SEC and  provided the design principle of the triggering thresholds.
More importantly, we analyzed the communication rate for the distributed event-triggered filter and  provided  methods to obtain the minimal (maximal) successive non-triggering (triggering) times. 
\section{Acknowledgment}
	The authors would like to thank the Associate Editor and
	anonymous reviewers for their very constructive comments,
	which greatly improved the quality of this work.
	
	\ifCLASSOPTIONcaptionsoff
	\newpage
	\fi

	\appendices
	\section{Proof of Lemma \ref{lem_singular}}\label{pf_lem_singular}
	Due to the form of $\hat P_{k,i}^{l+1}$ and $\epsilon_i>0$, it is easy to obtain $\hat P_{k,i}^{l+1}\leq \tilde P_{k,i}^{l+1}.$ 		
	Employing  matrix inverse formula yields 
	\begin{align}\label{eq_inverse2}
	(\tilde P_{k,i}^{l+1})^{-1}=(\check P_{k,i}^l)^{-1}+\frac{1}{\varepsilon_i}D_{k,i}^TD_{k,i}>0, \forall \varepsilon_i>0,
	\end{align}
	which means all  eigenvalues of $\tilde P_{k,i}^{l+1}$ are positive. Thus, $\tilde P_{k,i}^{l+1}$  is a positive definite matrix.	
	Also, it is straightforward to see $\tilde P_{k,i}^{l+1}\leq \check P_{k,i}^l, l=0,1,\cdots,L-1$.
	To prove the second conclusion, 
 we employ a matrix approximation method.
	Given any scalar $\eta>0$, we denote $\hat P_{k,i}^{l+1}(\eta)=\check P_{k,i}^l-\check P_{k,i}^lD_{k,i}^T(D_{k,i}\check P_{k,i}^lD_{k,i}^T+\eta I_{s_i})^{-1}D_{k,i}\check P_{k,i}^l.$
	It can be seen that $\hat P_{k,i}^{l+1}\leq\hat P_{k,i}^{l+1}(\eta)$. 
	Employing  matrix inverse formula yields $(\hat P_{k,i}^{l+1}(\eta))^{-1}=(\check P_{k,i}^l)^{-1}+\frac{1}{\eta}D_{k,i}^TD_{k,i}.$
	Since $D_{k,i}$ is of rank  $s_{k,i}$,  $D_{k,i}^TD_{k,i}$ is of rank   $s_{k,i}$. Let $\eta\longrightarrow 0^+$, then we have $\hat P_{k,i}^{l+1}\leq\lim\limits_{\eta\longrightarrow 0^+ }\hat P_{k,i}^{l+1}(\eta):=S_*$.
	It can be seen  that $S_*$ has $d_{k,i}$ eigenvalues of zero. Then, the conclusion of this lemma holds.	
	\section{Proof of Proposition \ref{thm_distri}}\label{pf_pro1}
	{
	According to TPDKF in Table \ref{ODKF2}, let $l\in[0:L-1]$, then 
	$\tilde e_{k,i}^{l+1}=\left[ I_n-\check P_{k,i}^lD_{k,i}^T(D_{k,i}\check P_{k,i}^lD_{k,i}^T)^{-}D_{k,i}\right] \check  e_{k,i}^l,$
	where the estimation  error of the local fusion  $\check e_{k,i}^l$ satisfies
	$\check e_{k,i}^l \check P_{k,i}^l\sum_{j\in \mathcal{N}_{i}}a_{i,j}(\tilde P_{k,j}^l)^{-1}\tilde e_{k,j}^l.$	
	The error of the measurement update $\tilde e_{k,i}=\tilde e_{k,i}^0$ can be derived through
	$\tilde e_{k,i}=(I-K_{k,i}H_{k,i})\bar e_{k,i}+K_{k,i}v_{k,i}.$
	Additionally, the prediction error follows from $\bar e_{k,i}=A_{k-1}e_{k-1,i}^L+\omega_{k-1}.$
	Since $\omega_{k-1},v_{k,i}$ and $e_{0,i}=\hat x_{0,i}-x_{0}$ are Gaussian, $e_{k,i}$ is also Gaussian by employing the inductive method and the property of Gaussian distribution.	
}

	\section{Proof of Lemma \ref{thm_pro}}\label{pf_th1}
	 Since the state estimate $\tilde x_{k,i}^{l+1}$ is a solution of the optimization problem with state equality constraints, it naturally  satisfies the SEC of agent $i$ at time $k$. Next, we use the inductive method to show the consistency of the pairs $(\tilde x_{k,i}^{l+1},P_{k,i}^{l+1})$ for $l\leq L-1$. 
		At the initial moment, given $\theta_i>0$,
	\begin{align*}
	&E\{(\hat x_{0,i}-x_{0})(\hat x_{0,i}-x_{0})^T\}\\
	\leq&\frac{1+\theta_i}{\theta_i}(\hat x_{0,i}-E\{x_{0})(\hat x_{0,i}-E\{x_{0}\})^T\\
	&+(1+\theta_i)E\{(x_{0}-E\{x_{0}\})(x_{0}-E\{x_{0}\})^T\}\\
	\leq&\frac{\theta_i+1}{\theta_i}P_{0,i}^*+(1+\theta_i)P_0\leq P_{0,i},
	\end{align*}
	where $(\hat x_{0,i}-E\{x_0\})(\hat x_{0,i}-E\{x_0\})^T\leq P_{0,i}^*$ and $E\{(x_{0}-E\{x_{0}\})(x_{0}-E\{x_{0}\})^T\}\leq P_{0}$.	
	Let us assume that, at  time $k-1$, $E\{(\hat x_{k-1,i}-x_{k-1})(\hat x_{k-1,i}-x_{k-1})^T\} \leq P_{k-1,i}.$
	For the prediction estimation error at time $k$,  $\bar e_{k,i}=A_{k-1}e_{k-1,i}+\omega_{k-1}.$
	As $E\{\omega_{k-1}\omega_{k-1}^T\}\leq Q_{k-1}$, one can obtain $E\{\bar e_{k,i}\bar e_{k,i}^T\}\leq  A_{k-1}E\{e_{k-1,i}e_{k-1,i}^T\}A_{k-1}^T+Q_{k-1}\leq \bar P_{k,i}$.
	Recall $\tilde e_{k,i}=(I-K_{k,i}H_{k,i})\bar e_{k,i}+K_{k,i}v_{k,i}.$
	Since $E\{v_{k,i}v_{k,i}^T\}\leq R_{k,i}$, it then follows that $E\{\tilde e_{k,i}(\tilde e_{k,i})^T\}\leq \tilde P_{k,i}$. 
	Due to the consistent estimation of CI strategy (\cite{Julier1997A}), one can obtain that $	E\{\check e_{k,i}^l(\check e_{k,i}^l)^T\}\leq \check P_{k,i}^l, l=0,1,\cdots,L-1.$	
	Note the estimation error satisfies $\tilde e_{k,i}^{l+1}=P_{k,i}^{D,l}\check  e_{k,i}^{l}$,
	where $P_{k,i}^{D,l}= I_n-\check P_{k,i}^lD_{k,i}^T(D_{k,i}\check P_{k,i}^lD_{k,i}^T)^{-}D_{k,i}$. Due to $E\{(\check e_{k,i}^l)(\check e_{k,i}^l)^T\}\leq \check P_{k,i}^l$ and the Moore-Penrose inverse property $A^-AA^-=A^-$,
we have
	\begin{equation*}\label{pf2_12}
	\begin{split}
	&	E\{\check e_{k,i}^{l+1}(\check e_{k,i}^{l+1})^T\}\\
	=&P_{k,i}^{D,l}E\{\check e_{k,i}^l(\check e_{k,i}^l)^T\}(P_{k,i}^{D,l} )^T,\\
	\leq &P_{k,i}^{D,l} \check P_{k,i}^l(P_{k,i}^{D,l} )^T,\\
	=&\check P_{k,i}^l-\check P_{k,i}^lD_{k,i}^T(D_{k,i}\check P_{k,i}^lD_{k,i}^T)^{-}D_{k,i}\check P_{k,i}^l\leq \tilde P_{k,i}^{l+1}.
	\end{split}
	\end{equation*}		
	Then $E\{e_{k,i}e_{k,i}^T\}=E\{\check e_{k,i}^{L}(\check e_{k,i}^{L})^T\} \leq \tilde P_{k,i}^{L}=P_{k,i}.$	
	\section{Proof of Lemma \ref{thm_pro2}}\label{pf_thm4}
	{
The proof  of Lemma \ref{thm_pro2} has similar steps as the proof of Lemma \ref{thm_pro} and the proof of Lemma \ref{lem_singular}, hence we only consider the fusion part for convenience.
	Considering the event triggered scheme, agent $i$ can obtain the prediction error of agent $j,j\in\mathcal{N}_i$, based on the received latest $\tilde e_{t,j}$, then  $\tilde e_{k,j}^t=A^{k-t}\tilde e_{t,j}-\sum_{l=t}^{k-1}A^{k-1-l}w_{l},t\leq k.$
	Then considering $E\{\tilde e_{t,j}w_{l}^T\}=0,l\geq t$, 
	\begin{align*}
	&E\{\tilde e_{k,j}^t(\tilde e_{k,j}^t)^T\}\nonumber\\
	\leq &A^{k-t}E\{\tilde e_{t,j}\tilde e_{t,j}^T\}(A^{k-t})^T+\sum_{l=t}^{k-1}A^{k-1-l}Q(A^{k-1-l})^T\nonumber\\\
	\leq& A^{k-t}\tilde P_{t,j}(A^{k-t})^T+\sum_{l=t}^{k-1}A^{k-1-l}Q(A^{k-1-l})^T=\check P_{k,j}^t.
	\end{align*}

	For the proposed EPDKF in Table \ref{ODKF3}, 	
	\begin{equation*}
	\begin{split}
	\check e_{k,i}=\check P_{k,i}\bigg(a_{i,i}\tilde P_{k,i}^{-1}\tilde e_{k,i}+\sum_{j\in \mathcal{N}_{i}^0}a_{i,j} (\tilde P_{k,j}^t)^{-1}\tilde e_{k,j}^t\bigg).
	\end{split}
	\end{equation*}	
According to the consistent estimation of CI strategy (\cite{Julier1997A}), we have $	E\{\check e_{k,i}\check e_{k,i}^T\}\leq P_{k,i}.$
}
\section{Proof of Lemma \ref{lem_trigger}}\label{pf_lemma_tri1}
	If the event in (\ref{estimates_lattest}) is always triggered in the interval $[t_0:t_1]$, then
	\begin{align}\label{eq_tri}
	\lambda_{max}\big(\tilde P_{k,i}^{-1}-\bar{\tilde{P}}_{k,i}^{-1}\big)-\delta>0, k\in [t_0:t_1].
	\end{align}
	Next, we will find a necessary condition of (\ref{eq_tri}). Specifically, we aim to construct an uniformly upper bound of $\tilde P_{k,i}^{-1}$ and an uniformly lower bound of $\bar{\tilde{P}}_{k,i}^{-1}$. Then, we construct a necessary condition of (\ref{eq_tri}) based on the bounds.
	
	Since the time-based algorithm in Table \ref{ODKF2}	with consensus step one has successively triggering times,
	it is easy to conclude that the $\tilde P_{k,i}^{-1}$ matrix of the time-based algorithm is no smaller than that of the event-triggered algorithm, then
	we will analyze the upper bound of $\tilde P_{k,i}^{-1}$ matrix for the time-based algorithm in the following.
	For $\tilde P_{k,i}^{-1}$, $k\in [t_0:t_1]$,
	according to Lemma 1 in \cite{Battistelli2014Kullback}, there exists a scalar $ \bar \beta \in(0,1) $, such that 
	\begin{align}\label{proof_rate1}
	\tilde P_{k,i}^{-1}&=\bar P_{k,i}^{-1} +H_{i}^TR_{i}^{-1}H_{i}\nonumber\\
	&\leq\bar \beta A^{-T}P_{k-1,i}^{-1}A^{-1}+H_{i}^TR_{i}^{-1}H_{i}\nonumber\\
	&=\bar\beta A^{-T}\sum_{j\in \mathcal{N}_{i}}a_{i,j}\tilde P_{k-1,j}^{-1}A^{-1}+H_{i}^TR_{i}^{-1}H_{i}\\
	&\quad+\frac{\bar\beta}{\varepsilon_i } A^{-T}D_{i}^TD_{i}A^{-1}\nonumber.
	\end{align}
	Note $\bar P_{s,j}^{-1}\leq Q^{-1},\forall s\in\mathbb{N},j\in\mathcal{V}$.
	Employing (\ref{proof_rate1}) for $t\in [0:k-t_0]$ times  yields 
	\begin{align}\label{eq_larger}
	\tilde P_{k,i}^{-1}&\leq 
	f_{t,i} (A,H,R,D,\mathcal{A},\Upsilon,Q,\bar\beta),
	\end{align}
	where $f_{t,i} (A,H,R,D,\mathcal{A},\Upsilon,Q,\bar\beta), i\in\mathcal{V}$,  is a constant matrix derived based on predefined system matrices or vectors $A,H,R,D,\mathcal{A},\Upsilon,Q,\bar\beta$, and  the predefined integer $t$.
	
	Next, for the event-triggered algorithm with predefined $\delta$, we will find the uniformly lower bound of $\bar{\tilde{P}}_{k,i}^{-1}$, $k\in [t_0:t_1]$. In light of (\ref{proof_stability213}), 
	\begin{align}\label{pf_rate}
	\bar{\tilde{P}}_{k,i}^{-1}&=(A\tilde P_{k-1,i}A^T+Q)^{-1}\\
	&\geq  \beta A^{-T}(\bar P_{k-1,i}^{-1}+H_{i}^TR_{i}^{-1}H_{i})A^{-1}\nonumber\\
	&\geq  \beta^2 (A^{-2})^{T} P_{k-2,i}^{-1}A^{-2}+ \beta A^{-T}H_{i}^TR_{i}^{-1}H_{i}A^{-1}\nonumber\\
	&\geq \beta^2 (A^{-2})^{T} \sum_{j\in \mathcal{N}_{i}}a_{i,j}(\bar P_{k-2,j}^{-1}+H_{j}^TR_{j}^{-1}H_{j}-\delta I_n)A^{-2}\nonumber\\
	&\quad+ \beta A^{-T}H_{i}^TR_{i}^{-1}H_{i}A^{-1}+ \frac{\beta^2}{\varepsilon_i }(A^{-2})^{T}D_{i}^TD_{i}A^{-2}\nonumber.
	\end{align}
	Noting $\tilde P_{t_0}^{-1}\geq 0$, 
	deriving (\ref{pf_rate}) for $t\in [0:k-t_0]$  times yields 
	\begin{align}\label{eq_lower}
	\bar{\tilde{P}}_{k,i}^{-1}\geq z_{ t,i}(A,H,R,D,\mathcal{A},\Upsilon, \beta,\delta).
	\end{align}
	where $z_{ t,i}(\cdot )$ is similarly defined as $f_{ t,i}(\cdot)$.
	Let
	\begin{align}\label{eq_ferror}
	\bar f(t,\delta,i)\triangleq &\lambda_{max}\{f_{t,i}(\cdot)-eig_{pos}(z_{t,i}(\cdot))\}-\delta.
	\end{align}
	where $f_{t,i}(\cdot)\triangleq f_{t,i} (A,H,R,D,\mathcal{A},\Upsilon,Q,\bar\beta)$, and $eig_{pos}(z_{t,i}(\cdot))$ generates the diagonal matrix consisting of zeros and positive eigenvalues of $z_{ t,i}(A,H,R,D,\mathcal{A},\Upsilon, \beta,\delta)$, subject to $eig_{pos}(\cdot)\geq z_{ t,i}(\cdot).$
	Then a necessary condition of (\ref{eq_tri}) is 
	$\bar f(t,\delta,i)>0,$ $\forall t\in  [0:t_1-t_0]$.
\section{Proof of Lemma \ref{lem_trigger2}}\label{pf_lemma_tri2}
		If the event in (\ref{estimates_lattest}) is successively not triggered in the interval $[t_2:t_3]$, then
		\begin{align}\label{eq_tri2}
		\lambda_{max}(\tilde P_{k,i}^{-1}-\bar{\tilde{P}}_{k,i}^{-1})\leq \delta , \forall k\in [t_2:t_3].
		\end{align}
		Next, we will find a sufficient condition of (\ref{eq_tri2}). Specifically, we aim to construct an uniformly upper bound of $\tilde P_{k,i}^{-1}$ and an uniformly lower bound of $\bar{\tilde{P}}_{k,i}^{-1}$. Then, we construct a sufficient condition of (\ref{eq_tri2}) based on the bounds.
		
		Similar to the proof method of Lemma \ref{lem_trigger}, for $\tilde P_{k,i}^{-1}$, $k\in [t_2:t_2+t]\subseteq [t_2:t_3]$, there exists a matrix $f_{t,i} (A,H,R,D,\mathcal{A},\Upsilon,Q,\bar\beta),$ such that $\tilde P_{k,i}^{-1}\leq 	f_{t,i} (A,H,R,D,\mathcal{A},\Upsilon,Q,\bar\beta).$ 
		For the prediction matrices $\bar{\tilde{P}}_{k,i}^{-1}$, $  k\in [t_2:t_2+t]\subseteq [t_2:t_3]$. Define an operator $h(\cdot)$ as $h(X)=AXA^T+Q$, then
		\begin{align*}
		\bar{\tilde{P}}_{k,i}^{-1}&=\left(h^{t+1}(\tilde{P}_{t_2-1,i})\right)^{-1}\\
		&\geq \beta^{t+1} (A^{-t-1})^T\tilde{P}_{t_2-1,i}^{-1}A^{-t-1}\\
		&\geq \beta^{t+1} (A^{-t-1})^T(\bar P_{t_2-1,i}^{-1}+H_{i}^TR_{i}^{-1}H_{i})A^{-t-1},\\
		&\geq \beta^{t+1} (A^{-t-1})^TH_{i}^TR_{i}^{-1}H_{i}A^{-t-1},\\
		&\triangleq l_{t}(A,H,R,\beta),
		\end{align*}
		where the last inequality is obtained due to $\bar P_{t_2-1,i}^{-1}\geq 0.$
		Let
		\begin{align}\label{eq_ferror2}
		\bar g(\delta,t,i)\triangleq &\lambda_{max}\{f_{t,i}(\cdot)-l_{t}(A,H,R,\beta)\}-\delta.
		\end{align}
		where $f_{t,i}(\cdot)\triangleq f_{t,i} (A,H,R,D,\mathcal{A},\Upsilon,Q,\bar\beta)$.
		Then a sufficient condition of triggering condition (\ref{eq_tri2}) is 
		$\bar g(\delta,t,i)\leq 0,$ $\forall t\in  [0:t_3-t_2]$.

	%
	%
	\small
%
		\bibliographystyle{IEEEtran}
	\bibliography{IEEEabrv,references_filtering}	

\end{document}